\documentclass[submission,copyright,creativecommons]{eptcs}
\providecommand{\event}{AiML 2026} 

\usepackage{aiml26}
\usepackage{iftex}
\usepackage{amsmath,mathrsfs}
\usepackage{amssymb}
\usepackage{amscd}
\usepackage{proof}
\usepackage{color}
\usepackage{enumerate}
\usepackage{multicol}
\usepackage{url}
\usepackage[all]{xy}
\usepackage[colorinlistoftodos,prependcaption,textsize=tiny]{todonotes}
\usepackage[shortlabels,inline]{enumitem}
\usepackage[strings]{underscore}
\usepackage[T1]{fontenc}

\newcommand{\mb}[1]{\mathbf{#1}}
\newcommand{\ms}[1]{\mathsf{#1}}
\newcommand{\mt}[1]{\mathtt{#1}}
\newcommand{\Ge}[1]{\mathsf{G}(#1)}
\newcommand{\Hi}[1]{\mathsf{H}(#1)}
\newcommand{\Sub}{\mathsf{Sub}}
\newcommand{\inset}[2]{\left\{\, {#1} \,|\, {#2} \,\right\}}
\newcommand{\setof}[1]{\left\{ {#1} \right\}}

\newcommand{\A}{\mathsf{A}}
\newcommand{\D}{\mathcal{D}}

\newcommand{\Prop}[1]{\mathsf{Prop}(#1)}
\newcommand{\Agt}[1]{\mathsf{Agt}(#1)}
\newcommand{\X}[1]{\mathsf{X}(#1)}

\newcommand{\Part}[2]{\langle (#1);(#2) \rangle}

\newtheorem{claim2}{\textbf{Claim}}

\title{Analytic Cut in Epistemic Logics with Distributed Knowledge}
\author{Ryo Murai
\institute{Independent Researcher}
\email{ryo.murai11@gmail.com}
\and
Sizhuo Liu
\institute{Graduate School of Humanities and Human Sciences\\
Hokkaido University\\
Hokkaido, Japan}
\email{liuliusizhuo@outlook.com}
\and
Katsuhiko Sano
\institute{Faculty of Humanities and Human Sciences\\
Hokkaido University\\
Hokkaido, Japan}
\email{v-sano@let.hokudai.ac.jp}
}

\newcommand{\titlerunning}{Analytic Cut in Epistemic Logics with Distributed Knowledge}
\newcommand{\authorrunning}{R. Murai \& S. Liu \& K. Sano}

\hypersetup{
  bookmarksnumbered,
  pdftitle    = {\titlerunning},
  pdfauthor   = {\authorrunning},
  pdfsubject  = {EPTCS},               
  pdfkeywords = {Sequent Calculus, Distributed Knowledge} 
}

\begin{document}
\maketitle

\begin{abstract}
Distributed knowledge is a notion of group knowledge studied in multi-agent epistemic logic. Semantically, the distributed knowledge of a group is interpreted via an accessibility relation given by the intersection of the epistemic accessibility relations of the agents in that group. This paper investigates sequent calculi for epistemic logics of distributed knowledge based on $\mathbf{K45}$, $\mathbf{KD45}$, and $\mathbf{S5}$. While cut elimination holds in existing sequent calculi for modal logics $\mathbf{K45}$ and $\mathbf{KD45}$, it fails in all the systems mentioned above. Instead, we establish the analytic cut property for all three systems by adapting Takano’s (2018) strategy, which restricts the cut formulas to the set of subformulas of the conclusion of the cut rule. As a corollary, the Craig interpolation theorem holds for all logics considered. We also show that all proof-theoretic results remain valid when the empty group is allowed for the distributed-knowledge operator, in which case the distributed knowledge for the empty group is interpreted as the global modality.
\end{abstract}

\section{Introduction}
Distributed knowledge is a notion of group knowledge in multi-agent epistemic logic~\cite{Fagin_2003_Reasoning} whose interpretation has varied in the literature. In this paper we follow the interpretation introduced in~\cite{Agotnes_2017_resolving}, i.e., a group $G$ is said to have distributed knowledge of $\varphi$ if an {\em external} observer with access to the epistemic states of all members of $G$ can know that $\varphi$. In Kripke semantics, while individual knowledge (notation: $\Box_{a}\varphi$, read as ``agent $a$ knows that $\varphi$") is interpreted via an accessibility relation $R_{a}$ for each agent $a$, distributed knowledge for a group $G$ is interpreted via the intersection $\bigcap_{a \in G} R_a$ of the epistemic accessibility relations of the agents in $G$. Accordingly, $D_G \varphi$ holds at a state $w$ if and only if $\varphi$ holds at all states $v$ such that $(w,v) \in \bigcap_{a \in G} R_a$. Reflexivity, transitivity, Euclideanness, and symmetry are preserved when passing from the individual level to the group level, i.e., under intersection of epistemic accessibility relations, whereas seriality is {\em  not} preserved (see~\cite[Lemma 3.2]{Agotnes_2021_group}).

{\em Sequent calculus}, introduced by Gentzen in the 1930s, is a proof-theoretic framework where derivations are formulated as sequents $\Gamma \Rightarrow \Delta$ (read as ``if all formulas in $\Gamma$ hold, then some formula in $\Delta$ holds"), where $\Gamma$ and $\Delta$ are finite multisets of formulas~\cite{Gentzen_1964_investigations,Gentzen_1965_investigations}. 
A crucial aspect of sequent calculus is the cut rule
\[
\infer[(Cut)]{\Gamma, \Pi \Rightarrow \Delta, \Sigma}{
\Gamma \Rightarrow \Delta, \varphi
&
\varphi, \Pi \Rightarrow \Sigma
},
\]
and the associated cut elimination theorem (Gentzen’s {\em Hauptsatz}), which states that every derivable sequent admits a cut-free derivation. As emphasized in~\cite{troelstra_2000_basic}, cut elimination is closely connected to proof normalization and the structural analysis of derivations.

For modal logics, Gentzen-style sequent calculi $\mathsf{G}(\mb{K45})$ and $\mathsf{G}(\mb{KD45})$ for $\mb{K45}$ and $\mb{KD45}$, respectively, were shown to be cut-free in~\cite[Theorem~2]{Shvarts_1989_gentzen}. However, as shown in~\cite[p.~116]{Ohnishi_1959_gentzen}, cut elimination fails for the sequent calculus $\mathsf{G}(\mb{S5})$ corresponding to modal logic $\mb{S5}$. As an alternative, Takano~\cite{Takano_1992_subformula} showed that every application of the cut rule in $\mathsf{G}(\mb{S5})$ can be transformed into an {\em analytic} cut, i.e., a cut rule in which the cut formula is a subformula of the conclusion. Moreover, Takano later provided a semantic analysis of sequent calculi for modal logics in~\cite{Takano_2018_semantical}, where cut-free sequent calculi such as $\mathsf{G}(\mb{K45})$ and $\mathsf{G}(\mb{KD45})$ were semantically shown to enjoy cut elimination, while sequent calculi that lack cut elimination such as $\mathsf{G}(\mb{S5})$ were shown to satisfy the extended subformula property. These results are established by proving the semantic completeness of sequent calculi where the cut rule is removed or restricted to an analytic cut. A key ingredient of the semantic proof is the notion of $\Xi$-{\em partial valuation}, i.e., a subformula-closed finite set of formulas, which originates from~\cite{Takano_2001_modified}, that was later applied in~\cite{Maruyama_2003_temporal,Maruyama_2003_combined},~ \cite{Takano_2018_semantical},~\cite{Ono_2022_analytic} and~\cite{Udatsu_2025_craig} to establish cut elimination or the analytic cut property for the sequent calculi for temporal epistemic logics, modal logics, bi-intuitionistic tense logic and awareness logic, respectively.

Model-theoretic investigations of distributed knowledge have been extensively documented in, for example,~\cite{Gerbrandy_1999_bisimulations,Roelofsen_2007_distributed,Agotnes_2017_resolving,Agotnes_2021_group}, whereas proof-theoretic studies, namely sequent calculi for logics with distributed knowledge, remain comparatively limited. A labelled $G3$-style sequent calculus, i.e., a calculus without structural rules in which each formula is equipped with a label, was introduced in~\cite{Hakli_2008_proofa}. A Gentzen-style sequent calculus based on $\mb{S4}$ was proposed in~\cite{Pliuskevicius_2008_termination}, in which the distributed knowledge operator is {\em not} parameterized by a group $G$. Gentzen-style and Kanger-style sequent calculi for $\mb{S5}$ were later studied in~\cite{Giedra_2010_cuta} in the same fashion. Label-free Gentzen-style sequent calculi for multi-agent epistemic propositional logics $\mb{K}_{D}$, $\mb{KD}_{D}$, $\mb{K4}_{D}$, $\mb{S4}_{D}$, and $\mb{S5}_{D}$ with group-parameterized distributed knowledge operators were introduced in~\cite{Murai_2020_craig}. Cut elimination was established for these systems except for $\Ge{\mb{S5}_{D}}$, which is the sequent calculus for $\mb{S5}_{D}$. Moreover, the Craig interpolation theorem was obtained for the four systems via a method based on~\cite{Maehara_1961_craig}. Furthermore, although groups are usually defined as {\em non-empty}, the operator $D_{\emptyset}$ corresponds naturally to the global modality when the empty set is allowed as a group. 

A sequent calculus $\Ge{\mb{S5}_{D}}$ for $\mb{S5}_{D}$ was introduced in~\cite{Murai_2020_craig,Subformula_2023_Murai}. This paper extends that line of research by introducing sequent calculi $\Ge{\mb{K45}_{D}}$ and $\Ge{\mb{KD45}_{D}}$ for $\mb{K45}_{D}$ and $\mb{KD45}_{D}$, respectively, while also revisiting $\Ge{\mb{S5}_{D}}$. Surprisingly, although cut elimination holds for $\mathsf{G}(\mb{K45})$ and $\mathsf{G}(\mb{KD45})$, it {\em fails} not only for $\Ge{\mb{S5}_{D}}$, but {\em also} for $\Ge{\mb{K45}_{D}}$ and $\Ge{\mb{KD45}_{D}}$. Therefore, by adapting Takano’s semantic approach~\cite{Takano_2018_semantical}, we establish the analytic cut property for all three systems. As a consequence, the Craig interpolation theorem is obtained via the method in~\cite{Maehara_1961_craig}. We note that conditions on {\em both} propositional variables and {\em agents} are taken into account, which is a relatively new result for logics for distributed knowledge. To the best of the authors' knowledge, such conditions have recently been considered only in~\cite{Murai_2020_craig,Umemura_2025_craig} and~\cite{Bilkova_2026_agent}. Although the Craig interpolation theorem in this paper has been established in~\cite{Umemura_2025_craig}, the proof is model-theoretic rather than proof-theoretic as in~\cite{Maehara_1961_craig}. In fact, a proof-theoretic result was presented earlier by the first and third authors in~\cite{Subformula_2023_Murai}. As emphasized in~\cite{Murai_2020_craig}, this result suggests that the logics with distributed knowledge are ``good” expansions of basic modal logic, since the Craig interpolation theorem {\em does not} hold for some expansions of basic modal logic~\cite{TenCate_2005_interpolation}. Moreover, we show that when the empty group is admitted, the operator $D_{\emptyset}$ corresponds naturally to the global modality, and that the main proof-theoretic results are preserved under this extension.

The paper is organized as follows. Section~\ref{sec:preliminary} introduces the syntax and semantics of distributed epistemic logics based on $\mb{K45}$, $\mb{KD45}$, and $\mb{S5}$ together with their Hilbert systems and pseudo-model semantics. Section~\ref{sec:sequent calculi} presents the sequent calculi $\mathsf{G}(\mb{K45}_{D})$, $\mathsf{G}(\mb{KD45}_{D})$, and $\mathsf{G}(\mb{S5}_{D})$. Section~\ref{sec:a-cut D} first demonstrates that cut elimination fails for these systems, then establishes the analytic cut property, while the Craig interpolation theorem is proved in Section~\ref{sec:CIT}. In Section~\ref{sec:global modality}, we extend the framework by introducing the global modality corresponding to distributed knowledge of the empty group and show that the analytic cut property and interpolation results are preserved under this extension. Finally, Section~\ref{sec:conclusion} concludes the paper. 

\section{Preliminaries}
\label{sec:preliminary}
Let $\mathsf{Prop}$ be a countably infinite set of propositional variables and $\mathsf{Ag}$ a non-empty finite set of agents. A {\em group} is a {\em non-empty} set $G \subseteq \mathsf{Ag}$ o f agents and the set of all groups is denoted by 
\[\mathsf{Grp}:=\inset{G\in\mathcal{P}(\mathsf{Ag})}{G\ne\emptyset}.
\] 
The syntax $L$ for distributed epistemic logic is defined inductively as follows:
\[
L \ni \varphi :: =  p\mid\bot\mid\neg\varphi \mid \varphi \to \psi \mid D_{G}\varphi,
\]
\noindent where $p \in \mathsf{Prop}$ and $G \in\mathsf{Grp}$. Boolean connectives $\land,\lor,\leftrightarrow$ and the truth constant $\top$ are defined in the usual manner. We also define the operator $\Box_{a}\varphi$ for individual knowledge or belief as $\Box_{a}\varphi:= D_{\{a\}}\varphi$. For a finite set $\Delta$, $\bigwedge \Delta$ and $\bigvee \Delta$ are the conjunction or disjunction of all formulas in $\Delta$ (when $\Delta$ is empty, $\bigwedge \Delta$ := $\top$ and $\bigvee \Delta$ := $\bot$). As noted above, an expression of the form $D_{\emptyset}\varphi$ is not a well-formed formula, since we have excluded $\emptyset$ from our definition of groups.

We move on to the semantics. A {\em frame} $F$ is a tuple $(W, (R_{a})_{a \in \mathsf{Ag}})$ where $W$ is a non-empty set of states and $R_{a} \subseteq W \times W$ is a binary relation on $W$ for each $a \in \mathsf{Ag}$. A {\em model} $M$ is a tuple $(W, (R_{a})_{a \in \mathsf{Ag}}, V)$ where $(W, (R_{a})_{a \in \mathsf{Ag}})$ is a frame and $V:\mathsf{Prop}\to\mathcal{P}(W)$ is a valuation function. Given a model $M = (W, (R_{a})_{a \in \mathsf{Ag}}, V)$ and a state $w \in W$, the notion of {\em $\varphi$ being true} at $w$ in $M$ (notation: $M, w \models \varphi$) is defined inductively as follows:
\begin{alignat*}{3}
    &M, w \models p  &\quad& \text{iff} &\quad &w \in V(p),\\
    &M, w \models \bot  &\quad&&\quad& \text{Never},\\
    &M, w \models \neg \varphi  &&\text{iff}&&M, w \not\models \varphi,\\
    &M, w \models \varphi \to \psi&&\text{iff}&&M, w \not\models \varphi\ \text{or}\ M,w\models \psi,\\
    &M, w \models D_{G}\varphi&&\text{iff}&&\text{for all}\ v \in W, (w,v)\in{\bigcap}_{a\in G}R_{a} \text{ implies } M, v \models \varphi.
\end{alignat*}

\noindent A pair $(M,w)$ of a model $M$ and a state $w$ in $M$ is called a {\em pointed model}. A formula $\varphi$ is {\em valid} in a model $M$ (notation: $M \models \varphi$) if $M,w \models \varphi$ for all $w \in W$. A formula $\varphi$ is {\em valid} in a frame $F$  (notation: $F \models \varphi$)  if $(F, V) \models \varphi$ for all valuations $V$. Given a class $\mathbb{F}$ of frames, $\varphi$ is {\em valid} in $\mathbb{F}$ if $F \models \varphi$ for every frame $F \in \mathbb{F}$. 

To capture various aspects of individual knowledge operators, we consider a class of frames $\mathbb{F}$ in which all accessibility relations $R_{a}$ for $a \in \mathsf{Ag}$ satisfy the same set of frame properties, defined as follows.
\begin{definition}
\label{dfn:frames-D}
We define $\mathbb{F}_{\mathbf{K45}}$ to be the class of all frames such that each $R_{a}$ is transitive and Euclidean, $\mathbb{F}_{\mathbf{KD45}}$ to be the class of all frames such that each $R_{a}$ is transitive, Euclidean and serial ($\forall w\in W\exists v(wR_{a}v)$), and $\mathbb{F}_{\mathbf{S5}}$ to be the class of all frames such that each $R_{a}$ is reflexive, transitive, and Euclidean (equivalently, symmetric). 
We define the distributed epistemic logics $\mb{K45}_{D}$, $\mb{KD45}_{D}$ and $\mb{S5}_{D}$ to be the sets of all $L$-formulas that are valid in the class of $\mathbb{F}_{\mathbf{K45}}$, $\mathbb{F}_{\mathbf{KD45}}$ and $\mathbb{F}_{\mathbf{S5}}$, respectively.
\end{definition}

\noindent  As proved in~\cite[Lemma 3.2]{Agotnes_2021_group}, reflexivity, transitivity, Euclideanness and symmetry are preserved under taking the intersection ${\bigcap}_{a\in G}R_{a}$ of binary relations\footnote{However, seriality is {\em  not} preserved. As shown in~\cite[Lemma 3.2]{Agotnes_2021_group}, we can construct a $\mb{KD45}$ model in which $R_{a}$ and $R_{b}$ are serial while $R_{a}\cap R_{b}$ is not.
Thus, as discussed in~\cite{Agotnes_2021_group}, the axiom $\neg D_{G} \bot$ in~\cite{Fagin_1995_reasoning} is not valid, hence the original axiomatization of $\mb{KD45}_{D}$ in~\cite{Fagin_1995_reasoning} is unsound. This axiomatization was subsequently corrected in~\cite{Fagin_2003_Reasoning}, where completeness is reported, although without a detailed proof.
}. It is well-known that all logics in Definition \ref{dfn:frames-D} can be axiomatized (see~\cite{Fagin_2003_Reasoning}).

\begin{table}[htbp]
    \centering
    \caption{Hilbert system $\Hi{\mb{K}_{D}}$, $\Hi{\mb{K45}_{D}}$, $\Hi{\mb{KD45}_{D}}$ and $\Hi{\mb{S5}_{D}}$}
    \label{table:hilbert-systems-D}
    \begin{tabular}{|c|c|}
        \hline
        \multicolumn{2}{|c|}{All the Axioms and Rules of $\Hi{\mb{K}_{D}}$}\\
        \hline
        ($\mathtt{Taut}$) &  all instances of propositional tautologies \\
        ($\mathtt{Incl}$) &  $D_{G}\varphi\to D_{H}\varphi$ ($G\subseteq H$) \\
        ($\mathtt{K}$) & $D_{G}(\varphi \to \psi) \to (D_{G} \varphi \to D_{G} \psi)$ \\
        ($\mathtt{MP}$) & From $\varphi \to \psi$ and $\varphi$ infer $\psi$\\
        ($\mathtt{Nec}$) & From $\varphi$ infer $D_{G}\varphi$\\
        \hline
    \end{tabular}
    \quad
    \begin{tabular}{|c|c|}
        \hline 
        \multicolumn{2}{|c|}
        {Additional Axiom Schemes}\\
        \hline
        ($\mathtt{T}$) & $D_{G}\varphi \to \varphi$ \\
        ($\mathtt{D}$) & $\neg D_{\{a\}}\bot$ \\
        ($\mathtt{4}$) & $D_{G}\varphi \to D_{G}D_{G}\varphi$\\
        ($\mathtt{5}$) & $\neg D_{G}\varphi \to D_{G}\neg D_{G}\varphi$\\
        \hline
    \end{tabular}
\end{table}
\begin{definition}
The Hilbert system $\Hi{\mb{K}_{D}}$ is defined as in Table~\ref{table:hilbert-systems-D}. 
The Hilbert system $\Hi{\mb{K45}_{D}}$ is defined as an axiomatic expansion of $\Hi{\mb{K}_{D}}$ with $(\mathtt{4})$ and $(\mathtt{5})$ of Table~\ref{table:hilbert-systems-D}. Moreover, the Hilbert systems $\Hi{\mb{KD45}_{D}}$ and $\Hi{\mb{S5}_{D}}$ are defined as axiomatic expansions of $\Hi{\mb{K45}_{D}}$ with $(\mathtt{D})$ and $(\mathtt{T})$ of Table~\ref{table:hilbert-systems-D}, respectively.    
\end{definition}

The following notion of {\em pseudo-model} is used in, for instance,~\cite{Fagin_2003_Reasoning,Gerbrandy_1999_bisimulations}, as an intermediate step for semantic completeness proofs of epistemic logics with distributed knowledge. However, in this paper, we use pseudo-models to establish semantically the analytic property of sequent calculi in Section~\ref{sec:a-cut D}.

\begin{definition}
    A {\em pseudo frame} $F$ is a tuple $(W, (R_{G})_{G\in\mathsf{Grp}})$ where $W$ is a set of states, $R_{G} \subseteq W \times W$ is a binary relation on $W$ such that for each $G,H\in \mathsf{Grp}$ and $G\subseteq H$ implies $R_{H}\subseteq R_{G}$. A {\em pseudo model} $M$ is a tuple $(W, (R_{G})_{G\in\mathsf{Grp}},V)$ where $(W, (R_{G})_{G\in\mathsf{Grp}})$ is a pseudo frame and $V$ is a function from $\mathsf{Prop}$ to $\mathcal{P}(W)$. 
\end{definition}      
    Given a frame $F=(W,(R_{a})_{a\in\ms{Ag}})$, we define $R_{G}:=\bigcap_{a\in G}R_{a}$. Then $(W, (R_{G})_{G\in\mathsf{Grp}})$ is a pseudo frame. 
    The notion of {\em $\varphi$ being true} at $w$ in $M$ is defined in the same way as at the beginning of Section~\ref{sec:preliminary} except that
\[
M,w\models D_{G}\varphi \text{ iff }\text{ for all}\ v \in W, w R_{G} v \text{ implies } M, v \models \varphi.
\]
\begin{definition}
\label{dfn:pseudo frames-D}
We define $\mathbb{P}_{\mathbf{K45}_{D}}$ to be the class of all pseudo frames such that each $R_{G}$ is transitive and Euclidean, $\mathbb{P}_{\mathbf{KD45}_{D}}$ to be the class of all pseudo frames $F$ = $(W, (R_{G})_{G\in\mathsf{Grp}})$ such that $F \in \mathbb{P}_{\mathbf{K45}_{D}}$ and each $R_{\{a\}}$ is serial, $\mathbb{P}_{\mathbf{S5}_{D}}$ to be the class of all pseudo frames such that each $R_{G}$ is reflexive, transitive, and Euclidean (equivalently, symmetric).
\end{definition}

\begin{fact}
    Let $\Lambda\in\{\mb{K45}_{D},\mb{KD45}_{D},\mb{S5}_{D}\}$. The following are all equivalent:
    \begin{enumerate*}[label=\arabic*., ref=\arabic*]
        \item $\varphi$ is a theorem of $\Hi{\Lambda}$;
        \item $\varphi$ is valid in $\mathbb{P}_{\Lambda}$;
        \item $\varphi$ is valid in $\mathbb{F}_{\Lambda}$.
    \end{enumerate*}
\end{fact}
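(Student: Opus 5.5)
The plan is to establish the cycle of implications $1\Rightarrow 3\Rightarrow 2\Rightarrow 1$. Here $3\Rightarrow 2$ is read contrapositively: a pseudo-model refuting $\varphi$ is turned into a genuine frame refuting $\varphi$. Since every frame in $\mathbb{F}_\Lambda$ induces a pseudo frame in $\mathbb{P}_\Lambda$ via $R_G:=\bigcap_{a\in G}R_a$, the direction $3\Rightarrow 2$ is exactly the nontrivial one. The direction $2\Rightarrow 3$ is immediate from this induced pseudo frame.

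\textbf{Soundness ($1\Rightarrow 3$).} This is a routine induction on derivations. ($\mathtt{Incl}$) holds because $G\subseteq H$ implies $\bigcap_{a\in H}R_a\subseteq\bigcap_{a\in G}R_a$. ($\mathtt{4}$), ($\mathtt{5}$) and ($\mathtt{T}$) hold because transitivity, Euclideanness and reflexivity are preserved under intersection, as cited above. ($\mathtt{D}$) holds because it mentions only singletons, where $R_{\{a\}}=R_a$ is serial.

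\textbf{Completeness for pseudo-models ($2\Rightarrow 1$).} I use the canonical pseudo-model. Take maximal $\Hi{\Lambda}$-consistent sets as states and set $wR_G v$ iff $\{\psi\mid D_G\psi\in w\}\subseteq v$. ($\mathtt{Incl}$) yields $R_H\subseteq R_G$ for $G\subseteq H$, so this is a pseudo frame. The truth lemma follows from ($\mathtt{K}$), ($\mathtt{Nec}$) and Lindenbaum's lemma, exactly as in the unimodal case, treating each $D_G$ as a separate box. The standard canonicity arguments then give the frame conditions: ($\mathtt{4}$) gives transitivity, ($\mathtt{5}$) Euclideanness, ($\mathtt{T}$) reflexivity, and ($\mathtt{D}$) seriality of each $R_{\{a\}}$. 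Hence the canonical structure lies in $\mathbb{P}_\Lambda$, and any non-theorem is refuted there.

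\textbf{From pseudo-models to frames (the main obstacle).} Given a pointed pseudo-model $(M,w)$ in $\mathbb{P}_\Lambda$ refuting $\varphi$, I need a genuine model in $\mathbb{F}_\Lambda$ refuting $\varphi$. The difficulty is that $R_G$ may be strictly larger than $\bigcap_{a\in G}R_a$ when the latter is built from $R_a:=R_{\{a\}}$. I would unravel and then close back up, following Gerbrandy and Fagin et al.

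\begin{enumerate}
\item Unravel $M$ from $w$ into paths $w_0G_1w_1\cdots G_nw_n$ with $w_{i-1}R_{G_i}w_i$. Set $V$ pointwise from the last state.
\item Let $R_a$ be the reflexive (for $\mathbf{S5}$) transitive-Euclidean closure generated by the edges labelled by groups containing $a$. For $\mathbf{K45}$, a path $\pi$ sees $\pi'$ via $R_a$ iff $\pi'$ extends some ancestor-segment compatibly: I define $R_a$ as the equivalence-like closure of the edges labelled $\ni a$, minus the root of each cluster.
\item The key claim is that for each group $G$, $\bigcap_{a\in G}R_a$ connects $\pi$ to $\pi'$ only if the last states are related by $R_G$ in $M$. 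This uses that the tree has unique labelled paths, so a step lies in all $R_a$ for $a\in G$ only if every edge on the connecting route carries a label $\supseteq G$. Transitivity and Euclideanness of $R_H$ for labels $H\supseteq G$, together with $R_H\subseteq R_G$, then collapse the route into a single $R_G$ step.
\item Conversely, every $R_G$-successor in $M$ is mirrored by an edge labelled $G$.
\end{enumerate}

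Together these yield a bounded-morphism-style truth lemma for $D_G$. For $\mathbf{KD45}$, seriality of each $R_a$ is inherited from the $R_{\{a\}}$-successors. I expect the verification of item~3, which must handle routes that go up and down the tree through the Euclidean closure, to be the hardest step. Alternatively, the truth lemma can be restricted to the finitely many groups occurring in $\varphi$ if a cleaner finite construction is needed.
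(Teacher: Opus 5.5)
Your proposal is correct, but it closes the cycle differently from the paper. The paper argues $1\Rightarrow 2\Rightarrow 3\Rightarrow 1$:
\begin{itemize}
\item soundness of $\Hi{\Lambda}$ is checked directly over pseudo frames ($1\Rightarrow 2$);
\item $2\Rightarrow 3$ holds because every frame induces a pseudo frame;
\item the one hard direction, completeness $3\Rightarrow 1$, is simply cited from the literature.
\end{itemize}
You instead prove soundness over genuine frames ($1\Rightarrow 3$), completeness over pseudo frames via the canonical pseudo model ($2\Rightarrow 1$), and bridge $3\Rightarrow 2$ by unravelling. In effect you unpack the cited completeness results, which proceed in exactly this way.

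The paper's version buys brevity. Yours is self-contained and uniform in $\Lambda$, and it isolates the only non-standard step: replacing $R_G$ by $\bigcap_{a\in G}R_{\{a\}}$. Your unravelling is also the construction the paper itself uses in the proof of Theorem~\ref{prop:pseudomodel2model}. Your condition is \emph{$\pi,\pi'$ lie in the same component of the $a$-labelled edges and $\pi'$ is not that component's root}. This is equivalent to the paper's requirement that some path $u$ reaches $\pi$ by $\mathcal{R}^{\ast}_{\{a\}}$ and $\pi'$ by $\mathcal{R}^{+}_{\{a\}}$ (with $\ast$ in both places for $\mathbf{S5}$). Since there is no $\A$ here, you do not need surjectivity of $\mathtt{tail}$.

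One point in your item 3 should be made explicit for $\mathbf{K45}_D$ and $\mathbf{KD45}_D$. Knowing only that every edge on the tree route from $\pi$ to $\pi'$ carries a label $\supseteq G$ is not enough. If $\pi'$ is an ancestor of $\pi$, or $\pi'=\pi$, the route only yields $\mathtt{tail}(\pi')\,R_G^{\ast}\,\mathtt{tail}(\pi)$, and without symmetry this cannot be reversed.

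What saves you is the root exclusion built into your definition of $R_a$. If $\pi\,R_a\,\pi'$ for every $a\in G$, then the edge entering $\pi'$ carries every $a\in G$. So the parent $p$ of $\pi'$ lies in the same $G$-component. After pushing each edge into $R_G$ via $R_H\subseteq R_G$, transitivity of $R_G$ gives $\mathtt{tail}(p)\,R_G\,\mathtt{tail}(\pi)$. Euclideanness of $R_G$ applied at $\mathtt{tail}(p)$ then gives $\mathtt{tail}(\pi)\,R_G\,\mathtt{tail}(\pi')$.

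Relatedly, it is transitivity and Euclideanness of $R_G$ itself, not of the individual $R_H$, that collapses the route, since the labels along it may differ.
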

\begin{proof}
First, item 1 implies item 2 by the soundness of $\Hi{\Lambda}$, which can be established straightforwardly. Moreover, item 2 implies item 3 since $\mathbb{F}_{\Lambda}$ can be regarded as a subclass of $\mathbb{P}_{\Lambda}$. So it remains to show that item 3 implies item 1, which is a result already established in the literature. A proof sketch for $\Hi{\mb{K}_{D}}$ and $\Hi{\mb{K45}_{D}}$ can be found in~\cite[Theorem 3.4.1]{Fagin_2003_Reasoning} and~\cite[Theorem 3.30]{Gerbrandy_1999_bisimulations}, respectively. Detailed proofs are provided in~\cite[Theorem 4.7]{Agotnes_2021_group} for $\Hi{\mb{KD45}_{D}}$ and in~\cite[Theorem 10]{Wang_2013_publica} for $\Hi{\mb{S5}_{D}}$. 
\end{proof}

\section{Sequent Calculi for Distributed Knowledge Logics}
\label{sec:sequent calculi}
In this section we introduce sequent calculi $\Ge{\mb{K45}_{D}}$, $\Ge{\mb{KD45}_{D}}$ and $\Ge{\mb{S5}_{D}}$ built on $\mathbf{LK_{0}}$ (see~\cite{Takano_1992_subformula}), which is the propositional fragment of a sequent calculus $\mathbf{LK}$~\cite{Gentzen_1964_investigations,Gentzen_1965_investigations} of the first-order classical logic. A {\em sequent} $\Gamma \Rightarrow \Delta$ is a pair of finite multisets $\Gamma$ and $\Delta$ of formulas read as ``if all formulas in $\Gamma$ hold, then at least one formula in $\Delta$ holds". Moreover, we use $D_{\{a\}}\Gamma$ to denote $\inset{D_{\{a\}}\varphi}{\varphi\in\Gamma}$ and $\Gamma,\Delta$ to denote the multiset union $\Gamma\cup\Delta$.

\begin{definition}
\label{dfn:seq-calc-ml}
A sequent calculus $\mathbf{LK_{0}}$ consists of the following.
\begin{itemize}[itemsep=-1pt, parsep=-2pt, topsep=3pt]
\item Axioms:
\[
\infer[(\mathtt{id})]{\varphi \Rightarrow  \varphi}{}
\quad 
\infer[(\bot)]{\bot\Rightarrow}{}
\]
\item Structural Rules: 
\[ 
\infer[(\Rightarrow w)]{\Gamma \Rightarrow \Delta, \varphi}{\Gamma \Rightarrow \Delta}
\quad
\infer[(w \Rightarrow)]{\varphi, \Gamma \Rightarrow \Delta}{\Gamma \Rightarrow \Delta}
\quad
\infer[(\Rightarrow c)]{\Gamma \Rightarrow \Delta, \varphi}{\Gamma \Rightarrow \Delta, \varphi, \varphi}
\quad
\infer[(c \Rightarrow)]{\varphi, \Gamma \Rightarrow \Delta}{\varphi, \varphi, \Gamma \Rightarrow \Delta}
\]
\item Logical Rules:
\[
\infer[(\Rightarrow  \neg)]{\Gamma \Rightarrow \Delta, \neg \varphi}{\varphi, \Gamma \Rightarrow \Delta}
\quad
\infer[(\neg \Rightarrow  )]{\neg \varphi, \Gamma \Rightarrow \Delta}{\Gamma \Rightarrow \Delta, \varphi}
\]
\[
\infer[(\Rightarrow  \to)]{\Gamma \Rightarrow \Delta, \varphi \to \psi}{\varphi, \Gamma \Rightarrow \Delta, \psi}
\quad
\infer[(\to \Rightarrow)]{\varphi \to \psi, \Gamma, \Pi \Rightarrow \Delta, \Sigma}{\Gamma \Rightarrow \Delta, \varphi & \psi, \Pi \Rightarrow \Sigma}
\]
\item Cut:
\[
\infer[(Cut)_\varphi]{\Gamma, \Pi \Rightarrow \Delta, \Sigma}{\Gamma \Rightarrow \Delta, \varphi & \varphi, \Pi \Rightarrow \Sigma}
\]
\end{itemize}
The sequent calculus $\Ge{\mb{K45}_{D}}$ is $\mathbf{LK_{0}}$ expanded with the following rule:
\[
\infer[(\Rightarrow D^{\mb{K45}_{D}})]{D_{G_{1}}\varphi_{1},\dots,D_{G_{m}}\varphi_{m}\Rightarrow D_{H_{1}}\psi_{1},\dots,D_{H_{n}}\psi_{n}, D_{G}\chi}{\varphi_{1},\dots,\varphi_{m},D_{G_{1}}\varphi_{1},\dots,D_{G_{m}}\varphi_{m}\Rightarrow D_{H_{1}}\psi_{1},\dots,D_{H_{n}}\psi_{n},\chi},
\]
where $G_{i},H_{j}\subseteq G$ for any $1\le i\le m$ and $1\le j\le n$. The sequent calculus $\Ge{\mb{KD45}_{D}}$ expands $\Ge{\mb{K45}_{D}}$ with the following:
\[
\infer[(D^{\mb{KD45}_D})]{D_{\{a\}}\Gamma \Rightarrow D_{\{a\}}\Delta}{\Gamma, D_{\{a\}}\Gamma\Rightarrow D_{\{a\}}\Delta}.
\]
Finally, the sequent calculus $\Ge{\mb{S5}_{D}}$ expands $\mathbf{LK_{0}}$ with the following rules:
\[
\infer[(D \Rightarrow)]{D_{G} \varphi,\Gamma \Rightarrow\Delta}{\varphi,\Gamma\Rightarrow\Delta
}
\quad
\infer[(\Rightarrow D^{\mb{S5}_D})]{D_{G_1}\varphi_1, \dots, D_{G_m}\varphi_m\Rightarrow D_{H_1}\psi_1, \dots, D_{H_n}\psi_n, D_G \chi}{D_{G_1}\varphi_1, \dots, D_{G_m}\varphi_m\Rightarrow D_{H_1}\psi_1, \dots, D_{H_n}\psi_n,\chi},
\]
where $G_{i},H_{j}\subseteq G$ for any $1\le i\le m$ and $1\le j\le n$ in $(\Rightarrow D^{\mb{S5}_D})$.
\end{definition}

\noindent Note that $m$ and $n$ are possibly $0$ in $(\Rightarrow D^{\mb{K45}_{D}})$ and $(\Rightarrow D^{\mb{S5}_D})$. For each calculus, we define the notion of derivability of a sequent as a finite tree generated from axioms $(\mathtt{id})$ and $(\bot)$ by inference rules specific to the calculus. In addition, we use a double line $=$ as an abbreviation of finite applications of structural rules. Given a derivation $\mathcal{D}$, a sequent in the root node of $\D$ is called the {\em end-sequent} and we use $\mathtt{root}(\mathcal{D}) \equiv \Gamma \Rightarrow \Delta$ to mean that the end-sequent of $\mathcal{D}$ is $\Gamma \Rightarrow \Delta$. Moreover we define the {\em height} of $\mathcal{D}$ to be the maximum length of branches in $\mathcal{D}$ from the end-sequent to an axiom.

\begin{proposition}
\label{prop:soundness-D}
Let $\Lambda \in \{\mathbf{K45}_{D}, \mathbf{KD45}_{D}, \mathbf{S5}_{D}\}$. If $\Gamma \Rightarrow \Delta$ is derivable in $\mathsf{G}(\Lambda)$ then $\bigwedge \Gamma \to \bigvee \Delta$ is valid in $\mathbb{F}_{\Lambda}$.
\end{proposition}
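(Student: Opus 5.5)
We argue by induction on the height of a derivation $\D$ of $\Gamma \Rightarrow \Delta$ in $\mathsf{G}(\Lambda)$. We show that every axiom is valid in $\mathbb{F}_{\Lambda}$, and that every rule preserves validity in $\mathbb{F}_{\Lambda}$. Validity here is read via the translation $\bigwedge\Gamma\to\bigvee\Delta$, evaluated in an arbitrary model on a frame of $\mathbb{F}_{\Lambda}$ with $R_{G}=\bigcap_{a\in G}R_{a}$.

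The rules of $\mathbf{LK_{0}}$ are handled pointwise. The axioms $(\mathtt{id})$ and $(\bot)$, the structural rules, the logical rules and $(Cut)$ all preserve truth at each state $w$ of each model: if the premises' translations hold at $w$, so does the conclusion's. Hence they preserve validity.

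The modal rules need the frame conditions. We first record that for $G_i\subseteq G$ we have $R_G\subseteq R_{G_i}$. By~\cite[Lemma 3.2]{Agotnes_2021_group}, transitivity, Euclideanness and reflexivity of the $R_a$ are inherited by every $R_G$, since $G$ is non-empty.

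For $(\Rightarrow D^{\mb{K45}_{D}})$, assume the premise is valid. Suppose at $w$ all $D_{G_i}\varphi_i$ hold and no $D_{H_j}\psi_j$ holds, and take $v$ with $wR_Gv$.
\begin{itemize}
\item Since $wR_{G_i}v$, we get $\varphi_i$ at $v$.
\item By transitivity of $R_{G_i}$, $D_{G_i}\varphi_i$ holds at $v$: if $vR_{G_i}u$ then $wR_{G_i}u$, so $\varphi_i$ holds at $u$.
\item Since $D_{H_j}\psi_j$ fails at $w$, there is $u$ with $wR_{H_j}u$ and $\psi_j$ false at $u$. From $wR_{H_j}v$ (as $R_G\subseteq R_{H_j}$), Euclideanness of $R_{H_j}$ gives $vR_{H_j}u$. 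So $D_{H_j}\psi_j$ fails at $v$.
\end{itemize}
The premise, holding at $v$, then yields $\chi$ at $v$. Hence $D_G\chi$ holds at $w$.

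The $\mathbf{S5}$ rule is handled the same way, except that the step producing $\varphi_i$ at $v$ is not needed. The rule $(D\Rightarrow)$ for $\mathbf{S5}$ follows from reflexivity of $R_G$, which gives $D_G\varphi\to\varphi$ at every state.

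For $(D^{\mb{KD45}_D})$, suppose $D_{\{a\}}\Gamma$ holds at $w$ and every member of $D_{\{a\}}\Delta$ fails at $w$. By seriality of $R_a$, pick $v$ with $wR_av$. As above, transitivity gives $\Gamma$ and $D_{\{a\}}\Gamma$ at $v$, and Euclideanness gives that every member of $D_{\{a\}}\Delta$ fails at $v$. This contradicts validity of the premise at $v$.

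The main obstacle is this last rule. The point to check carefully is that seriality is only available for singleton groups, since intersections of serial relations need not be serial. This is exactly why the rule is restricted to $D_{\{a\}}$. Otherwise the argument is routine.
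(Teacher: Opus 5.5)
Your proof is correct and follows the paper's own argument. The paper likewise inducts on the height of the derivation and handles the key case $(\Rightarrow D^{\mb{K45}_{D}})$ exactly as you do: $R_G\subseteq R_{G_i}$, transitivity of $R_{G_i}$ for the boxed antecedents, and Euclideanness of $R_{H_j}$ for the refuted succedents. Your treatment of $(D^{\mb{KD45}_D})$ via seriality of the single relation $R_a$, and of the $\mathbf{S5}$ rules via reflexivity, covers cases the paper leaves to the reader, and it is also correct.
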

\begin{proof}
    By induction on the height of the derivation $\D$ to obtain $\Gamma \Rightarrow \Delta$. For base step, our goal is immediate. For inductive step, we divide the argument depending on the last rule applied. We only comment on the case of $(\Rightarrow D^{\mb{K45}_{D}})$ for $\Ge{\mathbf{K45}_{D}}$:
    \[
\infer[(\Rightarrow D^{\mb{K45}_{D}})]{D_{G_{1}}\varphi_{1},\dots,D_{G_{m}}\varphi_{m}\Rightarrow D_{H_{1}}\psi_{1},\dots,D_{H_{n}}\psi_{n}, D_{G}\chi}{\varphi_{1},\dots,\varphi_{m},D_{G_{1}}\varphi_{1},\dots,D_{G_{m}}\varphi_{m}\Rightarrow D_{H_{1}}\psi_{1},\dots,D_{H_{n}}\psi_{n},\chi},
\]
where $\bigcup^{m}_{i=1}G_{i}\cup\bigcup^{n}_{j=1}H_{j}\subseteq G$. Our goal is to show $\mathbb{F}_{\mb{K45}_{D}}\models\bigwedge_{i=1}^{m}D_{G_{i}}\varphi_{i}\to\bigvee^{n}_{j=1}D_{H_{j}}\psi_{j}\lor D_{G}\chi$. Fix any model $M$ from $\mathbb{F}_{\mb{K45}_{D}}$ and any $w\in W$. Suppose $M,w\models \bigwedge_{i=1}^{m}D_{G_{i}}\varphi_{i}$. We show that $M,w\models\bigvee^{n}_{j=1}D_{H_{j}}\psi_{j}\lor D_{G}\chi$. Suppose $M,w\not\models\bigvee^{n}_{j=1}D_{H_{j}}\psi_{j}$ and we show that $M,w\models D_{G}\chi$. Fix any $v\in W$ such that $(w,v)\in \bigcap_{a\in G}R_{a}$. It suffices to show that $M,v\models\chi$. By induction hypothesis, we obtain $\mathbb{F}_{\mb{K45}_{D}}\models\bigwedge_{i=1}^{m}\varphi_{i}\land\bigwedge_{i=1}^{m}D_{G_{i}}\varphi_{i}\to\bigvee^{n}_{j=1}D_{H_{j}}\psi_{j}\lor\chi$. Since $G_{i}\subseteq G$, $(w,v)\in\bigcap_{a\in G}R_{a}\subseteq \bigcap_{b\in G_{i}}R_{b}$. Hence $M,v\models\bigwedge_{i=1}^{m}\varphi_{i}$ by our assumption $M,w\models \bigwedge_{i=1}^{m}D_{G_{i}}\varphi_{i}$. Moreover by transitivity we obtain $\mathbb{F}_{\mb{K45}_{D}}\models\bigwedge_{i=1}^{m}D_{G_{i}}\varphi_{i}$, since for every $u\in W$ such that $(v,u)\in\bigcap_{b\in G_{i}}R_{b}$, $(w,u)\in\bigcap_{b\in G_{i}}R_{b}$. Therefore $M,v\models\bigvee^{n}_{j=1}D_{H_{j}}\psi_{j}\lor\chi$. For our goal, it suffices to show $M,v\not\models\bigvee^{n}_{j=1}D_{H_{j}}\psi_{j}$. By $M,w\not\models\bigvee^{n}_{j=1}D_{H_{j}}\psi_{j}$, there exists a state $u_{j}\in W$ for every $1\le j\le n$ such that $(w,u_{j})\in\bigcap_{c\in H_{j}}R_{c}$ and $M,u_{j}\not\models\psi_{j}$. Then by Euclideanness, we obtain $(v,u_{j})\in\bigcap_{c\in H_{j}}R_{c}$, which implies our goal $M,v\not\models\bigvee^{n}_{j=1}D_{H_{j}}\psi_{j}$ together with $M,u_{j}\not\models\psi_{j}$. \qedhere    
\end{proof}

\begin{proposition}
\label{prop:derivability-D}
    Let $\Lambda \in \{\mathbf{K45}_{D}, \mathbf{KD45}_{D}, \mathbf{S5}_{D}\}$. If $\varphi$ is a theorem of $\mathsf{H}(\Lambda)$, then $\Rightarrow \varphi$ is derivable in $\mathsf{G}(\Lambda)$.
\end{proposition}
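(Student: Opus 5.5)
We argue by induction on the length of a derivation of $\varphi$ in $\mathsf{H}(\Lambda)$, showing that every axiom instance yields a derivable sequent $\Rightarrow\varphi$ and that derivability is preserved by ($\mathtt{MP}$) and ($\mathtt{Nec}$). The rules are handled uniformly. For ($\mathtt{MP}$), from $\Rightarrow \varphi\to\psi$ and $\Rightarrow\varphi$ we first derive $\varphi\to\psi,\varphi\Rightarrow\psi$ by ($\to\Rightarrow$) from two instances of ($\mathtt{id}$), and then apply $(Cut)$ twice. Cut is available in every $\mathsf{G}(\Lambda)$; this proposition only concerns derivability with cut. For ($\mathtt{Nec}$), from $\Rightarrow\varphi$ we apply $(\Rightarrow D^{\mb{K45}_D})$ or $(\Rightarrow D^{\mb{S5}_D})$ with $m=n=0$ to get $\Rightarrow D_G\varphi$. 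Instances of ($\mathtt{Taut}$) are derivable in $\mathbf{LK_0}$ by the standard completeness of $\mathbf{LK_0}$ for classical propositional logic, treating formulas $D_G\psi$ as atoms.

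The modal axioms are each derived directly, with the modal rule as the last nontrivial step and the side condition $G_i,H_j\subseteq G$ checked in each case. We work in $\mathsf{G}(\mb{K45}_D)$; the $\mb{S5}_D$ case is analogous, using $(D\Rightarrow)$ to discharge the bare $\varphi_i$ instead.
\begin{itemize}
\item ($\mathtt{K}$): from $\varphi\to\psi,\varphi\Rightarrow\psi$, weaken to obtain $\varphi\to\psi,\varphi,D_G(\varphi\to\psi),D_G\varphi\Rightarrow\psi$. The rule with principal group $G$ then gives $D_G(\varphi\to\psi),D_G\varphi\Rightarrow D_G\psi$, and ($\Rightarrow\to$) twice finishes.
\item ($\mathtt{Incl}$): from $\varphi\Rightarrow\varphi$, weaken to $\varphi,D_G\varphi\Rightarrow\varphi$. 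Apply the rule with principal group $H\supseteq G$ to get $D_G\varphi\Rightarrow D_H\varphi$.
\item ($\mathtt{4}$): first obtain $D_G\varphi\Rightarrow D_G\varphi$ and weaken it to $\varphi,D_G\varphi\Rightarrow D_G\varphi$. Applying the rule with principal formula $D_GD_G\varphi$ and group $G$ then gives $D_G\varphi\Rightarrow D_GD_G\varphi$.
\item ($\mathtt{5}$): it suffices to derive $\Rightarrow D_G\varphi, D_G\neg D_G\varphi$, since $(\neg\Rightarrow)$ and $(\Rightarrow\to)$ then give the axiom. From $D_G\varphi\Rightarrow D_G\varphi$, ($\Rightarrow\neg$) yields $\Rightarrow D_G\varphi,\neg D_G\varphi$. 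The rule with $n=1$, $H_1=G$, and principal $D_G\neg D_G\varphi$ gives exactly the desired sequent.
\item ($\mathtt{T}$) in $\mathsf{G}(\mb{S5}_D)$: from $\varphi\Rightarrow\varphi$ by $(D\Rightarrow)$.
\item ($\mathtt{4}$) and ($\mathtt{5}$) in $\mathsf{G}(\mb{S5}_D)$: these are as above, without the bare premises.
\end{itemize}

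The only point needing some care is ($\mathtt{D}$) in $\mathsf{G}(\mb{KD45}_D)$. From $\bot\Rightarrow$, weaken to $\bot,D_{\{a\}}\bot\Rightarrow$. Then $(D^{\mb{KD45}_D})$ with $\Gamma=\{\bot\}$ and $\Delta=\emptyset$ gives $D_{\{a\}}\bot\Rightarrow$, and $(\Rightarrow\neg)$ yields $\Rightarrow\neg D_{\{a\}}\bot$. I expect ($\mathtt{5}$) to be the main obstacle: the derivation goes through only because $(\Rightarrow D^{\mb{K45}_D})$ permits boxed formulas $D_{H_j}\psi_j$ on the right-hand side of its premise, so that design feature of the rule has to be used explicitly there.
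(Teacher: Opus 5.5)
Your proposal is correct and takes the same route as the paper: $(\mathtt{MP})$ is simulated with $(Cut)$, and the axioms and $(\mathtt{Nec})$ are obtained directly from the modal rules; your derivation of $(\mathtt{Incl})$ coincides with the paper's. The paper writes out only the $(\mathtt{Incl})$ case, so your explicit derivations of $(\mathtt{K})$, $(\mathtt{4})$, $(\mathtt{5})$, $(\mathtt{T})$ and $(\mathtt{D})$ fill in the cases it leaves to the reader.
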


\begin{proof}
We note that $(Cut)$ is necessary to simulate $(\mathtt{MP})$. We only check the derivability of $(\mathtt{Incl})$ in each system. 
Suppose $G \subseteq H$. For $\Lambda \in \{\mathbf{K45}_{D}, \mathbf{KD45}_{D}, \mathbf{S5}_{D}\}$, the following derivations (left for $\mathbf{K45}_{D}, \mathbf{KD45}_{D}$ and right for $\mathbf{S5}_{D}$) witness $(\mathtt{Incl})$, where the side conditions are satisfied by the assumption $G \subseteq H$.
       \[
        \infer[(\Rightarrow\to)]{\Rightarrow D_{G}\varphi\to D_{H}\varphi}{\infer[(\Rightarrow D^{\mb{K45}_{D}})]{D_{G}\varphi\Rightarrow D_{H}\varphi}{\infer[(w\Rightarrow)]{\varphi, D_{G}\varphi\Rightarrow \varphi}{\varphi\Rightarrow\varphi}}}
        \quad
        \infer[(\Rightarrow\to)]{\Rightarrow D_{G}\varphi\to D_{H}\varphi}{\infer[(\Rightarrow D^{\mb{S5}_{D}})]{D_{G}\varphi\Rightarrow D_{H}\varphi}{\infer[(D\Rightarrow)]{D_{G}\varphi\Rightarrow \varphi}{\varphi\Rightarrow\varphi}}}. \qedhere
        \]
\end{proof}

As shown in~\cite{Shvarts_1989_gentzen}, sequent calculi $\Ge{\mb{K45}}$ and $\Ge{\mb{KD45}}$ for modal logic $\mb{K45}$ and $\mb{KD45}$, respectively, are cut-free, while sequent calculus $\Ge{\mb{S5}}$ for modal logic $\mb{S5}$ is not~\cite[p.~116]{Ohnishi_1959_gentzen}. A sequent $\Rightarrow D_{\{a\}}\neg D_{\{a\}}p, D_{\{a,b\}}p$ is derivable in $\Ge{\mb{K45}_{D}}$ and $\Ge{\mb{KD45}_{D}}$ as follows:
\[
\infer[(Cut)_{D_{\{a\}}p}]{\Rightarrow D_{\{a\}}\neg D_{\{a\}}p, D_{\{a,b\}}p}{\infer[(\Rightarrow D^{\mb{K45}_{D}})]{\Rightarrow D_{\{a\}}\neg D_{\{a\}}p, D_{\{a\}}p}{\infer[(\Rightarrow\neg)]{\Rightarrow \neg D_{\{a\}}p, D_{\{a\}}p}{D_{\{a\}}p\Rightarrow D_{\{a\}}p}} & \infer[(\Rightarrow D^{\mb{K45}_{D}})]{D_{\{a\}}p\Rightarrow D_{\{a,b\}}p}{\infer[(w\Rightarrow)]{p,D_{\{a\}}p\Rightarrow p}{p\Rightarrow p}}},
\]
Similarly, we can construct a derivation with $(Cut)$ of the same sequent in $\mathsf{G}(\mathbf{S5}_{D})$:

\[
\infer[(Cut)_{D_{\{a\}}p}]{\Rightarrow D_{\{a\}}\neg D_{\{a\}}p, D_{\{a,b\}}p}{\infer[(\Rightarrow D^{\mb{S5}_{D}})]{\Rightarrow D_{\{a\}}\neg D_{\{a\}}p, D_{\{a\}}p}{\infer[(\Rightarrow\neg)]{\Rightarrow \neg D_{\{a\}}p, D_{\{a\}}p}{D_{\{a\}}p\Rightarrow D_{\{a\}}p}} & \infer[(\Rightarrow D^{\mb{S5}_{D}})]{D_{\{a\}}p\Rightarrow D_{\{a,b\}}p}{\infer[(D\Rightarrow)]{D_{\{a\}}p\Rightarrow p}{p\Rightarrow p}}}.
\]

It is noted that the cut formula $D_{\{a\}} p$ is a subformula of the conclusion of the cut. Since $\Ge{\mb{S5}}$ is not cut-free, it is not surprising that $\Ge{\mb{S5}_{D}}$ is also not cut-free. In what follows, however, we show that $\Rightarrow D_{\{a\}}\neg D_{\{a\}}p, D_{\{a,b\}}p$ is {\em not} derivable in $\Ge{\mb{K45}_{D}}$, $\Ge{\mb{KD45}_{D}}$ and $\Ge{\mb{S5}_{D}}$ without the cut rule either.

 \begin{theorem}
\label{lem:unadmissible-cut}
Let $\Lambda\in\{\mb{K45}_{D},\mb{KD45}_{D},{\mb{S5}_{D}}\}$. The sequent $\Rightarrow D_{\{a\}}\neg D_{\{a\}}p, D_{\{a,b\}}p$ is {\em not} derivable in $\mathsf{G}^{-}(\Lambda)$, where $\mathsf{G}^{-}(\Lambda)$ is the same calculus as $\Ge{\Lambda}$ except that the cut rule is dropped.
 \end{theorem}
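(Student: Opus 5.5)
The plan is to argue by induction on the height of cut-free derivations. I will fix a small finite family $\mathcal{B}$ of ``bad'' sequents and show that no sequent in $\mathcal{B}$ has a derivation in $\mathsf{G}^{-}(\Lambda)$. The family is taken up to multiplicity: $\mathcal{B}$ is closed under duplicating formulas, so $(\Rightarrow c)$ and $(c\Rightarrow)$ never take us outside it. Each member of $\mathcal{B}$ has empty antecedent, and its succedent consists of copies of formulas from the target root sequent $\Rightarrow D_{\{a\}}\neg D_{\{a\}}p, D_{\{a,b\}}p$ together with a few of their subformulas. The first ingredient is Proposition~\ref{prop:soundness-D}: any sequent whose formula $\bigwedge\Gamma\to\bigvee\Delta$ fails in some frame of $\mathbb{F}_\Lambda$ is not derivable at all, even with cut. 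So only the \emph{valid} members of $\mathcal{B}$ need a syntactic argument, and for those I show that every rule which could have produced them has a premise that is either invalid or again a valid member of $\mathcal{B}$ of smaller height.

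The core step is a case analysis on the last rule in a derivation of the root. Axioms and $(\bot)$ are impossible, because the antecedent is empty. Left rules, and the rules $(D\Rightarrow)$ and $(D^{\mb{KD45}_D})$, either need a non-empty antecedent or a succedent built only from $D_{\{a\}}$-formulas. The only exception is $(D^{\mb{KD45}_D})$ with $\Gamma$ empty. Its premise equals its conclusion, so it is excluded by the minimality of height. If the last rule is a weakening, its premise is $\Rightarrow D_{\{a\}}\neg D_{\{a\}}p$ or $\Rightarrow D_{\{a,b\}}p$, and both are refuted in one-point or two-point models.

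The interesting case is the modal right rule; here the side condition on groups is what blocks the derivation.
\begin{itemize}
\item If the principal formula is $D_{\{a\}}\neg D_{\{a\}}p$, then $G=\{a\}$. Every remaining succedent formula $D_{H_j}\psi_j$ must satisfy $H_j\subseteq\{a\}$, so $D_{\{a,b\}}p$ cannot stay in the conclusion. Since the rule has no implicit weakening, this case is impossible unless $D_{\{a,b\}}p$ is absent.
\item If the principal formula is $D_{\{a,b\}}p$, the premise is $\Rightarrow D_{\{a\}}\neg D_{\{a\}}p, p$.
\end{itemize}
For $\mb{K45}_D$ and $\mb{KD45}_D$ this premise is invalid. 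A countermodel has $w R_a v$, $v R_a v$, $p$ false at $w$ and true at $v$, with $R_b$ chosen serial and transitive-Euclidean, for example $R_b$ equal to $R_a$ or to the identity. Then $D_{\{a\}}p$ holds at $v$, so $D_{\{a\}}\neg D_{\{a\}}p$ fails at $w$.

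For $\mb{S5}_D$ the premise $\Rightarrow D_{\{a\}}\neg D_{\{a\}}p, p$ is \emph{valid}, since $D_{\{a\}}\neg D_{\{a\}}p \leftrightarrow \neg D_{\{a\}}p$ holds in $\mathbf{S5}$. So soundness cannot be used here, and I expect this to be the main obstacle. I will put this sequent into $\mathcal{B}$ and repeat the case analysis on it. Its succedent contains the atom $p$, so $(\Rightarrow D^{\mb{S5}_D})$ cannot be the last rule, because that rule's conclusion consists only of $D$-formulas. No left rule applies either. What remains is weakening or contraction, whose premises are $\Rightarrow p$, $\Rightarrow D_{\{a\}}\neg D_{\{a\}}p$, or duplicated variants. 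The first two are invalid, and duplicated variants stay in $\mathcal{B}$ with smaller height.

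To finish, I will check that $\mathcal{B}$ is closed in this sense under every rule of each $\mathsf{G}^{-}(\Lambda)$, including the variants with repeated formulas. The induction on height then shows that no member of $\mathcal{B}$, in particular the root sequent, is derivable without cut.
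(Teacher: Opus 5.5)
Your argument is correct. It shares the paper's overall shape, an induction on cut-free derivations with a case split on the last rule, but it uses a different key ingredient.

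The paper treats $\mathbf{KD45}_D$ and stays entirely inside proof theory. It proves, by simultaneous induction, that no cut-free derivation ends in a sequent of either of two forms:
(1) $\Rightarrow (D_{\{a\}}\neg D_{\{a\}}p)^{m},(D_{\{a,b\}}p)^{n},p^{l}$;
(2) $p^{x_1},(D_{\{a\}}p)^{x_2}\Rightarrow (D_{\{a\}}\neg D_{\{a\}}p)^{y},(\neg D_{\{a\}}p)^{z}$.
Family (2) covers the sequents reached when the modal right rule is applied to $D_{\{a\}}\neg D_{\{a\}}p$ in a member of (1) with $n=l=0$, and then $(\Rightarrow\neg)$ or $(D^{\mathbf{KD45}_D})$ moves things into the antecedent.

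You replace family (2), and the invalid members of (1), by Proposition~\ref{prop:soundness-D} plus countermodels. Every sequent in (2) already fails at a one-point reflexive world where $p$ is true. As a result you only ever analyse empty-antecedent sequents, and the left rules and the antecedent-heavy modal cases disappear.

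What your route buys is a shorter case analysis. It also shows exactly where $\mathbf{S5}_D$ differs: there $\Rightarrow D_{\{a\}}\neg D_{\{a\}}p, p$ is valid (it is the familiar $\mathbf{S5}$ counterexample to cut elimination), so it must be handled syntactically. What it costs is that the underivability result now depends on the soundness theorem and on concrete frames, rather than being self-contained proof theory.

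One point you should pin down is the family $\mathcal{B}$. It has to be all of (1), not just duplicates of the root and of $\Rightarrow D_{\{a\}}\neg D_{\{a\}}p,p$. For example, apply the modal rule to one copy of $D_{\{a,b\}}p$ in the duplicated root $\Rightarrow D_{\{a\}}\neg D_{\{a\}}p, D_{\{a,b\}}p, D_{\{a,b\}}p$. The premise is $\Rightarrow D_{\{a\}}\neg D_{\{a\}}p, D_{\{a,b\}}p, p$, which is valid even in $\mathbf{K45}_D$. So your claim that this premise is invalid covers only the unduplicated root.

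With $\mathcal{B}$ equal to (1), the closure check works:
\begin{itemize}
\item every valid member contains $D_{\{a,b\}}p$ or $p$, and either one blocks the modal rule with principal formula $D_{\{a\}}\neg D_{\{a\}}p$ and also $(D^{\mathbf{KD45}_D})$;
\item $(\Rightarrow\neg)$ and $(\Rightarrow\to)$, which you did not mention, have no possible principal formula.
\end{itemize}
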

\begin{proof}
    Let $\varphi^{m}$ denote $m$ occurrences of $\varphi$. We comment on the case where $\Lambda=\mb{KD45}_{D}$. We can prove the following statements by simultaneous induction on $\mathcal{D}$ of $\mathsf{G}^{-}(\mb{KD45}_{D})$.
\begin{enumerate}[itemsep=0pt, parsep=0pt, topsep=4pt, label=(\arabic*)]
    \item $\forall m,n,l\in \mathbb{N}\ (\mathtt{root}(\mathcal{D}) \not\equiv\ \Rightarrow (D_{\{a\}}\neg D_{\{a\}}p)^{m}, (D_{\{a,b\}}p)^{n},p^{l})$,
    \item $\forall x_{1},x_{2},y,z\in \mathbb{N}\ (\mathtt{root}(\mathcal{D}) \not\equiv p^{x_{1}},(D_{\{a\}}p)^{x_{2}}\Rightarrow (D_{\{a\}}\neg D_{\{a\}}p)^{y},(\neg D_{\{a\}}p)^{z})$.
\end{enumerate}
Then the statement of the lemma is a special case of (1) where $m = n = 1$ and $l$ = $0$. 
\end{proof}

\section{Analytic Cut Property of \texorpdfstring{$\Ge{\mb{K45}_{D}}$}{G(K45)_{D}}, \texorpdfstring{$\Ge{\mb{KD45}_{D}}$}{G(KD45)_{D}} and  \texorpdfstring{$\Ge{\mb{S5}_{D}}$}{G(S5)_{D}}}
\label{sec:a-cut D}
This section establishes the analytic cut property, i.e., every application of the cut rule in $\Ge{\Lambda}$ can be replaced with an application of the cut rule such that the cut formula is in the set of {\em subformulas} of the conclusion of the cut. The set $\mathsf{Sub}(\varphi)$ of subformulas of an $L$-formula $\varphi$ is defined as usual. In particular, $\Sub(D_{G}\varphi)=\Sub(\varphi)\cup\{D_{G}\varphi\}$. For any multiset $\Gamma$ of formulas, $\Sub(\Gamma)$ is defined as $\bigcup_{\varphi\in\Gamma}\Sub(\varphi)$. Moreover, we say that $\Xi$ is {\em subformula-closed} if $\Sub(\Xi)\subseteq\Xi$. Let $\Lambda \in \{\mathbf{K45}_{D}, \mathbf{KD45}_{D}, \mathbf{S5}_{D}\}$ in this section.

\begin{definition}
\label{dfn:a-cut}
A cut
\[
\infer[(Cut)_\varphi]{\Gamma, \Pi \Rightarrow \Delta, \Sigma}{\Gamma \Rightarrow \Delta, \varphi & \varphi, \Pi \Rightarrow \Sigma}
\]
is {\em analytic} if $\varphi \in \Sub(\Gamma, \Pi, \Delta, \Sigma)$. The sequent calculus $\mathsf{G}^{\mathtt{a}}(\Lambda)$ is the same calculus as $\Ge{\Lambda}$ except that $(Cut)$ is always analytic. 
In what follows, when $\Gamma \Rightarrow \Delta$ is derivable in $\mathsf{G}^{\mathtt{a}}({\Lambda})$, we denote $\mathsf{G}^{\mathtt{a}}({\Lambda}) \vdash \Gamma \Rightarrow \Delta$.
\end{definition}

Let $\Xi$ be a subformula-closed finite set of formulas in what follows. The following definition is taken from~\cite{Takano_2018_semantical}, whose notion originates from~\cite{Takano_2001_modified}.

\begin{definition}
\label{dfn:partial valuation}
A pair $(\Gamma,\Delta)$ is a {\em $\Xi$-partial valuation} in $\mathsf{G}^{\mt{a}}(\Lambda)$ if all the following hold: 
\begin{enumerate}[label=(\arabic*)]
    \item $\mathsf{G}^{\mt{a}}(\Lambda)\not\vdash\Gamma\Rightarrow\Delta$; 
    \item $\Sub(\Gamma,\Delta)=\Gamma\cup\Delta$; 
    \item $\Gamma\cup\Delta\subseteq\Xi$.
\end{enumerate}    
\end{definition}

\begin{remark}
It is worth noting that the condition for $\Sub(\Gamma,\Delta)$ highlights the distinction between semantically eliminating only non-analytic cuts and eliminating all cuts. In fully cut-free sequent calculi (for example, $\mathbf{S4}$), the equality $\Sub(\Gamma,\Delta)=\Gamma\cup\Delta$ may \emph{not} hold; instead, the inclusion $\Gamma\cup\Delta\subseteq \Sub(\Gamma,\Delta)$ is utilized. This is precisely the difference between the approach used in this paper to establish the analytic cut property and the one applied in~\cite{Takano_2018_semantical} to prove semantic completeness for completely cut-free systems.
\end{remark}

The following lemma can be established similarly to Lindenbaum's Lemma (cf.~\cite[Lemma 1]{Udatsu_2025_craig}).

\begin{lemma}
\label{lem:Lindenbaum a-cut}
Let $\Gamma\cup\Delta\subseteq\Xi$. If $\mathsf{G}^{\mt{a}}(\Lambda)\not\vdash\Gamma\Rightarrow\Delta$, then there exists a $\Xi$-partial valuation $(\Gamma^{+},\Delta^{+})$ such that $\Gamma\subseteq\Gamma^{+}$, $\Delta\subseteq\Delta^{+}$ and $\Sub(\Gamma,\Delta)=\Gamma^{+}\cup\Delta^{+}$.
\end{lemma}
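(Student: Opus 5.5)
We extend $(\Gamma,\Delta)$ one subformula at a time, using analytic cut to keep underivability. Enumerate the finite set $\Sub(\Gamma,\Delta)$ as $\chi_1,\dots,\chi_k$; it is finite and contained in $\Xi$, since $\Xi$ is subformula-closed and $\Gamma\cup\Delta\subseteq\Xi$. Set $(\Gamma_0,\Delta_0):=(\Gamma,\Delta)$. Given $(\Gamma_i,\Delta_i)$ with $\mathsf{G}^{\mt{a}}(\Lambda)\not\vdash\Gamma_i\Rightarrow\Delta_i$, define
\[
(\Gamma_{i+1},\Delta_{i+1}) :=
\begin{cases}
(\Gamma_i\cup\{\chi_{i+1}\},\Delta_i) & \text{if } \mathsf{G}^{\mt{a}}(\Lambda)\not\vdash \chi_{i+1},\Gamma_i\Rightarrow\Delta_i,\\
(\Gamma_i,\Delta_i\cup\{\chi_{i+1}\}) & \text{otherwise.}
\end{cases}
\]
Finally put $(\Gamma^+,\Delta^+):=(\Gamma_k,\Delta_k)$. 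We treat $\Gamma_i,\Delta_i$ as sets. Contraction and weakening are available, so derivability of a sequent depends only on the underlying sets of its two sides, and adding a formula already present changes nothing.

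The key step is preserving underivability in the ``otherwise'' case. Suppose, towards a contradiction, that both $\chi_{i+1},\Gamma_i\Rightarrow\Delta_i$ and $\Gamma_i\Rightarrow\Delta_i,\chi_{i+1}$ are derivable in $\mathsf{G}^{\mt{a}}(\Lambda)$. Apply $(Cut)_{\chi_{i+1}}$ to them, then contract, to obtain $\Gamma_i\Rightarrow\Delta_i$. We must check that this cut is analytic, i.e.\ that $\chi_{i+1}\in\Sub(\Gamma_i,\Delta_i,\Gamma_i,\Delta_i)=\Sub(\Gamma_i,\Delta_i)$. This holds because $\chi_{i+1}\in\Sub(\Gamma,\Delta)\subseteq\Sub(\Gamma_i,\Delta_i)$, using $\Gamma\subseteq\Gamma_i$ and $\Delta\subseteq\Delta_i$. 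This is exactly why the extension ranges only over $\Sub(\Gamma,\Delta)$ and not over all of $\Xi$: a formula of $\Xi$ outside $\Sub(\Gamma,\Delta)$ could not be cut analytically. Hence $\Gamma_i\Rightarrow\Delta_i$ would be derivable, contradicting the invariant. So the invariant $\mathsf{G}^{\mt{a}}(\Lambda)\not\vdash\Gamma_i\Rightarrow\Delta_i$ holds for every $i\le k$.

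It remains to check the conditions. By construction $\Gamma\subseteq\Gamma^+$ and $\Delta\subseteq\Delta^+$. Every $\chi_j$ was placed on one side, so $\Sub(\Gamma,\Delta)\subseteq\Gamma^+\cup\Delta^+$. Conversely, $\Gamma^+\cup\Delta^+\subseteq\Gamma\cup\Delta\cup\{\chi_1,\dots,\chi_k\}\subseteq\Sub(\Gamma,\Delta)$. Hence $\Gamma^+\cup\Delta^+=\Sub(\Gamma,\Delta)$.

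This equality gives condition (2) of a $\Xi$-partial valuation:
\[
\Sub(\Gamma^+,\Delta^+)=\Sub(\Sub(\Gamma,\Delta))=\Sub(\Gamma,\Delta)=\Gamma^+\cup\Delta^+,
\]
by idempotence of $\Sub$. Condition (3) follows from $\Sub(\Gamma,\Delta)\subseteq\Xi$. Condition (1) is the invariant at $i=k$.

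The only delicate point is the analyticity check in the cut step; everything else is routine bookkeeping.
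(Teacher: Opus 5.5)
Your proof is correct and is essentially the paper's argument. Both enumerate $\Sub(\Gamma,\Delta)$ and add each formula to one side while preserving underivability, using that the cut on $\chi_{i+1}$ is analytic because $\chi_{i+1}\in\Sub(\Gamma,\Delta)\subseteq\Sub(\Gamma_i,\Delta_i)$; the only differences are cosmetic (you try the antecedent first where the paper tries the succedent first, and you check conditions (2) and (3) of a $\Xi$-partial valuation more explicitly).
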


\begin{proof}
    Let $\varphi_{1},\dots,\varphi_{m}$ be an enumeration of all formulas in $\Sub(\Gamma,\Delta)$. In what follows we inductively construct a sequence $(\Gamma_{k},\Delta_{k})(1\le k\le m)$ such that $\mathsf{G}^{\mt{a}}(\Lambda)\not\vdash\Gamma_{k}\Rightarrow\Delta_{k}$, $\Gamma_{k}\subseteq\Gamma_{k+1}$ and $\Delta_{k}\subseteq\Delta_{k+1}$ for all $1\le k\le m$. For base step, let $(\Gamma_{0},\Delta_{0}):=(\Gamma,\Delta)$. It is clear that $\mathsf{G}^{\mt{a}}(\Lambda)\not\vdash\Gamma\Rightarrow\Delta$. For inductive step, we define $(\Gamma_{k+1},\Delta_{k+1})$ as follows:
    \[
    \begin{cases}
        \Gamma_{k+1}=\Gamma_{k}\text{ and }\Delta_{k+1}=\Delta_{k}\cup\{\varphi_{k}\} &\text{if }\mathsf{G}^{\mt{a}}(\Lambda)\not\vdash\Gamma_{k}\Rightarrow\Delta_{k},\varphi_{k}\\
        \Gamma_{k+1}=\Gamma_{k}\cup\{\varphi_{k}\}\text{ and }\Delta_{k+1}=\Delta_{k} &\text{otherwise}
    \end{cases}
    \]
    Moreover, it is not the case that $\mathsf{G}^{\mt{a}}(\Lambda)\vdash\Gamma_{k}\Rightarrow\Delta_{k},\varphi_{k}$ and $\mathsf{G}^{\mt{a}}(\Lambda)\vdash\varphi_{k}, \Gamma_{k}\Rightarrow\Delta_{k}$. Suppose it is the case for contradiction. Then we can derive $\Gamma_{k}\Rightarrow\Delta_{k}$ by
    \[
    \infer=[(w)]{\Gamma_{k}\Rightarrow\Delta_{k}}{\infer[(Cut)^{\mt{a}}_{\varphi_{k}}]{\Gamma_{k},\Gamma_{k}\Rightarrow\Delta_{k},\Delta_{k}}{\Gamma_{k}\Rightarrow\Delta_{k},\varphi_{k} & \varphi_{k}, \Gamma_{k}\Rightarrow\Delta_{k}}},
    \]
    where $\varphi_{k}\in\Sub(\Gamma_{k},\Delta_{k})$ holds clearly since $\varphi_{k}\in\Sub(\Gamma,\Delta)$ and $\Gamma\subseteq\Gamma_{k}$, $\Delta\subseteq\Delta_{k}$. Hence we have $\mathsf{G}^{\mt{a}}(\Lambda)\vdash\Gamma_{k}\Rightarrow\Delta_{k}$, which contradicts the induction hypothesis. Therefore, either $\Gamma_{k}\Rightarrow\Delta_{k},\varphi_{k}$ or $\varphi_{k}, \Gamma_{k}\Rightarrow\Delta_{k}$ is underivable in $\mathsf{G}^{\mt{a}}(\Lambda)$. Finally, let $(\Gamma^{+},\Delta^{+}):=(\Gamma_{m+1},\Delta_{m+1})$. It holds clearly that $\Sub(\Gamma,\Delta)=\Sub(\Gamma^{+},\Delta^{+})=\Gamma^{+}\cup\Delta^{+}$. Moreover $\Gamma\subseteq\Gamma^{+}$, $\Delta\subseteq\Delta^{+}$ and $\mathsf{G}^{\mt{a}}(\Lambda)\not\vdash\Gamma^{+}\Rightarrow\Delta^{+}$ by our definition. 
\end{proof}

The following definition of $R^{\Lambda}_{G}$ is based on the definitions of accessibility relations for individual agents given in~\cite[Section 5.3]{Maruyama_2003_temporal} for individual agents, but is modified here to the group level to accommodate the semantics of distributed knowledge.

\begin{definition}
\label{dfn:canonical pseudo model}
    The pseudo model $M^{\Lambda}_{\Xi}=(W^{\Lambda}_{\Xi}, (R^{\Lambda}_{G})_{G\in\mathsf{Grp}},V^{\Lambda}_{\Xi})$ derived from $\Xi$ is defined as follows:
    \begin{itemize}[itemsep=0pt, parsep=0pt, topsep=5pt]
        \item $W^{\Lambda}_{\Xi}=\inset{(\Gamma,\Delta)}{(\Gamma,\Delta) \text{ is a }\Xi\text{-partial valuation in }\mathsf{G}^{\mt{a}}(\Lambda)}$;
        \item $R^{\Lambda}_{G}$ is defined depending on the choice of $\Lambda$:
        \begin{itemize}
            \item $\Lambda=\mb{K45}_{D}$: $(\Gamma,\Delta)R^{\mb{K45}_{D}}_{G}(\Pi,\Sigma)$ iff 
            \begin{center}
             $\forall H\subseteq G\ (\inset{\varphi,D_{H}\varphi}{D_{H}\varphi\in\Gamma}\subseteq\Pi$ and $\inset{D_{H}\varphi}{D_{H}\varphi\in\Pi}\subseteq\Gamma$),   
            \end{center}
        \item $\Lambda=\mb{KD45}_{D}$: $R^{\mb{KD45}_{D}}_{G}:=R^{\mb{K45}_{D}}_{G}$;
        \item $\Lambda=\mb{S5}_{D}$: $(\Gamma,\Delta)R^{\mb{S5}_{D}}_{G}(\Pi,\Sigma)$ iff $\forall H\subseteq G\ (\inset{D_{H}\varphi}{D_{H}\varphi\in\Gamma}=\inset{D_{H}\varphi}{D_{H}\varphi\in\Pi})$;
        \end{itemize}
        \item $(\Gamma,\Delta)\in V^{\Lambda}_{\Xi}(p)$ iff $p\in\Gamma$.
    \end{itemize}
\end{definition}
\noindent Since $\Xi$ is finite, we note that $W^{\Lambda}_{\Xi}$ is finite. The following proposition shows that $M^{\Lambda}_{\Xi}$ is indeed a pseudo model.
\begin{proposition}
\label{prop:well definiability D}
The following hold for the pseudo frame $F^{\Lambda}_{\Xi}=(W^{\Lambda}_{\Xi}, (R^{\Lambda}_{G})_{G\in\mathsf{Grp}})$ derived from $\Xi$:
\begin{enumerate*}[1.]
    \item $G\subseteq H$ implies $R^{\Lambda}_{H}\subseteq R^{\Lambda}_{G}$; 
    \item $F^{\Lambda}_{\Xi}$ satisfies the corresponding properties of $\Lambda$.
\end{enumerate*}
\end{proposition}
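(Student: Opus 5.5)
Item 1 follows directly from the shape of the definitions. The quantifier ``$\forall H\subseteq G$'' makes each $R^{\Lambda}_{G}$ antitone in $G$. So suppose $G\subseteq H$ and $(\Gamma,\Delta)R^{\Lambda}_{H}(\Pi,\Sigma)$. Every $H'\subseteq G$ also satisfies $H'\subseteq H$, so the defining clause for $R^{\Lambda}_{G}$ is a subset of the clauses already guaranteed by $R^{\Lambda}_{H}$. This argument works verbatim for both the $\mb{K45}_{D}$ and the $\mb{S5}_{D}$ definitions, and hence for $\mb{KD45}_{D}$ as well.

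For item 2, I would first treat transitivity and Euclideanness for $\Lambda\in\{\mb{K45}_{D},\mb{KD45}_{D}\}$ by chasing memberships. Fix $H\subseteq G$.
\begin{itemize}
\item Transitivity. Assume $X R_{G} Y$ and $Y R_{G} Z$. If $D_{H}\varphi$ is in the antecedent of $X$, then $\varphi$ and $D_{H}\varphi$ are in the antecedent of $Y$. Applying the same step from $Y$ puts $\varphi$ and $D_{H}\varphi$ in the antecedent of $Z$. Conversely, $D_{H}\varphi$ in the antecedent of $Z$ passes back to $Y$ and then to $X$.
\item Euclideanness. Assume $X R_{G} Y$ and $X R_{G} Z$, and aim for $Y R_{G} Z$. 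If $D_{H}\varphi$ is in the antecedent of $Y$, it lies in the antecedent of $X$ by the backward clause, so $\varphi$ and $D_{H}\varphi$ lie in the antecedent of $Z$ by the forward clause. If $D_{H}\varphi$ is in the antecedent of $Z$, it lies in the antecedent of $X$, hence in that of $Y$.
\end{itemize}
For $\mb{S5}_{D}$ the relation is defined by an equality of the sets $\inset{D_{H}\varphi}{D_{H}\varphi\in\cdot}$ for all $H\subseteq G$. It is therefore an equivalence relation, giving reflexivity, transitivity and symmetry at once.

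The only nontrivial part, and the step I expect to be the main obstacle, is seriality of $R^{\mb{KD45}_{D}}_{\{a\}}$. Note that $H\subseteq\{a\}$ forces $H=\{a\}$, since groups are non-empty. Given $(\Gamma,\Delta)\in W^{\Lambda}_{\Xi}$, put $\Gamma_{0}=\inset{\varphi}{D_{\{a\}}\varphi\in\Gamma}$ and $\Delta_{0}=\inset{\psi}{D_{\{a\}}\psi\in\Delta}$. Suppose $\Gamma_{0},D_{\{a\}}\Gamma_{0}\Rightarrow D_{\{a\}}\Delta_{0}$ were derivable in $\mathsf{G}^{\mt{a}}(\mb{KD45}_{D})$. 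Then $(D^{\mb{KD45}_D})$ yields $D_{\{a\}}\Gamma_{0}\Rightarrow D_{\{a\}}\Delta_{0}$, and weakening yields $\Gamma\Rightarrow\Delta$. This contradicts the assumption that $(\Gamma,\Delta)$ is a $\Xi$-partial valuation. Since $\Xi$ is subformula-closed, all these formulas lie in $\Xi$. Lemma~\ref{lem:Lindenbaum a-cut} then gives a $\Xi$-partial valuation $(\Pi,\Sigma)$ with $\Gamma_{0}\cup D_{\{a\}}\Gamma_{0}\subseteq\Pi$, $D_{\{a\}}\Delta_{0}\subseteq\Sigma$, and $\Pi\cup\Sigma=\Sub(\Gamma_{0},D_{\{a\}}\Gamma_{0},D_{\{a\}}\Delta_{0})$. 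The forward clause of $R_{\{a\}}$ holds by construction.

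For the backward clause, take $D_{\{a\}}\varphi\in\Pi$. This formula lies in $\Sub(\Gamma,\Delta)$, which equals $\Gamma\cup\Delta$ by condition (2) of Definition~\ref{dfn:partial valuation}. If it were in $\Delta$, it would belong to $D_{\{a\}}\Delta_{0}\subseteq\Sigma$. Then $\Pi\Rightarrow\Sigma$ would be derivable by $(\mathtt{id})$ plus weakening, contradicting that $(\Pi,\Sigma)$ is a $\Xi$-partial valuation. Hence $D_{\{a\}}\varphi\in\Gamma$, and $(\Gamma,\Delta)R^{\mb{KD45}_{D}}_{\{a\}}(\Pi,\Sigma)$.

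This backward step is exactly where the equality $\Sub(\Gamma,\Delta)=\Gamma\cup\Delta$ is essential.
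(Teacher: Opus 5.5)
Your proposal is correct and follows the paper's proof essentially step for step. It gets antitonicity from the $\forall H\subseteq G$ quantifier, transitivity and Euclideanness by membership-chasing, and the $\mb{S5}_{D}$ properties from equality of the $D_{H}$-sets. For seriality it uses $(D^{\mb{KD45}_D})$ plus Lemma~\ref{lem:Lindenbaum a-cut}, with the equality $\Sub(\Gamma,\Delta)=\Gamma\cup\Delta$ securing the backward clause. You are slightly more careful than the printed proof in two places: you make explicit that $H\subseteq\{a\}$ forces $H=\{a\}$, and you state the backward clause for the extended valuation (the paper writes the unextended $\Pi$ there).
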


\begin{proof}

\begin{enumerate}[1. ]
        \item Fix any $G,H\in\ms{Grp}$ such that $G\subseteq H$. Since $R^{\mb{KD45}_{D}}_{G}:=R^{\mb{K45}_{D}}_{G}$, we only comment on the cases where $(\Lambda =\mb{K45}_{D})$ and $(\Lambda =\mb{S5}_{D})$.
        \begin{itemize}
            \item Let $\Lambda =\mb{K45}_{D}$. Fix any $(\Gamma,\Delta)$, $(\Pi,\Sigma)\in W^{\Lambda}_{\Xi}$ such that $(\Gamma,\Delta)R^{\mb{K45}_{D}}_{H}(\Pi,\Sigma)$. We show that $(\Gamma,\Delta)R^{\mb{K45}_{D}}_{G}(\Pi,\Sigma)$. Fix any $I\subseteq G$. It suffices to show (i) $\inset{\varphi,D_{I}\varphi}{D_{I}\varphi\in\Gamma}\subseteq\Pi$ and (ii) $\inset{D_{I}\varphi}{D_{I}\varphi\in\Pi}\subseteq\Gamma$. By assumption, $I\subseteq G\subseteq H$, hence our goal follows from 
            \[
            (\Gamma,\Delta)R^{\mb{K45}_{D}}_{H}(\Pi,\Sigma).
            \]
            \item Let $\Lambda =\mb{S5}_{D}$. Fix any $(\Gamma,\Delta)$, $(\Pi,\Sigma)\in W^{\Lambda}_{\Xi}$ such that $(\Gamma,\Delta)R^{\mb{S5}_{D}}_{H}(\Pi,\Sigma)$. We show that $(\Gamma,\Delta)R^{\mb{S5}_{D}}_{G}(\Pi,\Sigma)$. Fix any $I\subseteq G$. It suffices to show $\inset{D_{I}\varphi}{D_{I}\varphi\in\Gamma}=\inset{D_{I}\varphi}{D_{I}\varphi\in\Pi}$. By assumption, $I\subseteq G\subseteq H$, hence our goal follows from $(\Gamma,\Delta)R^{\mb{S5}_{D}}_{H}(\Pi,\Sigma)$. 
        \end{itemize}
        \item \begin{itemize}[itemsep=0pt, parsep=0pt, topsep=5pt]
    \item 

Let $\Lambda =\mb{K45}_{D}$. We show that $R^{\mb{K45}_{D}}_{G}$ is transitive and Euclidean.

            \noindent \fbox{Transitivity} Fix any $(\Gamma,\Delta)$, $(\Pi,\Sigma)$, $(\Theta,\Omega)\in W^{\mb{K45}_{D}}_{\Xi}$. Suppose 
            \[(\Gamma,\Delta)R^{\mb{K45}_{D}}_{G}(\Pi,\Sigma)\text{ and } (\Pi,\Sigma)R^{\mb{K45}_{D}}_{G}(\Theta,\Omega).
            \] 
            We show $(\Gamma,\Delta)R^{\mb{K45}_{D}}_{G}(\Theta,\Omega)$. Fix any $H\subseteq G$. It suffices to show (i) $\inset{\varphi,D_{H}\varphi}{D_{H}\varphi\in\Gamma}\subseteq\Theta$ and (ii) $\inset{D_{H}\varphi}{D_{H}\varphi\in\Theta}\subseteq\Gamma$.
First, we show (i). Let us fix any $D_{H}\psi\in\Gamma$. We show $\psi,D_{H}\psi\in\Theta$. By assumption, $\inset{\varphi,D_{H}\varphi}{D_{H}\varphi\in\Gamma}\subseteq\Pi$. Then we have $D_{H}\psi\in\Pi$. Again by assumption, $\inset{\varphi,D_{H}\varphi}{D_{H}\varphi\in\Pi}\subseteq\Theta$. Therefore our goal of $\psi,D_{H}\psi\in\Theta$ follows from $D_{H}\psi\in\Pi$.  
 
Second, we move to (ii). By assumption we have 
\[\inset{D_{H}\varphi}{D_{H}\varphi\in\Theta}\subseteq\Pi\text{ and }\inset{D_{H}\varphi}{D_{H}\varphi\in\Pi}\subseteq\Gamma,
\]
which implies our goal $\inset{D_{H}\varphi}{D_{H}\varphi\in\Theta}\subseteq\Gamma$.

\noindent \fbox{Euclideanness} Fix any $(\Gamma,\Delta)$, $(\Pi,\Sigma)$, $(\Theta,\Omega)\in W^{\mb{K45}_{D}}_{\Xi}$. Suppose $(\Gamma,\Delta)R^{\mb{K45}_{D}}_{G}(\Pi,\Sigma)$ and $(\Gamma,\Delta)R^{\mb{K45}_{D}}_{G}(\Theta,\Omega)$. We show $(\Pi,\Sigma)R^{\mb{K45}_{D}}_{G}(\Theta,\Omega)$. Fix any $H\subseteq G$. It suffices to show (i) $\inset{\varphi,D_{H}\varphi}{D_{H}\varphi\in\Pi}\subseteq\Theta$ and (ii) $\inset{D_{H}\varphi}{D_{H}\varphi\in\Theta}\subseteq\Pi$.

First, we establish (i). Fix any $D_{H}\psi\in\Pi$. We show that $\psi,D_{H}\psi\in\Theta$. By $(\Gamma,\Delta)R^{\mb{K45}_{D}}_{G}(\Pi,\Sigma)$ we have $D_{H}\psi\in\Gamma$. Then our goal holds by $(\Gamma,\Delta)R^{\mb{K45}_{D}}_{G}(\Theta,\Omega)$.

Second, we move to (ii). Fix any $D_{H}\psi$ with $D_{H}\psi\in\Theta$. We show that $D_{H}\psi\in\Pi$. By $(\Gamma,\Delta)R^{\mb{K45}_{D}}_{G}(\Theta,\Omega)$, we have $D_{H}\psi\in\Gamma$. Then our goal holds by $(\Gamma,\Delta)R^{\mb{K45}_{D}}_{G}(\Pi,\Sigma)$.

\noindent \fbox{Seriality} We prove the seriality of $R_{\{a\}}^{\mb{KD45}_{D}}$. Fix any $(\Gamma,\Delta)\in W^{\mb{KD45}_{D}}_{\Xi}$. We show that there exists $(\Pi,\Sigma)\in W^{\mb{KD45}_{D}}_{\Xi}$ such that $(\Gamma,\Delta)R^{\mb{KD45}_{D}}_{\{a\}}(\Pi,\Sigma)$. We define   $\Pi:=\inset{\varphi}{D_{\{a\}}\varphi\in\Gamma}$, $\Sigma:=\inset{D_{\{a\}}\psi}{D_{\{a\}}\psi\in\Delta}$.
            Then $\Pi,D_{\{a\}}\Pi\Rightarrow\Sigma$ is underivable in $\mathsf{G}^{\mt{a}}(\mb{KD45}_{D})$. Otherwise $\Gamma\Rightarrow\Delta$ is derivable from $\Pi,D_{\{a\}}\Pi\Rightarrow\Sigma$ via $(D^{\mb{KD45}_D})$ and weakening rules, which contradicts $\mathsf{G}^{\mt{a}}(\Lambda)\not\vdash\Gamma\Rightarrow\Delta$. By Lemma~\ref{lem:Lindenbaum a-cut}, there exists $(\Pi^{+},\Sigma^{+})\in W^{\mb{KD45}_{D}}_{\Xi}$ such that $\Pi\cup D_{\{a\}}\Pi\subseteq\Pi^{+}$, $\Sigma\subseteq\Sigma^{+}$ and $\Sub(\Pi,D_{\{a\}}\Pi,\Sigma)=\Pi^{+}\cup\Sigma^{+}$. Then it suffices to show $(\Gamma,\Delta)R^{\mb{KD45}_{D}}_{\{a\}}(\Pi^{+},\Sigma^{+})$. Fix any $D_{\{a\}}\chi\in\Xi$. First, we suppose $D_{\{a\}}\chi\in\Gamma$. We show that $\chi, D_{\{a\}}\chi\in\Pi^{+}$. Then $\chi\in\Pi\subseteq\Pi^{+}$. Moreover $D_{\{a\}}\chi\in D_{\{a\}}\Pi\subseteq\Pi^{+}$. Therefore, our goal $\chi\in\Pi^{+}$ and $D_{\{a\}}\chi\in\Pi^{+}$ holds. Second, we suppose $D_{\{a\}}\chi\in\Pi$. Our goal is to show $D_{\{a\}}\chi\in\Gamma$. Suppose $D_{\{a\}}\chi\notin\Gamma$ for contradiction. By $D_{\{a\}}\chi\in\Pi$, $D_{\{a\}}\chi\in\Pi^{+}\cup\Sigma^{+}=\Sub(\Pi,D_{\{a\}}\Pi, \Sigma)\subseteq\Sub(\Gamma,\Delta)=\Gamma\cup\Delta$. Since $D_{\{a\}}\chi\notin\Gamma$, we have $D_{\{a\}}\chi\in\Delta$, i.e., $D_{\{a\}}\chi\in\Sigma\subseteq\Sigma^{+}$. Then $\Pi^{+}\Rightarrow\Sigma^{+}$ is derivable from $D_{\{a\}}\chi\Rightarrow D_{\{a\}}\chi$ by weakening rules, which is contradicted to $(\Pi^{+},\Sigma^{+})$ being a $\Xi$-partial valuation. Therefore, our goal $D_{\{a\}}\chi\in\Gamma$ holds. 
        
        \item Let $\Lambda$ = $ \mb{S5}_{D}$. We show that $R^{\mb{S5}_{D}}_{G}$ is reflexive, symmetric and transitive. Fix any $(\Gamma,\Delta)$, $(\Pi,\Sigma)$, $(\Theta,\Omega)\in W^{\mb{S5}_{D}}_{\Xi}$. First, we show that $(\Gamma,\Delta)R^{\mb{S5}_{D}}_{G}(\Gamma,\Delta)$, which holds clearly by 
        \[\inset{D_{H}\varphi}{D_{H}\varphi\in\Gamma}=\inset{D_{H}\varphi}{D_{H}\varphi\in\Gamma}.
        \]
        Second, suppose $(\Gamma,\Delta)R^{\mb{S5}_{D}}_{G}(\Pi,\Sigma)$. We show that $(\Pi,\Sigma)R^{\mb{S5}_{D}}_{G}(\Gamma,\Delta)$. It suffices to show that $\inset{D_{H}\varphi}{D_{H}\varphi\in\Pi}=\inset{D_{H}\varphi}{D_{H}\varphi\in\Gamma}$, which follows from our assumption 
        \[(\Gamma,\Delta)R^{\mb{S5}_{D}}_{G}(\Pi,\Sigma),
        \]
        i.e., $\inset{D_{H}\varphi}{D_{H}\varphi\in\Gamma}=\inset{D_{H}\varphi}{D_{H}\varphi\in\Pi}$. Finally, suppose 
        \[(\Gamma,\Delta)R^{\mb{S5}_{D}}_{G}(\Pi,\Sigma)\text{ and }(\Pi,\Sigma)R^{\mb{S5}_{D}}_{G}(\Theta,\Omega).
        \]
        We show $(\Gamma,\Delta)R^{\mb{S5}_{D}}_{G}(\Theta,\Omega)$. Fix any $H\subseteq G$. It suffices to show $\inset{D_{H}\varphi}{D_{H}\varphi\in\Gamma}=\inset{D_{H}\varphi}{D_{H}\varphi\in\Theta}$. By assumption, we have $\inset{D_{H}\varphi}{D_{H}\varphi\in\Gamma}=\inset{D_{H}\varphi}{D_{H}\varphi\in\Pi}$ and $\inset{D_{H}\varphi}{D_{H}\varphi\in\Pi}=\inset{D_{H}\varphi}{D_{H}\varphi\in\Theta}$. Therefore, our goal holds.\qedhere
        \end{itemize}
    \end{enumerate}
\end{proof}

\begin{lemma}
\label{lem:saturated}
Let $(\Gamma,\Delta)$ be a $\Xi$-partial valuation in $\mathsf{G}^{\mt{a}}(\Lambda)$. Then $(\Gamma,\Delta)$ satisfies the following:
\begin{itemize}[itemsep=0pt, parsep=0pt, topsep=5pt]
    \item[$(\neg\Rightarrow)$] $\neg\varphi\in\Gamma$ implies $\varphi\in\Delta$;
    \item[$(\Rightarrow\neg)$] $\neg\varphi\in\Delta$ implies $\varphi\in\Gamma$;
    \item[$(\to\Rightarrow)$] $\varphi\to\psi\in\Gamma$ implies $\varphi\in\Delta$ or $\psi\in\Gamma$;
    \item[$(\Rightarrow\to)$] $\varphi\to\psi\in\Delta$ implies $\varphi\in\Gamma$ and $\psi\in\Delta$;
    \item[$(D)$] $D_{G}\varphi\in\Delta$ implies $\varphi\in\Sigma$ for some $(\Pi,\Sigma)\in W^{\Lambda}_{\Xi}$ such that $(\Gamma,\Delta)R^{\Lambda}_{G}(\Pi,\Sigma)$.
\end{itemize}
\end{lemma}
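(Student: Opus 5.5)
The four propositional clauses are routine and I would dispatch them first. Each one combines condition (2) of Definition~\ref{dfn:partial valuation} ($\Sub(\Gamma,\Delta)=\Gamma\cup\Delta$) with condition (1) (underivability). Take $(\neg\Rightarrow)$: if $\neg\varphi\in\Gamma$, then $\varphi\in\Sub(\Gamma,\Delta)=\Gamma\cup\Delta$. If $\varphi\in\Gamma$, then $\neg\varphi,\varphi\Rightarrow$ is derivable by $(\mathtt{id})$ and $(\neg\Rightarrow)$, so weakening yields $\Gamma\Rightarrow\Delta$, a contradiction. Hence $\varphi\in\Delta$. The clauses $(\Rightarrow\neg)$, $(\to\Rightarrow)$ and $(\Rightarrow\to)$ go the same way: the components lie in $\Gamma\cup\Delta$, and a wrong placement yields a derivable sequent, for instance $\varphi\to\psi,\varphi\Rightarrow\psi$, which weakens to $\Gamma\Rightarrow\Delta$. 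No cut is needed here.

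The real work is clause $(D)$. Suppose $D_G\varphi\in\Delta$. For $\Lambda\in\{\mb{K45}_D,\mb{KD45}_D\}$ I would set
\[
\Pi_0:=\inset{\psi,D_H\psi}{D_H\psi\in\Gamma,\ H\subseteq G},\qquad \Sigma_0:=\inset{D_H\psi}{D_H\psi\in\Delta,\ H\subseteq G}\cup\{\varphi\}.
\]
Suppose $\Pi_0\Rightarrow\Sigma_0$ were derivable in $\mathsf{G}^{\mt a}(\Lambda)$. Then one application of $(\Rightarrow D^{\mb{K45}_D})$ with principal formula $D_G\varphi$ (side conditions $H\subseteq G$ hold), followed by weakening, gives $\Gamma\Rightarrow\Delta$, which is impossible. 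Contraction may be needed first to make the antecedent have the exact form $\psi_i,D_{G_i}\psi_i$. So $\Pi_0\Rightarrow\Sigma_0$ is underivable, and $\Pi_0\cup\Sigma_0\subseteq\Sub(\Gamma,\Delta)\subseteq\Xi$. Lemma~\ref{lem:Lindenbaum a-cut} then gives a partial valuation $(\Pi^+,\Sigma^+)$ extending it with $\Pi^+\cup\Sigma^+=\Sub(\Pi_0,\Sigma_0)\subseteq\Gamma\cup\Delta$. Clearly $\varphi\in\Sigma^+$.

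It remains to check $(\Gamma,\Delta)R^{\Lambda}_G(\Pi^+,\Sigma^+)$, and I expect this to be the main obstacle. The forward inclusion is immediate from $\Pi_0\subseteq\Pi^+$. For the backward inclusion, let $D_H\chi\in\Pi^+$ with $H\subseteq G$. Then $D_H\chi\in\Gamma\cup\Delta$. If it were in $\Delta$, then it is in $\Sigma_0\subseteq\Sigma^+$, so $\Pi^+\Rightarrow\Sigma^+$ is derivable from $(\mathtt{id})$ by weakening, contradicting that $(\Pi^+,\Sigma^+)$ is a partial valuation. Hence $D_H\chi\in\Gamma$, which is exactly the argument used for seriality in Proposition~\ref{prop:well definiability D}. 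This is precisely where analytic rather than full cut elimination pays off: the $\Sub$-equality forces every boxed formula of the successor back into $\Gamma\cup\Delta$.

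For $\mb{S5}_D$ I would take $\Pi_0:=\inset{D_H\psi}{D_H\psi\in\Gamma,H\subseteq G}$ and $\Sigma_0$ as before. Underivability of $\Pi_0\Rightarrow\Sigma_0$ follows via $(\Rightarrow D^{\mb{S5}_D})$. The inclusion of $\Gamma$'s $D_H$-formulas in $\Pi^+$ is direct, and the reverse inclusion follows by the same $\Gamma\cup\Delta$ argument, giving the required equality of the $D_H$-parts for all $H\subseteq G$.
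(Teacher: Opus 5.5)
Your proposal is correct and follows essentially the same route as the paper. It uses the same witness sets (with the unboxed $\psi$'s for $\mb{K45}_D/\mb{KD45}_D$ and only the boxed formulas for $\mb{S5}_D$), underivability via one application of the $D$-rule plus structural rules, Lemma~\ref{lem:Lindenbaum a-cut}, and the seriality-style $\Sub$-equality argument that forces any $D_H\chi\in\Pi^+$ back into $\Gamma$. One small bookkeeping slip: to bring the set $\Pi_0$ into the exact multiset shape of the premise of $(\Rightarrow D^{\mb{K45}_{D}})$ you need weakening, not contraction, to duplicate formulas (e.g.\ $\psi$ when both $D_H\psi$ and $D_{H'}\psi$ lie in $\Gamma$). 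Contraction is needed only afterwards, e.g.\ for the two copies of $D_G\varphi$ in the conclusion.
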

\begin{proof}
We comment on item ($D$) for $\mb{K45}_{D}$ and $\mb{S5}_{D}$, because the other conditions are established easily (cf.~~\cite[Lemma 2]{Udatsu_2025_craig}). 
    \begin{itemize}[itemsep=0pt, parsep=0pt, topsep=5pt]
        \item Let $\Lambda=\mb{K45}_{D}$. Suppose $D_{G}\varphi\in\Delta$. Define 
    $\Pi:=\inset{\psi,D_{H}\psi}{D_{H}\psi\in\Gamma \text{ and }H\subseteq G}$ and\\  
    $\Sigma:=\inset{D_{I}\gamma}{D_{I}\gamma\in\Delta\text{ and }I\subseteq G}$.
    It follows that $\Pi\Rightarrow \Sigma,\varphi$ is underivable in $\mathsf{G}^{\mt{a}}(\mb{K45}_{D})$. Suppose not. 
    Then we obtain the following:
    \[
    \infer[(\Rightarrow D^{\mb{K45}_{D}})]{\Pi'\Rightarrow \Sigma,D_{G}\varphi}{\Pi\Rightarrow \Sigma,\varphi},
    \]
    where $\Pi'= \bigcup_{H \subseteq G}\inset{D_{H}\psi}{D_{H}\psi\in\Gamma}$, 
    which is contradicted to $(\Gamma,\Delta)$ being a $\Xi$-partial valuation by weakening rules. 
    So, $\mathsf{G}^{\mt{a}}(\mb{K45}_{D})\not\vdash\Pi\Rightarrow \Sigma,\varphi$. By Lemma~\ref{lem:Lindenbaum a-cut}, there exists a $\Xi$-partial valuation $(\Pi^{+},\Sigma^{+})$ such that $\Pi\subseteq\Pi^{+}$, $\Sigma\cup\{\varphi\}\subseteq\Sigma^{+}$ and $\Sub(\Pi, \Sigma,\varphi)=\Pi^{+}\cup\Sigma^{+}$. Then it suffices to show $(\Gamma,\Delta)R_{G}^{\mb{K45}_{D}}(\Pi^{+},\Sigma^{+})$, which can be proved similarly to the proof of Proposition~\ref{prop:well definiability D} (2). 
    \item Let $\Lambda=\mb{S5}_{D}$. Suppose $D_{G}\varphi\in\Delta$.
   Let $\Pi$ and $\Sigma$ be defined as:
    \[
    \Pi:=\inset{D_{H}\psi}{D_{H}\psi\in\Gamma \text{ and }H\subseteq G}\quad 
    \Sigma:=\inset{D_{I}\gamma}{D_{I}\gamma\in\Delta\text{ and }I\subseteq G}.
    \]
    It follows that $\Pi\Rightarrow\Sigma,\varphi$ is underivable in $\mathsf{G}^{\mt{a}}(\mb{S5}_{D})$, otherwise $\Gamma\Rightarrow\Delta$ is derivable by $(\Rightarrow D^{\mb{S5}_{D}})$ and weakening rules. Then by Lemma~\ref{lem:Lindenbaum a-cut}, there exists a $\Xi$-partial valuation $(\Pi^{+},\Sigma^{+})$ such that $\Pi\subseteq\Pi^{+}$, $\Sigma\cup\{\varphi\}\subseteq\Sigma^{+}$ and $\Sub(\Pi, \Sigma,\varphi)=\Pi^{+}\cup\Sigma^{+}$. Then it suffices to show $(\Gamma,\Delta)R_{G}^{\mb{S5}_{D}}(\Pi^{+},\Sigma^{+})$. Fix any $C\subseteq G$ and $\chi\in\Xi$. We show that $D_{C}\chi\in\Gamma$ iff $D_{C}\chi\in\Pi^{+}$. 
    ($\Rightarrow$) Suppose $D_{C}\chi\in\Gamma$. By definition of $\Pi$, $D_{C}\chi\in\Pi$. Therefore our goal $D_{C}\chi\in\Pi^{+}$ holds by $\Pi\subseteq\Pi^{+}$. ($\Leftarrow$) Suppose $D_{C}\chi\in\Pi^{+}$. Then $D_{C}\chi\in\Pi^{+}\cup\Sigma^{+}=\Sub(\Pi,\Sigma,\varphi)\subseteq\Sub(\Gamma,\Delta)=\Gamma\cup\Delta$. Our goal is to show that $D_{C}\chi\in\Gamma$. Suppose not. Then $D_{C}\chi\in\Delta$, i.e., $D_{C}\chi\in\Sigma\subseteq\Sigma^{+}$. Then $\Pi^{+}\Rightarrow\Sigma^{+}$ is derivable from $D_{C}\varphi\Rightarrow D_{C}\varphi$. Therefore our goal $D_{C}\chi\in\Gamma$ holds. \qedhere
    \end{itemize}
\end{proof}

By Lemma~\ref{lem:saturated}, we obtain the following. 

\begin{lemma}
\label{lem:truth lemma-D}
The following holds for all formulas $\varphi\in\Xi$ and $(\Gamma,\Delta)\in W^{\Lambda}_{\Xi}$;
\begin{center}
\begin{enumerate*}[itemsep=0pt, parsep=0pt, topsep=5pt, label=\arabic*., itemjoin={\quad}]
    \item $\varphi\in\Gamma$ implies $M^{\Lambda}_{\Xi},(\Gamma,\Delta)\models\varphi$; 
    \item $\varphi\in\Delta$ implies $M^{\Lambda}_{\Xi},(\Gamma,\Delta)\not\models\varphi$.
\end{enumerate*}
\end{center}
\end{lemma}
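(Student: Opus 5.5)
We prove items 1 and 2 simultaneously by induction on the construction of $\varphi\in\Xi$. Since $\Xi$ is subformula-closed, every immediate subformula of $\varphi$ is again in $\Xi$, so the induction hypothesis is available for it. We also use that every $(\Pi,\Sigma)\in W^{\Lambda}_{\Xi}$ is a $\Xi$-partial valuation.

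For the base cases:
\begin{itemize}
\item If $p\in\Gamma$, then $(\Gamma,\Delta)\in V^{\Lambda}_{\Xi}(p)$ by definition, so $p$ is true there.
\item If $p\in\Delta$, then $p\notin\Gamma$, since otherwise $p\Rightarrow p$ with weakenings would derive $\Gamma\Rightarrow\Delta$. Hence $p$ is false there.
\item $\bot\in\Gamma$ is impossible by axiom $(\bot)$ and weakening, so item 1 is vacuous for $\bot$, and item 2 is trivial.
\end{itemize}

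The Boolean cases follow directly from the first four clauses of Lemma~\ref{lem:saturated} together with the induction hypothesis. For instance, if $\varphi\to\psi\in\Gamma$, then $\varphi\in\Delta$ or $\psi\in\Gamma$. In the first case $\varphi$ is false, and in the second $\psi$ is true, so the implication is true either way. If $\varphi\to\psi\in\Delta$, then $\varphi\in\Gamma$ and $\psi\in\Delta$, which makes the implication false. The two negation clauses are handled the same way.

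Item 2 for $D_G\varphi$ is clause $(D)$ of Lemma~\ref{lem:saturated}. It gives an $R^{\Lambda}_G$-successor $(\Pi,\Sigma)$ with $\varphi\in\Sigma$, and the induction hypothesis shows $\varphi$ is false there. Hence $D_G\varphi$ is false at $(\Gamma,\Delta)$.

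The main step is item 1 for $D_G\varphi$. Assume $D_G\varphi\in\Gamma$ and let $(\Gamma,\Delta)R^{\Lambda}_G(\Pi,\Sigma)$. We must show $M^{\Lambda}_{\Xi},(\Pi,\Sigma)\models\varphi$, and by the induction hypothesis it suffices to get $\varphi\in\Pi$. We split by $\Lambda$:
\begin{itemize}
\item For $\mb{K45}_D$ and $\mb{KD45}_D$, take $H:=G\subseteq G$ in the definition of $R^{\Lambda}_G$. This gives $\{\psi,D_G\psi \mid D_G\psi\in\Gamma\}\subseteq\Pi$, so $\varphi\in\Pi$ directly.
\item For $\mb{S5}_D$, the relation only yields $D_G\varphi\in\Pi$. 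We then show $\varphi\in\Pi$. Since $(\Pi,\Sigma)$ is a partial valuation and $D_G\varphi\in\Pi$, we have $\varphi\in\Sub(\Pi,\Sigma)=\Pi\cup\Sigma$. If $\varphi\in\Sigma$, then $D_G\varphi\Rightarrow\varphi$ is derivable by $(D\Rightarrow)$ from $\varphi\Rightarrow\varphi$, and weakening yields $\Pi\Rightarrow\Sigma$, a contradiction. So $\varphi\in\Pi$.
\end{itemize}

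This $\mb{S5}_D$ case is the only place where the condition $\Sub(\Gamma,\Delta)=\Gamma\cup\Delta$ and the rule $(D\Rightarrow)$ are essential, and it is where I would expect any difficulty to lie. I expect it to go through as described.
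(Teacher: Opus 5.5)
Your proposal is correct and follows essentially the same route as the paper: induction on $\varphi$, with item 2 for $D_G\varphi$ obtained from clause $(D)$ of Lemma~\ref{lem:saturated}, the $\mb{K45}_D$/$\mb{KD45}_D$ case of item 1 read off from the definition of $R^{\Lambda}_G$ at $H=G$, and the $\mb{S5}_D$ case settled by $\Sub(\Pi,\Sigma)=\Pi\cup\Sigma$ together with $(D\Rightarrow)$ and weakening. You also spell out the atomic and Boolean cases, which the paper leaves implicit.
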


\begin{proof}
    By induction on $\varphi$. We comment on the case where $\varphi\equiv D_{G}\psi$.
        \begin{enumerate}[itemsep=0pt, parsep=0pt, topsep=5pt]
        \item Suppose $D_{G}\psi\in\Gamma$. We show that $M^{\Lambda}_{\Xi},(\Gamma,\Delta)\models D_{G}\psi$. Fix any $(\Pi,\Sigma)\in W^{\Lambda}_{\Xi}$ and suppose 
        \[(\Gamma,\Delta)R_{G}^{\Lambda}(\Pi,\Sigma).\] We show that $M^{\Lambda}_{\Xi},(\Pi,\Sigma)\models\psi$. We comment on the case for $\mb{K45}_{D}$ and $\mb{S5}_{D}$.
        \begin{itemize}
            \item $(\mb{K45}_{D})$ We show $M^{\mb{K45}_{D}}_{\Xi},(\Pi,\Sigma)\models\psi$. By induction hypothesis, it suffices to show $\psi\in\Pi$, which follows from Definition~\ref{dfn:canonical pseudo model} and $D_{G}\psi\in\Gamma$.
            \item $(\mb{S5}_{D})$ We show $M^{\mb{S5}_{D}}_{\Xi},(\Gamma,\Delta)\models D_{G}\psi$. Fix any $(\Pi,\Sigma)\in W^{\mb{S5}_{D}}_{\Xi}$ and suppose $(\Gamma,\Delta)R_{G}^{\mb{S5}_{D}}(\Pi,\Sigma)$. Then $D_{G}\psi\in\Gamma$ implies $D_{G}\psi\in\Pi$. Then $\psi\in\Sub(D_{G}\psi)\subseteq\Sub(\Pi,\Sigma)=\Pi\cup\Sigma$. If $\psi\in\Pi$, our goal $M^{\mb{K45}_{D}}_{\Xi},(\Pi,\Sigma)\models\psi$ follows from induction hypothesis. So suppose $\psi\notin\Pi$ for contradiction. Then $\psi\in\Sigma$, hence $\Pi\Rightarrow\Sigma$ is derivable by
        \[
        \infer=[(w)]{\Pi\Rightarrow\Sigma}{\infer[(D\Rightarrow)]{D_{G}\varphi\Rightarrow\varphi}{\varphi\Rightarrow\varphi}},
        \]
        which contradicts $(\Pi,\Sigma)$ being a $\Xi$-partial valuation. Therefore our goal $\psi\in\Pi$ holds.
        \end{itemize}
         \item Suppose $D_{G}\psi\in\Delta$. We show $M^{\Lambda}_{\Xi},(\Gamma,\Delta)\not\models D_{G}\psi$. By Lemma~\ref{lem:saturated}, we can find some $(\Pi,\Sigma)\in W^{\Lambda}_{\Xi}$ such that $(\Gamma,\Delta)R_{G}^{\Lambda}(\Pi,\Sigma)$ and $\psi\in\Sigma$. By induction hypothesis, $\psi\in\Sigma$ implies $M^{\Lambda}_{\Xi},(\Pi,\Sigma)\not\models\psi$. Therefore our goal $M^{\Lambda}_{\Xi},(\Gamma,\Delta)\not\models D_{G}\psi$ holds.\qedhere
    \end{enumerate}
\end{proof}

\begin{theorem}[Completeness of $\mathsf{G}^{\mt{a}}(\Lambda)$]
\label{thm:completeness-D}
For any sequent $\Gamma\Rightarrow\Delta$, if $\bigwedge\Gamma\to\bigvee\Delta$ is valid in {\em all} pseudo frames which satisfies the corresponding properties to $\Lambda$, then $\mathsf{G}^{\mt{a}}(\Lambda)\vdash\Gamma\Rightarrow\Delta$.
\end{theorem}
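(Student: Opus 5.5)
We argue by contraposition, using the canonical pseudo model of Definition~\ref{dfn:canonical pseudo model}. Suppose $\mathsf{G}^{\mt{a}}(\Lambda)\not\vdash\Gamma\Rightarrow\Delta$. Let $\Xi:=\Sub(\Gamma,\Delta)$. This set is finite and subformula-closed, and $\Gamma\cup\Delta\subseteq\Xi$, so the standing assumptions on $\Xi$ hold. Since the sequent is a pair of multisets while partial valuations are pairs of sets, we first pass to the underlying sets. Contraction and weakening show that $\Gamma\Rightarrow\Delta$ is derivable iff the sequent of underlying sets is derivable, so we may regard $\Gamma,\Delta$ as sets.

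By Lemma~\ref{lem:Lindenbaum a-cut}, there is a $\Xi$-partial valuation $(\Gamma^{+},\Delta^{+})$ with $\Gamma\subseteq\Gamma^{+}$ and $\Delta\subseteq\Delta^{+}$. In particular $(\Gamma^{+},\Delta^{+})\in W^{\Lambda}_{\Xi}$, so this set of states is non-empty.

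We then consider the pseudo model $M^{\Lambda}_{\Xi}$. By Proposition~\ref{prop:well definiability D}, its underlying frame $F^{\Lambda}_{\Xi}$ is a genuine pseudo frame, since $G\subseteq H$ implies $R^{\Lambda}_{H}\subseteq R^{\Lambda}_{G}$. The same proposition shows that $F^{\Lambda}_{\Xi}$ satisfies the frame conditions belonging to $\Lambda$:
\begin{itemize}
\item transitivity and Euclideanness of every $R_{G}$ for $\mb{K45}_{D}$;
\item in addition, seriality of each $R_{\{a\}}$ for $\mb{KD45}_{D}$;
\item reflexivity, symmetry and transitivity for $\mb{S5}_{D}$.
\end{itemize}
Hence $F^{\Lambda}_{\Xi}\in\mathbb{P}_{\Lambda}$.

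Now apply the truth lemma (Lemma~\ref{lem:truth lemma-D}) at the state $(\Gamma^{+},\Delta^{+})$. Every $\varphi\in\Gamma\subseteq\Gamma^{+}$ is true there, and every $\psi\in\Delta\subseteq\Delta^{+}$ is false there. Therefore $M^{\Lambda}_{\Xi},(\Gamma^{+},\Delta^{+})\not\models\bigwedge\Gamma\to\bigvee\Delta$. So $\bigwedge\Gamma\to\bigvee\Delta$ fails in a pseudo frame satisfying the properties of $\Lambda$, and contraposition gives the theorem.

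The substantive work is already contained in the lemmas above; the key step is the truth lemma, which rests on the existence part $(D)$ of Lemma~\ref{lem:saturated}. The only point needing attention in this proof is the edge case $\Gamma=\Delta=\emptyset$. There $\Xi=\emptyset$ and $(\emptyset,\emptyset)$ is a partial valuation, because the empty sequent is underivable by soundness (Proposition~\ref{prop:soundness-D}). The argument then goes through unchanged, with $\bigwedge\emptyset\to\bigvee\emptyset$ being $\top\to\bot$, which is false at that state.
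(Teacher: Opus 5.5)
Your proof is correct and follows the paper's argument: take $\Xi=\Sub(\Gamma,\Delta)$, extend to a $\Xi$-partial valuation $(\Gamma^{+},\Delta^{+})$ by Lemma~\ref{lem:Lindenbaum a-cut}, and refute $\bigwedge\Gamma\to\bigvee\Delta$ at that state of $M^{\Lambda}_{\Xi}$ via Lemma~\ref{lem:truth lemma-D}. Your extra bookkeeping makes explicit what the paper leaves implicit: the reduction from multisets to sets, the use of Proposition~\ref{prop:well definiability D} to place $F^{\Lambda}_{\Xi}$ in $\mathbb{P}_{\Lambda}$, and the empty-sequent case. In that last case the appeal to soundness is redundant, since underivability is already your contrapositive hypothesis.
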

\begin{proof}
    We show the contraposition. Suppose $\mathsf{G}^{\mt{a}}(\Lambda)\not\vdash\Gamma\Rightarrow\Delta$. Let $\Xi:=\Sub(\Gamma,\Delta)$. By Lemma~\ref{lem:Lindenbaum a-cut}, there exists a $\Xi$-partial valuation $(\Gamma^{+},\Delta^{+})$ such that $\Gamma\subseteq\Gamma^{+}$, $\Delta\subseteq\Delta^{+}$ and $\Sub(\Gamma,\Delta)=\Gamma^{+}\cup\Delta^{+}$. By Lemma~\ref{lem:truth lemma-D}, $M^{\Lambda}_{\Xi},(\Gamma^{+},\Delta^{+})\not\models\bigwedge\Gamma\to\bigvee\Delta$. Therefore our goal holds.
\end{proof}

As a corollary of Theorem~\ref{thm:completeness-D}, we can show that $\Ge{\Lambda}$ enjoys the analytic cut property.

\begin{corollary}
\label{cor:a-cut}
If a sequent $\Gamma \Rightarrow \Delta$ is derivable in $\Ge{\Lambda}$, then it is also derivable in  $\mathsf{G}^{\mathtt{a}}(\Lambda)$.    
\end{corollary}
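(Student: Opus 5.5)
The plan is to route the argument through semantics, combining soundness of the full calculus with the completeness theorem for the analytic calculus. Suppose $\Gamma\Rightarrow\Delta$ is derivable in $\Ge{\Lambda}$. We must show that $\bigwedge\Gamma\to\bigvee\Delta$ is valid in every pseudo frame satisfying the properties for $\Lambda$, that is, in every frame of $\mathbb{P}_{\Lambda}$. Theorem~\ref{thm:completeness-D} then yields $\mathsf{G}^{\mt{a}}(\Lambda)\vdash\Gamma\Rightarrow\Delta$ directly.

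For the validity step there are two options.

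\textbf{Option 1 (via the Fact).} Proposition~\ref{prop:soundness-D} gives validity of $\bigwedge\Gamma\to\bigvee\Delta$ in $\mathbb{F}_{\Lambda}$. The Fact (item 3 implies item 1) then makes it a theorem of $\Hi{\Lambda}$, and the soundness direction of the Fact (item 1 implies item 2) gives validity in $\mathbb{P}_{\Lambda}$.

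\textbf{Option 2 (direct).} Redo the induction of Proposition~\ref{prop:soundness-D} with pseudo models in place of models. This is the more self-contained route, and I would at least check that it goes through. The cases to verify are:
\begin{itemize}
\item The $\mathbf{LK_0}$ rules and $(Cut)$ are handled by the usual Boolean reasoning.
\item For $(\Rightarrow D^{\mb{K45}_{D}})$, the argument in Proposition~\ref{prop:soundness-D} used only three facts: $R_G\subseteq R_{G_i}$, transitivity of $R_{G_i}$, and Euclideanness of $R_{H_j}$ (to pass from $wR_Gv$ and $wR_{H_j}u_j$ to $vR_{H_j}u_j$, using $R_G\subseteq R_{H_j}$ as well). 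All of these are available in $\mathbb{P}_{\Lambda}$ via the monotonicity condition $G\subseteq H\Rightarrow R_H\subseteq R_G$.
\item For $(\Rightarrow D^{\mb{S5}_D})$, the same argument works with reflexivity.
\item $(D\Rightarrow)$ uses reflexivity of $R_G$.
\item $(D^{\mb{KD45}_D})$ uses seriality of $R_{\{a\}}$, which is explicitly required in $\mathbb{P}_{\mb{KD45}_{D}}$. Given $w$ satisfying $D_{\{a\}}\Gamma$, pick $v$ with $wR_{\{a\}}v$. By transitivity $v$ satisfies $\Gamma$ and $D_{\{a\}}\Gamma$, so some $D_{\{a\}}\delta$ holds at $v$. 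Euclideanness transfers it back to $w$: any $wR_{\{a\}}u$ gives $vR_{\{a\}}u$.
\end{itemize}

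The only delicate point is this last case, since seriality fails for larger groups; but the rule mentions only singleton $\{a\}$, matching exactly the restriction in $\mathbb{P}_{\mb{KD45}_{D}}$. I therefore expect no real obstacle. The corollary follows by chaining soundness over $\mathbb{P}_{\Lambda}$ with Theorem~\ref{thm:completeness-D}.
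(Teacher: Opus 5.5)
Your proposal is correct and follows the paper's route: soundness of $\Ge{\Lambda}$, then Theorem~\ref{thm:completeness-D}. The paper's proof cites only Proposition~\ref{prop:soundness-D}, which gives validity in $\mathbb{F}_{\Lambda}$, and then applies Theorem~\ref{thm:completeness-D}, leaving implicit the passage to $\mathbb{P}_{\Lambda}$ that the theorem's hypothesis requires; your Option~1 fills this gap via the Fact, and your Option~2 fills it without the externally cited completeness of $\Hi{\Lambda}$ by checking soundness directly over pseudo frames, which you do correctly, including the singleton-seriality case for $(D^{\mb{KD45}_D})$.
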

\begin{proof}
Suppose $\Ge{\Lambda}\vdash\Gamma \Rightarrow \Delta$. By Proposition~\ref{prop:soundness-D}, $\bigwedge\Gamma\to\bigvee\Delta$ is valid in $\mathbb{F}_{\Lambda}$. Therefore, our goal $\mathsf{G}^{\mt{a}}(\Lambda)\vdash\Gamma\Rightarrow\Delta$ holds by Theorem~\ref{thm:completeness-D}.
\end{proof}


\section{Craig Interpolation Theorem}
\label{sec:CIT}
This section establishes the Craig interpolation theorem for $\Ge{\Lambda}$ using the analytic cut property via Maehara's method, originally introduced in~\cite{Maehara_1961_craig}. An application of this method to basic modal logic can also be found in~\cite{Ono_1998_prooftheoretic}. First, we introduce some necessary syntactic notions. As in the previous section, we assume in what follows that $\Lambda \in \{\mathbf{K45}_{D}, \mathbf{KD45}_{D}, \mathbf{S5}_{D}\}$.

\begin{definition}
A {\em partition} for a sequent $\Gamma\Rightarrow\Delta$ is defined as a tuple $\langle(\Gamma_{1}:\Delta_{1});(\Gamma_{2}:\Delta_{2})\rangle$ such that $\Gamma=\Gamma_{1}, \Gamma_{2}$ and $\Delta=\Delta_{1},\Delta_{2}$.
\end{definition}

\begin{definition}
\label{dfn:Prop-Agt-D}
For a formula $\varphi\in L$, we define $\mt{Prop}(\varphi)$ as the set of all propositional variables appearing in $\varphi$ and $\Agt{\varphi}$ as the set of all agents appearing in $\varphi$. For any multiset $\Gamma$ of formulas, $\Prop{\Gamma}$ and $\Agt{\Gamma}$ are defined as $\bigcup_{\varphi\in\Gamma}\Prop{\varphi}$ and $\bigcup_{\varphi\in\Gamma}\Agt{\varphi}$, respectively.
\end{definition}

In particular, it is noted that $\Agt{D_{G}\varphi}  =\Agt{\varphi}\cup G$. The following lemma is immediate.

\begin{lemma}
\label{lem:craig-sub-D}
If $\varphi\in\Sub(\psi)$, then $\Prop{\varphi}\subseteq\Prop{\psi}$ and $\Agt{\varphi}\subseteq\Agt{\psi}$.
\end{lemma}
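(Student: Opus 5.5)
The plan is a structural induction on $\psi$, proving both inclusions at once. Fix $\varphi$ and let the claim be: for every formula $\psi$, if $\varphi\in\Sub(\psi)$ then $\Prop{\varphi}\subseteq\Prop{\psi}$ and $\Agt{\varphi}\subseteq\Agt{\psi}$. The argument uses only the usual recursive definition of $\Sub$, in particular $\Sub(D_{G}\chi)=\Sub(\chi)\cup\{D_{G}\chi\}$, and the recursive definitions of $\Prop{\cdot}$ and $\Agt{\cdot}$. The key clause here is $\Agt{D_{G}\chi}=\Agt{\chi}\cup G$, noted after Definition~\ref{dfn:Prop-Agt-D}.

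In the base cases $\psi\equiv p$ and $\psi\equiv\bot$, we have $\Sub(\psi)=\{\psi\}$. So $\varphi\equiv\psi$ and both inclusions are trivial. In each inductive case, $\varphi\in\Sub(\psi)$ means either $\varphi\equiv\psi$, which is again trivial, or $\varphi$ is a subformula of an immediate component of $\psi$.

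For $\psi\equiv\neg\chi$, the induction hypothesis gives $\Prop{\varphi}\subseteq\Prop{\chi}=\Prop{\psi}$, and the same holds for agents. For $\psi\equiv\chi_{1}\to\chi_{2}$, we get $\varphi\in\Sub(\chi_{i})$ for some $i$. Then $\Prop{\varphi}\subseteq\Prop{\chi_{i}}\subseteq\Prop{\chi_{1}}\cup\Prop{\chi_{2}}=\Prop{\psi}$, and agents work the same way. For $\psi\equiv D_{G}\chi$, we get $\varphi\in\Sub(\chi)$. Then $\Prop{\varphi}\subseteq\Prop{\chi}=\Prop{D_{G}\chi}$ and $\Agt{\varphi}\subseteq\Agt{\chi}\subseteq\Agt{\chi}\cup G=\Agt{D_{G}\chi}$.

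No step is a real obstacle. The only point needing care is the modal case: passing to a subformula can drop the group label $G$ but can never add agents, and this is exactly what the inclusion $\Agt{\chi}\subseteq\Agt{\chi}\cup G$ records. An equivalent route is to note that $\Sub$ is transitive and that every immediate subformula satisfies both inclusions, then chain the inclusions.
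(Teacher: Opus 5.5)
Your proposal is correct, and the structural induction on $\psi$ is the routine argument the paper has in mind when it calls this lemma immediate; the paper itself gives no explicit proof. You also handle the one non-Boolean clause correctly: in the modal case, $\Agt{D_{G}\chi}=\Agt{\chi}\cup G$ gives $\Agt{\chi}\subseteq\Agt{D_{G}\chi}$.
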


The following is the key lemma for the Craig interpolation theorem.

\begin{lemma}
\label{lem:craig-D}
Suppose $\Ge{\Lambda}\vdash\Gamma\Rightarrow\Delta$. Then for any partition $\langle(\Gamma_{1}:\Delta_{1});(\Gamma_{2}:\Delta_{2})\rangle$ for the sequent $\Gamma\Rightarrow\Delta$, there exists a formula $\varphi$ $($called {\em interpolant}$)$ that satisfies the following:
\begin{itemize}
    \item[$(1)$] $\Ge{\Lambda}\vdash\Gamma_{1}\Rightarrow\Delta_{1},\varphi$ and $\Ge{\Lambda}\vdash\varphi, \Gamma_{2}\Rightarrow\Delta_{2}$; 
    \item[$(2)$] $\Prop{\varphi}\subseteq\Prop{\Gamma_{1},\Delta_{1}}\cap\Prop{\Gamma_{2},\Delta_{2}}$; 
    \item[$(3)$] $\Agt{\varphi}\subseteq\Agt{\Gamma_{1},\Delta_{1}}\cap\Agt{\Gamma_{2},\Delta_{2}}$.
\end{itemize}
\end{lemma}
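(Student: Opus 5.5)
By Corollary~\ref{cor:a-cut}, any sequent derivable in $\Ge{\Lambda}$ is derivable in $\mathsf{G}^{\mt{a}}(\Lambda)$. So it suffices to prove the statement for $\mathsf{G}^{\mt{a}}(\Lambda)$-derivations; the two conclusions in (1) then hold a fortiori in $\Ge{\Lambda}$. We proceed by Maehara's method: induction on the height of an $\mathsf{G}^{\mt{a}}(\Lambda)$-derivation of $\Gamma\Rightarrow\Delta$, proving the claim simultaneously for all partitions of the end-sequent. For the interpolant we allow $\bot$ and $\top$ (i.e.\ $\neg\bot$), which have no variables and no agents.

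\textbf{Axioms and propositional rules.} For $(\mathtt{id})$ and $(\bot)$ we inspect the four possible partitions and take $\varphi$, $\bot$, $\top$ or $\neg\varphi$ as usual. Weakening, contraction and the propositional rules are handled as in $\mathbf{LK}$, using Lemma~\ref{lem:craig-sub-D}:
\begin{itemize}
\item for $(\to\Rightarrow)$ we form the disjunction or conjunction of the two interpolants, depending on which side the principal formula lies;
\item for unary rules the interpolant is unchanged, possibly negated when the principal formula crosses sides.
\end{itemize}

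\textbf{Cut.} Since the cut is analytic, the cut formula $\chi$ lies in $\Sub(\Gamma,\Pi,\Delta,\Sigma)$. This does not by itself place $\chi$'s vocabulary on one side, so we proceed as in Takano/Ono. Given a partition of the conclusion, we ask whether $\chi\in\Sub$ of the first component $\Gamma_{1},\Delta_{1}$.
\begin{itemize}
\item If so, we treat $\chi$ as belonging to the first component in both premises. We partition the left premise with $\chi$ on side 1 and the right premise with $\chi$ on side 1, obtaining interpolants $\theta_{1},\theta_{2}$. Then $\theta_{1}\lor\theta_{2}$ serves: cut on $\chi$ on side 1 yields $\Gamma_{1}\Rightarrow\Delta_{1},\theta_{1},\theta_{2}$. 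This cut is analytic because $\chi\in\Sub(\Gamma_{1},\Delta_{1})$. Combining $\theta_{1}\Rightarrow\ldots$ and $\theta_{2}\Rightarrow\ldots$ on side 2 gives the other half.
\item Otherwise $\chi\in\Sub$ of the second component, and we do the symmetric construction with $\theta_{1}\land\theta_{2}$.
\end{itemize}
The vocabulary conditions (2) and (3) hold because $\chi$'s symbols then come from one side only, by Lemma~\ref{lem:craig-sub-D}.

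\textbf{Modal rules.} Take $(\Rightarrow D^{\mb{K45}_{D}})$ with principal formula $D_{G}\chi$, and partition the conclusion. We split into cases by where $D_{G}\chi$ lies.

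\emph{Case 1: $D_{G}\chi\in\Delta_{1}$.} We use the induction hypothesis on the premise with the induced partition: the $\varphi_i$, $D_{G_i}\varphi_i$ and $D_{H_j}\psi_j$ go to the side of their boxed originals, and $\chi$ goes to side 1. This yields $\theta$. Since everything on side 2 of the premise is of the form $\varphi_i, D_{G_i}\varphi_i$ or $D_{H_j}\psi_j$, we take the interpolant $\neg D_{K}\neg\theta$, where
\[
K=\textstyle\bigcup\{G_{i}\mid \text{on side 2}\}\cup\bigcup\{H_{j}\mid \text{on side 2}\}.
\]
Applying the rule to $\theta,\Gamma_{2}\text{-part}\Rightarrow\Delta_{2}\text{-part}$, after moving $\theta$ to the right as $\neg\theta$, gives $\Gamma_{2}\Rightarrow\Delta_{2},D_{K}\neg\theta$, since all subscripts there are $\subseteq K$.
\begin{itemize}
\item On side 1 we must derive $\Gamma_{1}\Rightarrow\Delta_{1},\neg D_{K}\neg\theta$, i.e.\ $D_{K}\neg\theta,\Gamma_{1}\Rightarrow\Delta_{1}$ with $D_{G}\chi\in\Delta_{1}$.
\item In $\mb{K45}_{D}$ the rule permits a boxed formula $D_{K}\neg\theta$ on the left only when $K\subseteq G$, which holds. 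It also requires the premise $\neg\theta, D_K\neg\theta,\ldots\Rightarrow\ldots,\chi$, obtainable from $\Gamma_1$-part $\Rightarrow \Delta_1$-part$,\chi,\theta$ by weakening.
\end{itemize}

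\emph{Degenerate subcase and agent condition.} If side 2 is empty, we use $\bot$ and $\top$ instead, since then $K=\emptyset$ is not a group. Agents in $K$ occur in side 2, and also in side 1 because $K\subseteq G\subseteq\Agt{D_{G}\chi}$. This is precisely where condition (3) on agents is delicate, and it is why $K$ must be formed only from side-2 subscripts.

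\emph{Case 2: $D_{G}\chi\in\Delta_{2}$.} This is symmetric, with interpolant $D_{K}\theta$ and $K$ built from side-1 subscripts. Again $K\subseteq G$, so the agents are shared.

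\emph{The other rules.}
\begin{itemize}
\item $(\Rightarrow D^{\mb{S5}_D})$ is analogous, without the unboxed copies.
\item $(D\Rightarrow)$ keeps the interpolant unchanged.
\item $(D^{\mb{KD45}_D})$ takes interpolant $D_{\{a\}}\theta$ or $\neg D_{\{a\}}\neg\theta$. When one side has no boxed formulas, we use $\bot$ or $\top$ via the rule applied with empty context, which needs care.
\end{itemize}

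\textbf{Main obstacle.} The hardest step is the modal case, specifically the bookkeeping of the group $K$ so that both the side conditions $G_i,H_j\subseteq G$ of the modal rules and the agent-sharing condition (3) hold. A secondary difficulty is ensuring that the newly introduced cuts remain analytic. If that fails, we fall back on deriving (1) only in $\Ge{\Lambda}$, which the statement permits.
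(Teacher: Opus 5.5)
Your proposal is correct and takes essentially the same route as the paper. You reduce to $\mathsf{G}^{\mt{a}}(\Lambda)$ via Corollary~\ref{cor:a-cut} and run Maehara's induction. In the analytic-cut case you place the cut formula on a side of whose subformulas it is a member, with interpolant $\theta_1\lor\theta_2$ or $\theta_1\land\theta_2$. In the $(\Rightarrow D^{\mb{K45}_{D}})$ case you use $\neg D_K\neg\theta$ (resp.\ $D_K\theta$), where $K$ is the union of the subscripts on the side not containing $D_G\chi$, and $\bot$/$\top$ when that side is empty. Your fallback worry about the newly introduced cuts is unnecessary: as you yourself observe they are analytic, and in any case item (1) only requires derivability in $\Ge{\Lambda}$.
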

\begin{proof}
    In what follows, we let $\Lambda$ be $\mathbf{K45}_{D}$. This is because the cases where $\Lambda$ is $\mathbf{KD45}_{D}$ or $\mathbf{S5}_{D}$ can be handled using an argument similar to that for $\mathbf{K45}_{D}$. By Corollary~\ref{cor:a-cut}, we can assume that $\Gamma\Rightarrow\Delta$ has a derivation $\D$ in $\mathsf{G}^{\mathtt{a}}(\Lambda)$. We prove Lemma~\ref{lem:craig-D} by induction on $\D$. Let the height of $\D$ be $n$.  For the base case, our argument is standard (the reader is referred to~\cite[Lemma 36]{Ono_1998_prooftheoretic}). For  the inductive step, let $n>0$. First, we suppose $\Gamma_{1}=\Delta_{1}=\emptyset$ or $\Gamma_{2}=\Delta_{2}=\emptyset$. Then the interpolant is $\neg\bot$ or $\bot$. Hence, in what follows, we suppose it is not the case that $\Gamma_{1}=\Delta_{1}=\emptyset$ or $\Gamma_{2}=\Delta_{2}=\emptyset$, i.e., $\Gamma_{1}\cup\Delta_{1}\ne\emptyset$ and $\Gamma_{2}\cup\Delta_{2}\ne\emptyset$. We divide our argument depending on the last applied rule to obtain $\Gamma\Rightarrow\Delta$. We comment on the cases where the last applied rule is an analytic cut or $(\Rightarrow D^{\mathbf{K45}_{D}})$. Although an argument for the case of analytic cut is discussed already in~\cite[Section~6.4]{Ono_1998_prooftheoretic}, we provide the details here in order to make the proof self-contained.
    For other cases, the reader is referred to~\cite[Lemma 34]{Ono_1998_prooftheoretic} for propositional connectives.
    \begin{itemize}
    \item Suppose that the last applied rule is an analytic cut:
\[
\infer[(Cut)_\varphi]{\Gamma_{1},\Gamma_{2}, \Pi_{1},\Pi_{2} \Rightarrow \Delta_{1},\Delta_{2}, \Sigma_{1},\Sigma_{2}}{\Gamma_{1},\Gamma_{2} \Rightarrow \Delta_{1},\Delta_{2}, \varphi & \varphi, \Pi_{1},\Pi_{2} \Rightarrow \Sigma_{1},\Sigma_{2}},
\]
    where $\varphi\in\Sub(\Gamma_{1},\Gamma_{2}, \Pi_{1},\Pi_{2}, \Delta_{1},\Delta_{2}, \Sigma_{1},\Sigma_{2})$. The partition is of the form $$\Part{\Gamma_{1},\Pi_{1}:\Delta_{1},\Sigma_{1}}{\Gamma_{2},\Pi_{2}:\Delta_{2},\Sigma_{2}}.$$ 
    We divide the argument into the following cases:
    \[\varphi\in\Sub(\Gamma_{1},\Pi_{1},\Delta_{1},\Sigma_{1})\text{ and }\varphi\in\Sub(\Gamma_{2},\Pi_{2},\Delta_{2},\Sigma_{2}).
    \]
    We focus only on the former case.    
    Suppose $\varphi\in\Sub(\Gamma_{1},\Pi_{1},\Delta_{1},\Sigma_{1})$. By induction hypothesis, we can find formulas $\psi_{1}$ and $\psi_{2}$ such that 
    \begin{itemize}
        \item both $\Gamma_{1},\Rightarrow\Delta_{1},\varphi,\psi_{1}$ and $\psi_{1},\Gamma_{2}\Rightarrow\Delta_{2}$ are derivable in $\Ge{\Lambda}$ and $\X{\psi_{1}}\subseteq\X{\Gamma_{1},\Delta_{1},\varphi}\cap\X{\Gamma_{2},\Delta_{2}}$ for all $\mathtt{X}\in \setof{\mathsf{Prop}, \mathsf{Agt}}$; 
        \item both $\varphi,\Pi_{1},\Rightarrow\Sigma_{1},\psi_{2}$ and $\psi_{2},\Pi_{2}\Rightarrow\Sigma_{2}$ are derivable in $\Ge{\Lambda}$ and $\X{\psi_{2}}\subseteq\X{\varphi,\Pi_{1},\Sigma_{1}}\cap\X{\Pi_{2},\Sigma_{2}}$ for all $\mathtt{X}\in \setof{\mathsf{Prop}, \mathsf{Agt}}$.  
    \end{itemize}  
     We show that $\psi_{1}\lor\psi_{2}:=\neg\psi_{1}\to\psi_{2}$ is a desired interpolant. The derivability condition is obtained as follows: 
    \[
    \infer[(\Rightarrow\to)]{\Gamma_{1},\Pi_{1}\Rightarrow\Delta_{1},\Sigma_{1},\neg\psi_{1}\to\psi_{2}}{\infer[(\neg\Rightarrow)]{\neg\psi_{1},\Gamma_{1},\Pi_{1}\Rightarrow\Delta_{1},\Sigma_{1},\psi_{2}}{\infer[(Cut)^{\mt{a}}_{\varphi}]{\Gamma_{1},\Pi_{1}\Rightarrow\Delta_{1},\Sigma_{1},\psi_{1},\psi_{2}}{\Gamma_{1},\Rightarrow\Delta_{1},\varphi,\psi_{1} & \varphi,\Pi_{1},\Rightarrow\Sigma_{1},\psi_{2}}}},
    \quad
    \infer[(\to\Rightarrow)]{\neg\psi_{1}\to\psi_{2},\Gamma_{2},\Pi_{2}\Rightarrow\Delta_{2},\Sigma_{2}}{\infer[(\Rightarrow\neg)]{\Gamma_{2}\Rightarrow\Delta_{2},\neg\psi_{1}}{\psi_{1},\Gamma_{2}\Rightarrow\Delta_{2}} & \psi_{2},\Pi_{2}\Rightarrow\Sigma_{2}},
    \]
    where the $(Cut)$ is analytic by our assumption $\varphi\in\Sub(\Gamma_{1},\Pi_{1},\Delta_{1},\Sigma_{1})$.
    Moreover the variable and agent conditions are verified as follows. 
    Let $\mathsf{X} \in \{ \mathsf{Prop}, \mathsf{Agt}\}$. We proceed as follows: 
    \begin{align*}   \X{\neg\psi_{1}\to\psi_{2}}&=\X{\psi_{1}}\cup\X{\psi_{2}},\\
    &\subseteq(\X{\Gamma_{1},\Delta_{1},\varphi}\cap\X{\Gamma_{2},\Delta_{2}})\cup(\X{\varphi,\Pi_{1},\Sigma_{1}}\cap\X{\Pi_{2},\Sigma_{2}})\\
    &\subseteq\X{\Gamma_{1},\Pi_{1},\Delta_{1},\Sigma_{1},\varphi}\cap\X{\Gamma_{2},\Pi_{2},\Delta_{2},\Sigma_{2}}.
    \end{align*}
    By Lemma~\ref{lem:craig-sub-D}, $\varphi\in\Sub(\Gamma_{1},\Pi_{1},\Delta_{1},\Sigma_{1})$ implies $\X{\varphi}\subseteq\X{\Gamma_{1},\Pi_{1},\Delta_{1},\Sigma_{1}}$.  Therefore, we obtain $\X{\neg\psi_{1}\to\psi_{2}}\subseteq\X{\Gamma_{1},\Pi_{1},\Delta_{1},\Sigma_{1}}\cap\X{\Gamma_{2},\Pi_{2},\Delta_{2},\Sigma_{2}}$. 
   \item Suppose that the last applied rule is $(\Rightarrow D^{\mb{K45}_{D}})$:
    \[
    {
\infer[(\Rightarrow D^{\mb{K45}_{D}})]{\overrightarrow{D_{G_{1x}}\varphi_{1x}},\overrightarrow{D_{G_{2y}}\varphi_{2y}}\Rightarrow \overrightarrow{D_{H_{1a}}\psi_{1a}},\overrightarrow{D_{H_{2b}}\psi_{2b}}, D_{G}\chi}{\overrightarrow{\varphi_{1x}},\overrightarrow{\varphi_{2y}},\overrightarrow{D_{G_{1x}}\varphi_{1x}},\overrightarrow{D_{G_{2y}}\varphi_{2y}}\Rightarrow \overrightarrow{D_{H_{1a}}\psi_{1a}},\overrightarrow{D_{H_{2b}}\psi_{2b}}, \chi}}{},
\]
where $\overrightarrow{\gamma_{iz}}$ denotes a list $\gamma_{i1},\dots,\gamma_{im_{z}}$ of formulas and $G_{1x}$, $H_{1a}$, $G_{2y}$, $H_{2b} \subseteq G$ hold for all indices. 
There are the following two possible forms of the partitions of 
the conclusion of the rule:
\begin{small}
   \begin{enumerate}[itemsep=0pt, parsep=0pt, topsep=5pt, label=(\alph*)]
    \item $\Part{\overrightarrow{D_{G_{1x}}\varphi_{1x}}:\overrightarrow{D_{H_{1a}}\psi_{1a}},D_{G}\chi}{\overrightarrow{D_{G_{2y}}\varphi_{2y}}:\overrightarrow{D_{H_{2b}}\psi_{2b}}}$,
    \item $\Part{\overrightarrow{D_{G_{1x}}\varphi_{1x}}:\overrightarrow{D_{H_{1a}}\psi_{1a}}}{\overrightarrow{D_{G_{2y}}\varphi_{2y}}:\overrightarrow{D_{H_{2b}}\psi_{2b}},D_{G}\chi}$.
\end{enumerate} 
\end{small}
Due to space limitations, we focus on case (a) alone. By induction hypothesis, we can find a formula $\psi$ such that both $\overrightarrow{\varphi_{1x}},\overrightarrow{D_{G_{1x}}\varphi_{1x}}\Rightarrow \overrightarrow{D_{H_{1a}}\psi_{1a}},\chi,\psi$ and $\psi, \overrightarrow{\varphi_{2y}},\overrightarrow{D_{G_{2y}}\varphi_{2y}}\Rightarrow \overrightarrow{D_{H_{2b}}\psi_{2b}}$ are derivable in $\mathsf{G}^{\mathtt{a}}(\mb{K45}_{D})$ and $\psi$ satisfies $\mathsf{X}(\psi)\subseteq\mathsf{X}(\overrightarrow{\varphi_{1x}},\overrightarrow{D_{G_{1x}}\varphi_{1x}},\overrightarrow{D_{H_{1a}}\psi_{1a}},\chi)\cap\mathsf{X}(\overrightarrow{D_{G_{2y}}\varphi_{2y}},\overrightarrow{D_{H_{2b}}\psi_{2b}})$, where $\mathtt{X}\in \setof{\mathsf{Prop}, \mathsf{Agt}}$. We show that $\neg D_{H}\neg\psi$ is a desired interpolant, where $H=\bigcup \overrightarrow{G_{2y}}\cup \bigcup \overrightarrow{H_{2b}}$. We note that $H\ne\emptyset$ since we have assumed $\Gamma_{2}\cup\Delta_{2}\ne\emptyset$. It follows that $H\subseteq G$.
\begin{small}
\[
\infer[(\Rightarrow\neg)]{\overrightarrow{D_{G_{1x}}\varphi_{1x}}\Rightarrow \overrightarrow{D_{H_{1a}}\psi_{1a}},D_{G}\chi,\neg D_{H}\neg\psi}{\infer[(\Rightarrow D^{\mb{K45}_{D}})]{D_{H}\neg\psi,\overrightarrow{D_{G_{1x}}\varphi_{1x}}\Rightarrow \overrightarrow{D_{H_{1a}}\psi_{1a}},D_{G}\chi}{\infer[(w\Rightarrow)]{\neg\psi,D_{H}\neg\psi,\overrightarrow{\varphi_{1x}},\overrightarrow{D_{G_{1x}}\varphi_{1x}}\Rightarrow \overrightarrow{D_{H_{1a}}\psi_{1a}},\chi}{\infer[(\neg\Rightarrow)]{\neg\psi,\overrightarrow{\varphi_{1x}},\overrightarrow{D_{G_{1x}}\varphi_{1x}}\Rightarrow \overrightarrow{D_{H_{1a}}\psi_{1a}},\chi}{\overrightarrow{\varphi_{1x}},\overrightarrow{D_{G_{1x}}\varphi_{1x}}\Rightarrow \overrightarrow{D_{H_{1a}}\psi_{1a}},\chi,\psi}}}}
\quad
\infer[(\neg\Rightarrow)]{\neg D_{H}\neg\psi,\overrightarrow{D_{G_{2y}}\varphi_{2y}}\Rightarrow \overrightarrow{D_{H_{2b}}\psi_{2b}}}{\infer[(\Rightarrow D^{\mb{K45}_{D}})]{\overrightarrow{D_{G_{2y}}\varphi_{2y}}\Rightarrow \overrightarrow{D_{H_{2b}}\psi_{2b}},D_{H}\neg\psi}{\infer[(\Rightarrow\neg)]{\overrightarrow{\varphi_{2y}},\overrightarrow{D_{G_{2y}}\varphi_{2y}}\Rightarrow \overrightarrow{D_{H_{2b}}\psi_{2b}},\neg\psi}{\psi,\overrightarrow{\varphi_{2y}},\overrightarrow{D_{G_{2y}}\varphi_{2y}}\Rightarrow \overrightarrow{D_{H_{2b}}\psi_{2b}}}}},
\]
\end{small}
where both applications of $(\Rightarrow D^{\mb{K45}_{D}})$ are eligible. Since the condition for propositional variables is easily satisfied, we comment on the condition for agents. Our goal is to show that $\Agt{\neg D_{H}\neg\psi}\subseteq\mathsf{Agt}(\overrightarrow{D_{G_{1x}}\varphi_{1x}},\overrightarrow{D_{H_{1a}}\psi_{1a}},D_{G}\chi)\cap\mathsf{Agt}(\overrightarrow{D_{G_{2y}}\varphi_{2y}},\overrightarrow{D_{H_{2b}}\psi_{2b}})$. However, note that this argument is {\em not} possible when $\Gamma_{2}=\Delta_{2}=\emptyset$, in which case $\mathsf{Agt}(\overrightarrow{D_{G_{2y}}\varphi_{2y}},\overrightarrow{D_{H_{2b}}\psi_{2b}})=\emptyset$. The same situation happens when $\Gamma_{1}=\Delta_{1}=\emptyset$ for case (b). This is precisely why we made the assumption $\Gamma_{1}=\Delta_{1}=\emptyset$ or $\Gamma_{2}=\Delta_{2}=\emptyset$ in the first place. Fix any $a\in \Agt{\neg D_{H}\neg\psi}=H\cup\Agt{\psi}$. We divide the argument into the following two cases: $a\in \Agt{\psi}$ and $a\notin \Agt{\psi}$.
\begin{itemize}[itemsep=0pt, parsep=0pt, topsep=5pt]
    \item Suppose $a\in \Agt{\psi}$. Then our goal 
    \[a\in \mathsf{Agt}(\overrightarrow{D_{G_{1x}}\varphi_{1x}},\overrightarrow{D_{H_{1a}}\psi_{1a}},D_{G}\chi)\cap\mathsf{Agt}(\overrightarrow{D_{G_{2y}}\varphi_{2y}},\overrightarrow{D_{H_{2b}}\psi_{2b}})
    \] 
    follows from $\Agt{\psi}\subseteq\mathsf{Agt}(\overrightarrow{\varphi_{1x}},\overrightarrow{D_{G_{1x}}\varphi_{1x}},\overrightarrow{D_{H_{1a}}\psi_{1a}},\chi)\cap\mathsf{Agt}(\overrightarrow{D_{G_{2y}}\varphi_{2y}},\overrightarrow{D_{H_{2b}}\psi_{2b}})$.  
    Note that it is possible that $\{\overrightarrow{\varphi_{1x}}\}=\{\overrightarrow{D_{G_{1x}}\varphi_{1x}}\}=\emptyset$, in which case the argument remains valid.
    \item Suppose $a\notin \Agt{\psi}$. Then $a\in H$. Our goal is to show 
    \[
    a\in \mathsf{Agt}(\overrightarrow{D_{G_{1x}}\varphi_{1x}},\overrightarrow{D_{H_{1a}}\psi_{1a}},D_{G}\chi)\cap\mathsf{Agt}(\overrightarrow{D_{G_{2y}}\varphi_{2y}},\overrightarrow{D_{H_{2b}}\psi_{2b}}),
    \]
    i.e., $a\in \mathsf{Agt}(\overrightarrow{D_{G_{1x}}\varphi_{1x}},\overrightarrow{D_{H_{1a}}\psi_{1a}},D_{G}\chi)$ and $a\in \mathsf{Agt}(\overrightarrow{D_{G_{2y}}\varphi_{2y}},\overrightarrow{D_{H_{2b}}\psi_{2b}})$. The first condition is satisfied by $a\in H\subseteq G\subseteq \Agt{D_{G}\chi}$, and the second condition is satisfied by $H=\bigcup \overrightarrow{G_{2y}}\cup \bigcup \overrightarrow{H_{2b}}$. \qedhere
\end{itemize}
\end{itemize}
\end{proof}

\begin{theorem}
\label{thm:craig-D}
Suppose $\Ge{\Lambda}\vdash\Rightarrow\varphi\to\psi$. Then there exists a formula $\chi$ satisfying the following:
    \begin{itemize}
        \item[$(1)$] $\Ge{\Lambda}\vdash\Rightarrow\varphi\to\chi$ and $\Ge{\Lambda}\vdash\Rightarrow\chi\to\psi$;
        \item[$(2)$]  $\Prop{\chi}\subseteq\Prop{\varphi}\cap\Prop{\psi}$;
        \item[$(3)$]  $\Agt{\chi}\subseteq\Agt{\varphi}\cap\Agt{\psi}$.
    \end{itemize}
\end{theorem}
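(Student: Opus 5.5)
The plan is to reduce Theorem~\ref{thm:craig-D} to Lemma~\ref{lem:craig-D} by choosing a suitable partition. Suppose $\Ge{\Lambda}\vdash\Rightarrow\varphi\to\psi$. First we recover the sequent $\varphi\Rightarrow\psi$. This cannot be done by simply inverting $(\Rightarrow\to)$, since the derivation may end in a cut. Instead we cut against the derivable sequent $\varphi\to\psi,\varphi\Rightarrow\psi$. That sequent is obtained by $(\to\Rightarrow)$ from the axioms $\varphi\Rightarrow\varphi$ and $\psi\Rightarrow\psi$. Applying $(Cut)_{\varphi\to\psi}$ to $\Rightarrow\varphi\to\psi$ and $\varphi\to\psi,\varphi\Rightarrow\psi$ yields $\Ge{\Lambda}\vdash\varphi\Rightarrow\psi$. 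Since Lemma~\ref{lem:craig-D} is stated for derivability in $\Ge{\Lambda}$, which permits arbitrary cuts, this non-analytic cut is harmless. The lemma's proof itself passes through Corollary~\ref{cor:a-cut}.

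Next we apply Lemma~\ref{lem:craig-D} to $\varphi\Rightarrow\psi$ with the partition $\Part{\varphi:\ }{\ :\psi}$, that is, $\Gamma_1=\{\varphi\}$, $\Delta_1=\emptyset$, $\Gamma_2=\emptyset$ and $\Delta_2=\{\psi\}$. This gives a formula $\chi$ with the following properties:
\begin{itemize}
\item $\Ge{\Lambda}\vdash\varphi\Rightarrow\chi$ and $\Ge{\Lambda}\vdash\chi\Rightarrow\psi$;
\item $\Prop{\chi}\subseteq\Prop{\varphi}\cap\Prop{\psi}$;
\item $\Agt{\chi}\subseteq\Agt{\varphi}\cap\Agt{\psi}$.
\end{itemize}
One application of $(\Rightarrow\to)$ to each of the two derivable sequents gives $\Rightarrow\varphi\to\chi$ and $\Rightarrow\chi\to\psi$, which establishes item $(1)$. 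Items $(2)$ and $(3)$ are exactly the variable and agent conditions of the lemma for this partition.

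Essentially nothing here is an obstacle, since all the real work lies in Lemma~\ref{lem:craig-D}. The only point to check is that its degenerate case is handled correctly. When one side of the partition is empty, the lemma's proof returns $\bot$ or $\neg\bot$, and these contain no variables or agents, so condition $(3)$ holds trivially. In the present partition neither side is empty, so this case does not arise directly, but it can occur at inner steps of the lemma's induction, where the proof already treats it.
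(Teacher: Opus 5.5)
Your proposal is correct and matches the paper's proof. You derive $\varphi\Rightarrow\psi$ by cutting $\Rightarrow\varphi\to\psi$ against $\varphi\to\psi,\varphi\Rightarrow\psi$, then apply Lemma~\ref{lem:craig-D} with the partition $\Part{\varphi:\emptyset}{\emptyset:\psi}$. The paper also invokes Corollary~\ref{cor:a-cut} explicitly, which is redundant as you note, and leaves implicit the final $(\Rightarrow\to)$ steps that you spell out.
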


\begin{proof}
Suppose $\Ge{\Lambda}\vdash\Rightarrow\varphi\to\psi$. Then we can show $\Ge{\Lambda}\vdash\varphi\Rightarrow\psi$ by
\[
\infer[(Cut)_{\varphi\to\psi}]{\varphi\Rightarrow\psi}{\Rightarrow\varphi\to\psi & \infer[(\to\Rightarrow)]{\varphi,\varphi\to\psi\Rightarrow\psi}{\varphi\Rightarrow\varphi & \psi\Rightarrow\psi}}.
\]
By Corollary~\ref{cor:a-cut}, we have $\mathsf{G}^{\mathtt{a}}({\Lambda})\vdash\varphi\Rightarrow\psi$. Take a partition $\Part{\varphi:\emptyset}{\emptyset:\psi}$. Then by Lemma~\ref{lem:craig-D}, we can find an interpolant $\chi$ that satisfies the required conditions. 
\end{proof}

\section{Adding Global Modality}
\label{sec:global modality}
In this section, we introduce the framework with the global modality $\mathsf{A}$. The syntax $L^{+}$ with the global modality is defined inductively as follows:
\[
L^{+} \ni \varphi :: =  p\mid\bot\mid\neg\varphi \mid \varphi \to \psi \mid D_{G}\varphi\mid \mathsf{A}\varphi
\]
\noindent where $p \in \mathsf{Prop}$ and $G \subseteq \mathsf{Ag}$. The notion of $\mathsf{A}\varphi$ being true at $w$ in a model $M$ or a pseudo model $M$ is defined as:
\[
M,w\models \mathsf{A}\varphi\quad \text{iff}\quad \text{for all } v\in W, M,v\models\varphi.
\]
As the readers may notice, though we have defined $G\ne \emptyset$, the motivation for the global modality is to obtain the distributed knowledge of an empty group of agents, namely $D_{\emptyset}\varphi$. This is because $\bigcap_{a \in \emptyset}$ = $W \times W$. Under the semantics defined above, $D_{\emptyset}\varphi:= \A\varphi$. 

Now we introduce sequent calculi $\Ge{\mb{K45}_{D\A}}$, $\Ge{\mb{KD45}_{D\A}}$ and $\Ge{\mb{S5}_{D\A}}$, which are also expansions of $\mathbf{LK_{0}}$. For a finite multiset $\Gamma$ of formulas, we use $\A\Gamma$ to denote $\inset{\A\varphi}{\varphi\in\Gamma}$. The sequent calculus $\Ge{\mb{K45}_{D\A}}$ is $\mathbf{LK_{0}}$ expanded with the following rules:
\[
\infer[(\Rightarrow D^{\mb{K45}_{D\A}})]{D_{G_{1}}\varphi_{1},\dots,D_{G_{m}}\varphi_{m},\A\Pi\Rightarrow \A\Sigma,D_{H_{1}}\psi_{1},\dots,D_{H_{n}}\psi_{n}, D_{G}\chi}{\varphi_{1},\dots,\varphi_{m},D_{G_{1}}\varphi_{1},\dots,D_{G_{m}}\varphi_{m},\A\Pi\Rightarrow\A\Sigma,D_{H_{1}}\psi_{1},\dots,D_{H_{n}}\psi_{n},\chi},
\]
\[
\infer[(\Rightarrow\A)]{\A\Gamma\Rightarrow\A\Delta,\A\varphi}{\A\Gamma\Rightarrow\A\Delta,\varphi}
\quad
\infer[(\A \Rightarrow)]{\A \varphi,\Gamma \Rightarrow\Delta}{\varphi,\Gamma\Rightarrow\Delta}
\]
where $\bigcup^{m}_{i=1}G_{i}\cup\bigcup^{n}_{j=1}H_{j}\subseteq G$ in $(\Rightarrow D^{\mb{K45}_{D\A}})$. The sequent calculus $\Ge{\mb{KD45}_{D\A}}$ expands $\Ge{\mb{K45}_{D}}$ with the following rule:
\[
\infer[(D^{\mb{KD45}_{D\A}})]{D_{\{a\}}\Gamma, \A\Pi\Rightarrow \A\Sigma, D_{\{a\}}\Delta}{\Gamma, D_{\{a\}}\Gamma, \A\Pi\Rightarrow \A\Sigma, D_{\{a\}}\Delta}.
\]
Finally, the sequent calculus $\Ge{\mb{S5}_{D\A}}$ expands $\mathbf{LK_{0}}$ with $(\Rightarrow\A)$, $(\A\Rightarrow)$ and the following rules:
\[
\infer[(\Rightarrow D^{\mb{S5}_{D\A}})]{D_{G_1}\varphi_1, \dots, D_{G_m}\varphi_m,\A\Pi\Rightarrow \A\Sigma,D_{H_1}\psi_1, \dots, D_{H_n}\psi_n, D_G \chi}{D_{G_1}\varphi_1, \dots, D_{G_m}\varphi_m,\A\Pi\Rightarrow \A\Sigma, D_{H_1}\psi_1, \dots, D_{H_n}\psi_n,\chi}
\quad
\infer[(D\Rightarrow)]{D_{G}\varphi,\Gamma\Rightarrow\Delta}{\varphi,\Gamma\Rightarrow\Delta}.
\]
where $\bigcup^{m}_{i=1}G_{i}\cup\bigcup^{n}_{j=1}H_{j}\subseteq G$ in $(\Rightarrow D^{\mb{S5}_{D\A}})$.
The following proposition can be established similarly as Proposition~\ref{prop:soundness-D}.
\begin{proposition}
\label{prop:soundness-A}
If $\Gamma \Rightarrow \Delta$ is derivable in $\mathsf{G}(\Lambda)$ then $\bigwedge \Gamma \to \bigvee \Delta$ is valid in the corresponding class $\mathbb{P}_{\Lambda}$ of pseudo frames to $\Lambda$.
\end{proposition}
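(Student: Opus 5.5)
We argue by induction on the height of a derivation $\D$ of $\Gamma\Rightarrow\Delta$ in $\mathsf{G}(\Lambda)$ for $\Lambda\in\{\mb{K45}_{D\A},\mb{KD45}_{D\A},\mb{S5}_{D\A}\}$. We show that the formula $\bigwedge\Gamma\to\bigvee\Delta$ is true at every state of every pseudo model based on a pseudo frame in $\mathbb{P}_{\Lambda}$. Here $\mathbb{P}_{\Lambda}$ is the class of Definition~\ref{dfn:pseudo frames-D} for the underlying logic, with $\A$ interpreted as the global modality over $W$.

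The axioms $(\mathtt{id})$ and $(\bot)$, the structural rules, the propositional rules and $(Cut)$ are handled exactly as in Proposition~\ref{prop:soundness-D}: each is locally truth-preserving at a single state. The rule $(\A\Rightarrow)$ is immediate, since $M,w\models\A\varphi$ implies $M,w\models\varphi$.

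For $(\Rightarrow\A)$ we fix $w$ with $M,w\models\bigwedge\A\Gamma$ and $M,w\not\models\bigvee\A\Delta$, and take an arbitrary $v\in W$. The truth of any formula of the form $\A\theta$ does not depend on the state. Hence $v$ also satisfies $\bigwedge\A\Gamma$ and falsifies every member of $\A\Delta$. The induction hypothesis applied at $v$ then gives $M,v\models\varphi$, so $M,w\models\A\varphi$.

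For the modal rules $(\Rightarrow D^{\mb{K45}_{D\A}})$, $(\Rightarrow D^{\mb{S5}_{D\A}})$ and $(D^{\mb{KD45}_{D\A}})$ we repeat the argument of Proposition~\ref{prop:soundness-D} with the context $\A\Pi\Rightarrow\A\Sigma$ added, working with $R_{G}$ directly rather than with $\bigcap_{a\in G}R_{a}$.
\begin{itemize}
\item Assume $M,w\models\bigwedge_i D_{G_i}\varphi_i\wedge\bigwedge\A\Pi$, assume $w$ falsifies every $D_{H_j}\psi_j$ and every formula in $\A\Sigma$, and take $v$ with $wR_{G}v$.
\item The pseudo-frame condition $G_i\subseteq G\Rightarrow R_{G}\subseteq R_{G_i}$ gives $M,v\models\varphi_i$.
\item Transitivity of $R_{G_i}$ gives $M,v\models D_{G_i}\varphi_i$.
\item For each $j$ there is a witness $u_j$ with $wR_{H_j}u_j$ and $M,u_j\not\models\psi_j$. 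Since $H_j\subseteq G$ we have $wR_{H_j}v$, and Euclideanness of $R_{H_j}$ gives $vR_{H_j}u_j$. Hence $M,v\not\models D_{H_j}\psi_j$.
\item The $\A$-formulas keep the same truth values at $v$, since they are state-independent.
\item The induction hypothesis at $v$ yields $M,v\models\chi$.
\end{itemize}

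In the $\mb{S5}$ case the premise has no $\varphi_i$ on the left, so only the preservation of $D_{G_i}\varphi_i$ via transitivity and of the negated $D_{H_j}\psi_j$ via Euclideanness is needed. The rule $(D\Rightarrow)$ uses reflexivity of $R_{G}$.

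For $(D^{\mb{KD45}_{D\A}})$ we use seriality of $R_{\{a\}}$ to pick some $v$ with $wR_{\{a\}}v$. At $v$ every $\gamma\in\Gamma$ holds, and so does every $D_{\{a\}}\gamma$ by transitivity. Every $D_{\{a\}}\delta\in D_{\{a\}}\Delta$ is false at $v$ by the Euclidean argument above. The $\A$-context is unchanged. Then the premise, read at $v$, forces some $\A$-formula of $\A\Sigma$ to hold at $v$, and hence at $w$, a contradiction.

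The main obstacle is to make sure only the pseudo-frame properties are used, never intersections, and in particular to check that seriality is required only of the singleton relations $R_{\{a\}}$. This holds because $(D^{\mb{KD45}_{D\A}})$ mentions only $D_{\{a\}}$.
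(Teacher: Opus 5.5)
Your proof is correct and follows the paper's route. The paper only says the result is established like Proposition~\ref{prop:soundness-D}, that is, by induction on derivation height. Your argument is exactly that induction, with the intersection $\bigcap_{a\in G}R_a$ replaced by the pseudo-frame condition $R_G\subseteq R_{G_i},R_{H_j}$ for $G_i,H_j\subseteq G$, and with the state-independence of $\A$-formulas handling $(\Rightarrow\A)$ and the added contexts $\A\Pi\Rightarrow\A\Sigma$, including the seriality-of-$R_{\{a\}}$ case for $(D^{\mb{KD45}_{D\A}})$.
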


In what remains in this section, let $\Lambda\in\{\mb{K45}_{D\A},\mb{KD45}_{D\A},\mb{S5}_{D\A}\}$, and explain that the proof-theoretic results, namely the analytic cut property and Craig interpolation theorem for $\Ge{\Lambda}$ can be obtained in a similar way as in Theorems~\ref{thm:completeness-D} and~\ref{thm:craig-D}, respectively, except that a modification of the pseudo-model is required to establish the analytic cut property.

\begin{definition}
\label{dfn:RA}
Define $M^{\Lambda}_{\Xi}=(W^{\Lambda}_{\Xi}, (R^{\Lambda}_{G})_{G\in\mathsf{Grp}},V^{\Lambda}_{\Xi})$ to be the pseudo model derived from $\Xi$ in the same way as in Definition \ref{dfn:canonical pseudo model}. 
Define $\sim_{\A}$ on $W^{\Lambda}_{\Xi}$ as follows:
\[(\Gamma,\Delta)\sim_{\A}(\Pi,\Sigma)\text{ iff }\inset{\A\varphi}{\A\varphi\in\Gamma}=\inset{\A\varphi}{\A\varphi\in\Pi}.
\]
\end{definition}
It follows that $\sim_{\A}$ is reflexive, transitive, symmetric and Euclidean.
\begin{definition}
\label{dfn:filtrated canonical pseudo model}
    Let $(\Phi,\Psi)$ be a $\Xi$-partial valuation. The pseudo model 
    \[M^{\Lambda}_{\Xi(\Phi,\Psi)}=(W^{\Lambda}_{\Xi(\Phi,\Psi)}, (R^{\Lambda}_{G})_{G\in\mathsf{Grp}},V^{\Lambda}_{\Xi(\Phi,\Psi)})
    \]
    derived from $\Xi$ and $(\Gamma,\Delta)$ is defined as:
    \begin{itemize}[itemsep=0pt, parsep=0pt, topsep=5pt]
        \item $W^{\Lambda}_{\Xi(\Phi,\Psi)}=\inset{(\Gamma,\Delta)\in W^{\Lambda}_{\Xi}}{(\Phi,\Psi)\sim_{\A}(\Gamma,\Delta)}$;
        \item $R^{\Lambda}_{G}$ is defined depending on the choice of $\Lambda$:
        \begin{itemize}
            \item $\Lambda=\mb{K45}_{D\A}$: $(\Gamma,\Delta)R^{\mb{K45}_{D\A}}_{G}(\Pi,\Sigma)$ iff \\
            $\forall H\subseteq G
        (\inset{\varphi,D_{H}\varphi}{D_{H}\varphi\in\Gamma}\subseteq\Pi$ and $\inset{D_{H}\varphi}{D_{H}\varphi\in\Pi}\subseteq\Gamma$) \\
        and $\inset{\A\varphi}{\A\varphi\in\Gamma}=\inset{\A\varphi}{\A\varphi\in\Pi}$,
        \item $\Lambda=\mb{KD45}_{D\A}$: $R^{\mb{KD45}_{D\A}}_{G}:=R^{\mb{K45}_{D\A}}_{G}$,
        \item $\Lambda=\mb{S5}_{D\A}$: $(\Gamma,\Delta)R^{\mb{S5}_{D\A}}_{G}(\Pi,\Sigma)$ iff $\forall H\subseteq G(\inset{D_{H}\varphi}{D_{H}\varphi\in\Gamma}=\inset{D_{H}\varphi}{D_{H}\varphi\in\Pi})$\\
        and $\inset{\A\varphi}{\A\varphi\in\Gamma}=\inset{\A\varphi}{\A\varphi\in\Pi}$;
        \end{itemize}
        \item $(\Gamma,\Delta)\in V^{\Lambda}_{\Xi(\Phi,\Psi)}(p)$ iff $p\in\Gamma$.
    \end{itemize}
\end{definition}

\noindent The above definition of $W^{\Lambda}_{\Xi(\Phi, \Psi)}$ ensures that the global modality $\mathsf{A}$ means ``for all states'' by specifying the root $(\Phi, \Psi)$. All the necessary lemmas can be proved in a manner similar to those for distributed knowledge logic without the global modality, with the exception of the following:

\begin{lemma}
\label{lem:saturated A}
Let $(\Gamma,\Delta)$ be a $\Xi$-partial valuation in $\mathsf{G}^{\mt{a}}(\Lambda)$. 
Then $(\Gamma,\Delta)$ satisfies the following:
\begin{itemize}[itemsep=0pt, parsep=0pt, topsep=5pt]
    \item[$(\A\Rightarrow)$] $\A \varphi\in\Gamma$ implies $\varphi\in\Pi$ for all $(\Pi,\Sigma)\in W_{\Xi(\Phi,\Psi)}^{\Lambda}$;
    \item[$(\Rightarrow \A)$] $\A \varphi\in\Delta$ implies $\varphi\in\Sigma$ for some $(\Pi,\Sigma)\in W_{\Xi(\Phi,\Psi)}^{\Lambda}$;
    \item[$(D)$] $D_{G} \varphi\in\Delta$ implies $\varphi\in\Sigma$ for some $(\Pi,\Sigma)\in W_{\Xi(\Phi,\Psi)}^{\Lambda}$ such that $(\Gamma,\Delta)R^{\Lambda}_{G}(\Pi,\Sigma)$;
\end{itemize}
\end{lemma}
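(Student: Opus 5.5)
The idea is to treat each clause as a Lindenbaum-style extension argument, as in Lemma~\ref{lem:saturated}. The one new ingredient is that the witness must land in $W^{\Lambda}_{\Xi(\Phi,\Psi)}$, i.e.\ it must be $\sim_{\A}$-equivalent to the root $(\Phi,\Psi)$. Throughout, I assume $(\Gamma,\Delta)$ itself lies in $W^{\Lambda}_{\Xi(\Phi,\Psi)}$, so $\{\A\varphi \mid \A\varphi\in\Gamma\}=\{\A\varphi\mid\A\varphi\in\Phi\}$. The recurring device is this: whenever a new valuation $(\Pi^{+},\Sigma^{+})$ is built from a seed containing all $\A$-formulas of $\Gamma$ on the left and all $\A$-formulas of $\Delta$ on the right, it has the same $\A$-formulas as $\Gamma$. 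The argument mirrors the $(\Leftarrow)$ direction of the $\mb{S5}_{D}$ case of Lemma~\ref{lem:saturated}. Suppose $\A\chi\in\Pi^{+}$. Then $\A\chi\in\Sub(\text{seed})\subseteq\Sub(\Gamma,\Delta)=\Gamma\cup\Delta$. If $\A\chi\in\Delta$, it lies in $\Sigma^{+}$, and $\Pi^{+}\Rightarrow\Sigma^{+}$ would follow from $\A\chi\Rightarrow\A\chi$ by weakening, a contradiction. Hence $\A\chi\in\Gamma$.

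\emph{Clause $(\A\Rightarrow)$.} Let $\A\varphi\in\Gamma$ and $(\Pi,\Sigma)\in W^{\Lambda}_{\Xi(\Phi,\Psi)}$. By transitivity of $\sim_{\A}$ we have $\A\varphi\in\Pi$, and $\varphi\in\Sub(\Pi)\subseteq\Pi\cup\Sigma$. If $\varphi\in\Sigma$, then $\Pi\Rightarrow\Sigma$ is derivable from $\varphi\Rightarrow\varphi$ via $(\A\Rightarrow)$ and weakening, which is impossible. Hence $\varphi\in\Pi$.

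\emph{Clause $(\Rightarrow\A)$.} Let $\A\varphi\in\Delta$. Take $\Pi:=\{\A\psi\mid\A\psi\in\Gamma\}$ and $\Sigma:=\{\A\psi\mid\A\psi\in\Delta\}$. If $\Pi\Rightarrow\Sigma,\varphi$ were derivable, then $(\Rightarrow\A)$ and weakening would give $\Gamma\Rightarrow\Delta$, a contradiction. Lemma~\ref{lem:Lindenbaum a-cut} then yields $(\Pi^{+},\Sigma^{+})$ with $\varphi\in\Sigma^{+}$. By the device above its $\A$-part equals that of $\Gamma$; the reverse inclusion is immediate since $\Pi\subseteq\Pi^{+}$. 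Hence $(\Pi^{+},\Sigma^{+})\sim_{\A}(\Gamma,\Delta)\sim_{\A}(\Phi,\Psi)$.

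\emph{Clause $(D)$.} Let $D_G\varphi\in\Delta$. For $\mb{K45}_{D\A}$ and $\mb{KD45}_{D\A}$, enlarge the seed of Lemma~\ref{lem:saturated}: set
\[
\Pi:=\{\psi,D_H\psi\mid D_H\psi\in\Gamma,\ H\subseteq G\}\cup\{\A\psi\mid\A\psi\in\Gamma\}
\]
and
\[
\Sigma:=\{D_I\gamma\mid D_I\gamma\in\Delta,\ I\subseteq G\}\cup\{\A\psi\mid \A\psi\in\Delta\}.
\]
These are exactly the contexts permitted in $(\Rightarrow D^{\mb{K45}_{D\A}})$, so if $\Pi\Rightarrow\Sigma,\varphi$ were derivable, $\Gamma\Rightarrow\Delta$ would be as well. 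Extend the seed by Lemma~\ref{lem:Lindenbaum a-cut}. The $D_H$-conditions of $R^{\Lambda}_G$ are checked exactly as in Proposition~\ref{prop:well definiability D}. The $\A$-equality, and with it membership in $W^{\Lambda}_{\Xi(\Phi,\Psi)}$, follows from the device. The $\mb{S5}_{D\A}$ case is the same, using the seed $\{D_H\psi\in\Gamma\mid H\subseteq G\}$ together with the $\A$-formulas, and the rule $(\Rightarrow D^{\mb{S5}_{D\A}})$.

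I expect the main obstacle to be precisely this step: checking that the chosen seeds fit the side contexts of the modal rules (in particular that the $\A\Pi$, $\A\Sigma$ contexts are allowed there), and that the resulting witness stays within the restricted universe. Once the $\A$-formulas are placed in the seed, the subformula-equality condition of partial valuations, $\Sub(\Gamma,\Delta)=\Gamma\cup\Delta$, does the remaining work.
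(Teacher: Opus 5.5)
Your proposal is correct and follows the route the paper indicates; the paper itself only sketches this lemma. Your proof enlarges the seeds by the $\A$-formulas of $\Gamma$ and $\Delta$ so that they fit the $\A\Pi$, $\A\Sigma$ contexts of $(\Rightarrow\A)$, $(\Rightarrow D^{\mb{K45}_{D\A}})$ and $(\Rightarrow D^{\mb{S5}_{D\A}})$, applies Lemma~\ref{lem:Lindenbaum a-cut}, and uses $\Sub(\Gamma,\Delta)=\Gamma\cup\Delta$ to pin down the $\A$- and $D_H$-parts of the witness, with the extra hypothesis $(\Gamma,\Delta)\in W^{\Lambda}_{\Xi(\Phi,\Psi)}$ that the statement leaves implicit but genuinely needs.
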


For item $(D)$, to find a desired witness $(\Pi, \Sigma) \in W_{\Xi(\Phi, \Psi)}^{\Lambda}$ reachable from $(\Phi, \Psi)$, adding the contexts $\mathsf{A}\Pi$ and $\mathsf{A}\Sigma$ to the inference rules $(\Rightarrow D^{\mathbf{K45}_{D}})$, 
$(D^{\mathbf{KD45}_{D}})$ and 
$(\Rightarrow D^{\mathbf{S5}_{D}})$ of $D_{G}$ plays a crucial role. By applying this lemma, we can prove the semantic completeness theorem, following the same line of argument as in the case without the global modality:

\begin{theorem}
\label{thm:completeness-A}
For any sequent $\Gamma\Rightarrow\Delta$, if $\bigwedge\Gamma\to\bigvee\Delta$ is valid in the class $\mathbb{P}_{\Lambda}$ of {all} pseudo frames which satisfy the properties corresponding to $\Lambda$, then $\mathsf{G}^{\mt{a}}(\Lambda)\vdash\Gamma\Rightarrow\Delta$.
\end{theorem}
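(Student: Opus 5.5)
We argue by contraposition, following the proof of Theorem~\ref{thm:completeness-D}. Suppose $\mathsf{G}^{\mt{a}}(\Lambda)\not\vdash\Gamma\Rightarrow\Delta$ and set $\Xi:=\Sub(\Gamma,\Delta)$, which is finite and subformula-closed (with $\Sub(\A\varphi)=\Sub(\varphi)\cup\{\A\varphi\}$). Lemma~\ref{lem:Lindenbaum a-cut} goes through verbatim for the expanded calculi, since it only uses analytic cut and weakening. It yields a $\Xi$-partial valuation $(\Phi,\Psi)$ with $\Gamma\subseteq\Phi$ and $\Delta\subseteq\Psi$. We take the pseudo model $M^{\Lambda}_{\Xi(\Phi,\Psi)}$ of Definition~\ref{dfn:filtrated canonical pseudo model}, rooted at $(\Phi,\Psi)$. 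It contains $(\Phi,\Psi)$ because $\sim_{\A}$ is reflexive.

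\textbf{Frame conditions.} We must show that $M^{\Lambda}_{\Xi(\Phi,\Psi)}$ lies in $\mathbb{P}_{\Lambda}$. The argument follows Proposition~\ref{prop:well definiability D}, with one extra conjunct: equality of the $\A$-parts, which is an equivalence relation and so preserves transitivity, Euclideanness, reflexivity, symmetry and antimonotonicity in $G$. Seriality of $R_{\{a\}}$ for $\mb{KD45}_{D\A}$ needs more care. Given $(\Gamma',\Delta')$ in the model, let $\Pi$ be $\{\varphi\mid D_{\{a\}}\varphi\in\Gamma'\}$ together with the $\A$-formulas of $\Gamma'$, and let $\Sigma$ be the $D_{\{a\}}$-formulas and $\A$-formulas of $\Delta'$. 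Non-derivability of $\Pi,D_{\{a\}}\Pi\Rightarrow\Sigma$ follows from $(D^{\mb{KD45}_{D\A}})$ with its $\A$-contexts. We then extend by Lemma~\ref{lem:Lindenbaum a-cut}. The $\A$-parts agree by the same subformula argument used in Proposition~\ref{prop:well definiability D}: any $\A\chi$ in the extension lies in $\Gamma'\cup\Delta'$, and if it were in $\Delta'$ it would be in $\Sigma$, giving an axiom. Hence the new point is $\sim_{\A}$-equivalent to $(\Gamma',\Delta')$, and therefore to the root.

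\textbf{Lemma~\ref{lem:saturated A}.} The Boolean clauses are as in Lemma~\ref{lem:saturated}. For $(\A\Rightarrow)$: if $\A\varphi\in\Gamma'$, then for any $(\Pi,\Sigma)$ in the model we have $\A\varphi\in\Pi$, since all points share the root's $\A$-part. Then $\varphi\in\Pi\cup\Sigma$ by subformula-closure, and $\varphi\in\Sigma$ is impossible, because $\A\varphi\Rightarrow\varphi$ is derivable by $(\A\Rightarrow)$.

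For $(\Rightarrow\A)$: if $\A\varphi\in\Delta'$, we consider $\A\Pi_0\Rightarrow\A\Sigma_0,\varphi$, where $\A\Pi_0$ and $\A\Sigma_0$ are the $\A$-formulas of $\Gamma'$ and $\Delta'$. This sequent is underivable, since otherwise $(\Rightarrow\A)$ and weakening would derive $\Gamma'\Rightarrow\Delta'$. Extending it gives a point with the same $\A$-part, by the subformula argument above.

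For $(D)$: we proceed as in Lemma~\ref{lem:saturated}, but add $\A\Pi_0$ to $\Pi$ and $\A\Sigma_0$ to $\Sigma$. The $\A$-contexts of $(\Rightarrow D^{\mb{K45}_{D\A}})$ and $(\Rightarrow D^{\mb{S5}_{D\A}})$ make the non-derivability transfer. The witness then has the same $\A$-part, so it both lies in $W^{\Lambda}_{\Xi(\Phi,\Psi)}$ and is $R^{\Lambda}_G$-accessible.

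\textbf{Truth lemma and conclusion.} The truth lemma (Lemma~\ref{lem:truth lemma-D}) extends by induction, with the new case $\A\psi$ handled directly by the two $\A$-clauses. Evaluating at the root shows that $\bigwedge\Gamma\to\bigvee\Delta$ fails in a model of $\mathbb{P}_{\Lambda}$, which completes the contraposition.

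The main obstacle is the bookkeeping that every Lindenbaum extension keeps exactly the root's $\A$-part. This rests on the equality $\Pi^{+}\cup\Sigma^{+}=\Sub(\cdot)\subseteq\Gamma'\cup\Delta'$ together with the context-carrying rules. It is also why we must restrict to the $\sim_{\A}$-class of the root rather than take all of $W^{\Lambda}_{\Xi}$.
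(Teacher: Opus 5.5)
Your proposal is correct and follows the paper's route: contraposition via the Lindenbaum lemma, the canonical pseudo model restricted to the $\sim_{\A}$-class of the root $(\Phi,\Psi)$, the saturation clauses in which the $\A$-contexts of the modal rules keep every witness inside that class, and the extended truth lemma. You also supply details the paper leaves implicit, notably seriality of $R_{\{a\}}$ inside the restricted model; one cosmetic fix there is that $D_{\{a\}}$ should be prefixed only to the formulas $\varphi$ with $D_{\{a\}}\varphi\in\Gamma'$, not to the $\A$-formulas you also put into $\Pi$, since otherwise the sequent leaves $\Xi$ and no longer matches the premise of $(D^{\mb{KD45}_{D\A}})$.
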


\begin{corollary}
\label{cor:analytic-cut-property-A}
For any sequent $\Gamma\Rightarrow\Delta$, if 
$\mathsf{G}(\Lambda)\vdash\Gamma\Rightarrow\Delta$, then $\mathsf{G}^{\mt{a}}(\Lambda)\vdash\Gamma\Rightarrow\Delta$.
\end{corollary}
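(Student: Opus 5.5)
The plan mirrors the proof of Corollary~\ref{cor:a-cut}: combine soundness with the semantic completeness of the analytic-cut calculus. Suppose $\mathsf{G}(\Lambda)\vdash\Gamma\Rightarrow\Delta$. By Proposition~\ref{prop:soundness-A}, $\bigwedge\Gamma\to\bigvee\Delta$ is valid in the class $\mathbb{P}_{\Lambda}$ of pseudo frames corresponding to $\Lambda$. Theorem~\ref{thm:completeness-A} then yields $\mathsf{G}^{\mt{a}}(\Lambda)\vdash\Gamma\Rightarrow\Delta$ directly, since its hypothesis is validity in $\mathbb{P}_{\Lambda}$. So at the level of the corollary the argument is a two-line composition.

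The substance lies in the soundness step, which must be checked for the new rules in $\mathbb{P}_{\Lambda}$.

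For $(\A\Rightarrow)$, note that $\A\varphi$ true at $w$ gives $\varphi$ true at $w$.

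For $(\Rightarrow\A)$, truth of formulas of the forms $\A\psi$ and $\neg\A\psi$ does not depend on the evaluation point. Hence, if the premise is valid and the $\A\Gamma$ hold while all of the $\A\Delta$ fail at $w$, the same is true at every $v$. The premise then gives $\varphi$ at every $v$, and so $\A\varphi$ holds.

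For $(\Rightarrow D^{\mb{K45}_{D\A}})$, $(D^{\mb{KD45}_{D\A}})$ and $(\Rightarrow D^{\mb{S5}_{D\A}})$, the argument is that of Proposition~\ref{prop:soundness-D} carried out with $R_G$ in place of $\bigcap_{a\in G}R_a$. The monotonicity condition $R_G\subseteq R_{G_i}$ for $G_i\subseteq G$ replaces the intersection argument. Transitivity and Euclideanness of each $R_{G_i}$ and $R_{H_j}$ are used exactly as before, and seriality of $R_{\{a\}}$ handles the $\mathbf{KD45}$ rule. The contexts $\A\Pi$ and $\A\Sigma$ transfer from $w$ to any $R_G$-successor $v$ because they are point-independent.

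The main obstacle I expect is hidden in Theorem~\ref{thm:completeness-A}, which the corollary takes as given. There one must verify that the restricted model $M^{\Lambda}_{\Xi(\Phi,\Psi)}$ is a pseudo frame in $\mathbb{P}_{\Lambda}$. This requires the extra $\A$-clause in $R^{\Lambda}_G$ to preserve transitivity, Euclideanness and seriality. Seriality is needed for $\mathbf{KD45}$: the witness built from $(D^{\mb{KD45}_{D\A}})$, with $\A$-contexts copied from the current state, must land in the same $\sim_\A$-class. If that verification is granted, as the statement allows, nothing further is needed.
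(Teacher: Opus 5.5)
Your proposal is correct and matches the paper's intended argument: Proposition~\ref{prop:soundness-A} gives validity in $\mathbb{P}_{\Lambda}$, and Theorem~\ref{thm:completeness-A} then gives derivability in $\mathsf{G}^{\mathtt{a}}(\Lambda)$, exactly parallel to Corollary~\ref{cor:a-cut}. Your soundness checks for $(\A\Rightarrow)$, $(\Rightarrow\A)$ and the $D$-rules with $\A$-contexts are right, and they fill in what the paper only says can be ``established similarly''; the same goes for your remark that the $\mathbf{KD45}$ seriality witness stays in the same $\sim_{\A}$-class.
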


For the rules $(\Rightarrow \A)$ and $(\A \Rightarrow)$, we can employ the same argument in terms of Maehara's method as used for the modal logic $\mathbf{S5}$ (see, e.g.,~\cite{Ono_1998_prooftheoretic}). Moreover, 
adding the contexts $\mathsf{A}\Pi$ and $\mathsf{A}\Sigma$ to the inference rules $(\Rightarrow D^{\mathbf{K45}_{D}})$, 
$(D^{\mathbf{KD45}_{D}})$ and 
$(\Rightarrow D^{\mathbf{S5}_{D}})$ of $D_{G}$ 
does not change the outline of Maehara's argument or the information regarding the common vocabulary. Therefore, we obtain the following:

\begin{theorem}
\label{thm:craig-DA}
Let $\Lambda \in \{\mb{K45}_{D\mathsf{A}}, \mb{KD45}_{D\mathsf{A}}, \mb{S5}_{D\mathsf{A}} \}$. 
Suppose $\Ge{\Lambda}\vdash\Gamma\Rightarrow\Delta$. Then for any partition $\langle(\Gamma_{1}:\Delta_{1});(\Gamma_{2}:\Delta_{2})\rangle$ for the sequent $\Gamma\Rightarrow\Delta$, there exists a formula $\varphi$ that satisfies the following: 
$\Gamma_{1}\Rightarrow\Delta_{1},\varphi$ and $\varphi, \Gamma_{2}\Rightarrow\Delta_{2}$ are derivable in $\Ge{\Lambda}$ and $\X{\varphi}\subseteq \X{\Gamma_{1},\Delta_{1}}\cap\X{\Gamma_{2},\Delta_{2}}$ for all $\mathsf{X} \in \{\mathsf{Agt}, \mathsf{Prop}\}$. 
Therefore, the Craig interpolation theorem holds for $\Ge{\Lambda}$. 
\end{theorem}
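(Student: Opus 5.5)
By Corollary~\ref{cor:analytic-cut-property-A}, we may assume that $\Gamma\Rightarrow\Delta$ has a derivation $\D$ in $\mathsf{G}^{\mt{a}}(\Lambda)$. We then proceed by induction on the height of $\D$, following the proof of Lemma~\ref{lem:craig-D}. In the base cases and in the degenerate cases where $\Gamma_{1}\cup\Delta_{1}=\emptyset$ or $\Gamma_{2}\cup\Delta_{2}=\emptyset$, we take $\bot$ or $\neg\bot$, as before. The propositional rules, the structural rules, $(D\Rightarrow)$ and the analytic cut are treated exactly as in Lemma~\ref{lem:craig-D}. For the cut, we split on whether $\varphi\in\Sub$ of the left or the right part of the partition and take $\neg\psi_{1}\to\psi_{2}$ (respectively the dual conjunction). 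Lemma~\ref{lem:craig-sub-D} then guarantees the vocabulary conditions for both $\mathsf{Prop}$ and $\mathsf{Agt}$, once we note that $\Agt{\A\varphi}=\Agt{\varphi}$.

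For $(\A\Rightarrow)$, the principal formula $\A\varphi$ lies on one side, and the interpolant from the premise works unchanged, since $\X{\varphi}=\X{\A\varphi}$. For $(\Rightarrow\A)$ the conclusion is $\A\Gamma\Rightarrow\A\Delta,\A\varphi$, and we argue as in Maehara's treatment of $\mathbf{S5}$. If $\A\varphi$ belongs to the first component, the induction hypothesis gives $\psi$ with $\A\Gamma_{1}\Rightarrow\A\Delta_{1},\varphi,\psi$ and $\psi,\A\Gamma_{2}\Rightarrow\A\Delta_{2}$, and we use $\neg\A\neg\psi$ as interpolant. On the left, $(\A\Rightarrow)$ applied to $\neg\psi$, followed by $(\Rightarrow\A)$ (whose context consists solely of $\A$-formulas) and $(\Rightarrow\neg)$, yields $\A\Gamma_{1}\Rightarrow\A\Delta_{1},\A\varphi,\neg\A\neg\psi$. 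On the right, $(\Rightarrow\neg)$, then $(\Rightarrow\A)$, then $(\neg\Rightarrow)$ yields $\neg\A\neg\psi,\A\Gamma_{2}\Rightarrow\A\Delta_{2}$. If $\A\varphi$ belongs to the second component, we symmetrically use $\A\psi$. No agent is introduced, so $\mathsf{Agt}$ causes no trouble here.

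The main work is in the $D$-rules with $\A$-contexts. For $(\Rightarrow D^{\mb{K45}_{D\A}})$ in case (a), where $D_{G}\chi$ is on side $1$, we let $H=\bigcup\overrightarrow{G_{2y}}\cup\bigcup\overrightarrow{H_{2b}}$, exactly as before, and use $\neg D_{H}\neg\psi$. The $\A\Pi_{i}$ and $\A\Sigma_{i}$ parts of the context are allowed by the rule and pass through unchanged. The side condition $H\subseteq G$ still holds.

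The new difficulty is that $\Gamma_{2}\cup\Delta_{2}$ may consist \emph{only} of $\A$-formulas, so that $H=\emptyset$ and $D_{H}$ is not a formula. In that subcase we intend to use $\neg\A\neg\psi$ instead. For the right derivation, $(\Rightarrow\A)$ applies because the side-$2$ context is purely $\A$-formulas. We must first drop the $\overrightarrow{\varphi_{2y}}$, but these are absent because there are no $D$-formulas on side $2$. For the left derivation, we apply $(\A\Rightarrow)$ to $\neg\psi$ in the premise and then $(\Rightarrow D^{\mb{K45}_{D\A}})$, which is legal since the new $\A\neg\psi$ lands in the $\A\Pi$ context. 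The agent condition becomes trivial because $\A$ contributes no agents. Case (b) is dual: we use $D_{H}\psi$, with $H$ drawn from side $1$, or $\A\psi$ when $H=\emptyset$.

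$(D^{\mb{KD45}_{D\A}})$ and $(\Rightarrow D^{\mb{S5}_{D\A}})$ are handled the same way. For $(D^{\mb{KD45}_{D\A}})$ the interpolant is $D_{\{a\}}\psi$ or $\neg D_{\{a\}}\neg\psi$, falling back to $\A$ when the relevant side has no $D_{\{a\}}$-formulas. That fallback, together with checking that each rule's context restriction still admits the new interpolant formula, is the step I expect to be the delicate one.

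Finally, the Craig interpolation theorem follows from the partition $\Part{\varphi:\emptyset}{\emptyset:\psi}$ exactly as in Theorem~\ref{thm:craig-D}.
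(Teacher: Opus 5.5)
Your proposal is correct and follows the paper's route. You pass to $\mathsf{G}^{\mt{a}}(\Lambda)$ via Corollary~\ref{cor:analytic-cut-property-A}, rerun Maehara's induction from Lemma~\ref{lem:craig-D} with $(\Rightarrow\A)$ and $(\A\Rightarrow)$ treated as in $\mathbf{S5}$, and carry the $\A$-contexts through the $D$-rules. Your extra subcase, where one side consists only of $\A$-formulas so that $H=\emptyset$, is a real point the paper's sketch passes over, and your fallback handles it. For $(D^{\mb{KD45}_{D\A}})$, though, you must take $\A\psi$ when side~1 has no $D_{\{a\}}$-formulas and $\neg\A\neg\psi$ when side~2 has none, because the opposite pairing fails in general.
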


The reader may wonder if all the sequent calculi equipped with the global modality are  semantically complete in terms of models, i.e., pseudo models $M$ = $(W,(R_{G})_{G \in \mathsf{Grp}}, V)$ such that $R_{G}$ $=$ $\bigcap_{a \in G}R_{\setof{a}}$ holds for all $G \in \mathsf{Grp}$. 
This can be done in terms of a technique called ``tree unraveling”, which transforms a pseudo-model into one that satisfies $\bigcap_{a\in G}R_{\setof{a}}=R_{G}$. 

\begin{theorem}
\label{prop:pseudomodel2model}
Let $\Lambda\in\{\mb{K45}_{D\A},\mb{KD45}_{D\A},\mb{S5}_{D\A}\}$. 
If a formula $\varphi$ is valid in the class $\mathbb{F}_{\Lambda}$ of all frames that satisfy the corresponding properties to $\Lambda$ then $\varphi$ is valid in the class $\mathbb{P}_{\Lambda}$  of all pseudo frames that satisfy the properties corresponding to $\Lambda$. 
Therefore, for any sequent $\Gamma\Rightarrow\Delta$, if $\bigwedge\Gamma\to\bigvee\Delta$ is valid in the class $\mathbb{F}_{\Lambda}$, then $\mathsf{G}^{\mt{a}}(\Lambda)\vdash\Gamma\Rightarrow\Delta$.
\end{theorem}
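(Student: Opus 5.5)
We prove the contrapositive: if $\varphi$ is falsified in some pointed pseudo model $(M,w)$ with $M=(W,(R_G)_{G\in\mathsf{Grp}},V)\in\mathbb{P}_\Lambda$, we build a model $M^\ast$ over a frame in $\mathbb{F}_\Lambda$ and a state $w^\ast$ with $M^\ast,w^\ast\not\models\varphi$. By Theorem~\ref{thm:completeness-A} we may take $M$ to be the finite pseudo model $M^{\Lambda}_{\Xi(\Phi,\Psi)}$. The second claim of the statement then follows at once. Suppose $\bigwedge\Gamma\to\bigvee\Delta$ is valid in $\mathbb{F}_\Lambda$. By the first claim it is valid in $\mathbb{P}_\Lambda$, and Theorem~\ref{thm:completeness-A} yields $\mathsf{G}^{\mt{a}}(\Lambda)\vdash\Gamma\Rightarrow\Delta$.

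\textbf{Construction.} The construction is a tree unraveling with agent-labelled edges. A path is a sequence $w_0 G_1 w_1\cdots G_k w_k$ with $w_0\in W$, each $G_i\in\mathsf{Grp}$ and $w_{i-1}R_{G_i}w_i$. Because $\A$ quantifies over all states, every $w_0\in W$ is allowed as a root, not only $w$. Let $W^\ast$ be the set of paths, with $\mathrm{last}$ giving the final state, and set $V^\ast(p)=\{\sigma:\mathrm{last}(\sigma)\in V(p)\}$.

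\textbf{Agent relations.} Let $S_a$ be the one-step relation $\sigma\,S_a\,\sigma G v$ whenever $a\in G$. Define $R^\ast_a$ as the closure of $S_a$ needed for the frame conditions:
\begin{itemize}
\item for $\mathbf{K45}$, $\sigma R^\ast_a\tau$ iff $\tau$ lies in a block reachable from $\sigma$ by a nonempty $S_a$-chain, where a block is a maximal set of nodes connected by $a$-edges taken up to Euclidean closure;
\item for $\mathbf{S5}$, $R^\ast_a$ is the reflexive, symmetric, transitive closure of $S_a$.
\end{itemize}
Equivalence closures and transitive–Euclidean closures of the $S_a$ put the frame into the right class. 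For $\mathbf{KD45}$, seriality of each $R_{\{a\}}$ in $M$ gives every node an $S_a$-successor, so $R^\ast_a$ is serial.

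\textbf{Key lemma.} The central claim is that for every $G$,
\[
\sigma \Bigl(\bigcap_{a\in G}R^\ast_a\Bigr)\tau \ \text{ implies } \ \mathrm{last}(\sigma)\,R_G\,\mathrm{last}(\tau),
\]
and conversely every $R_G$-successor of $\mathrm{last}(\sigma)$ is $\mathrm{last}(\tau)$ for some $\tau$ with $\sigma\bigl(\bigcap_{a\in G}R^\ast_a\bigr)\tau$. The converse direction is immediate from the edge $\sigma G v$. For the forward direction, in a tree two nodes related by every $R^\ast_a$ ($a\in G$) must be joined by a path of edges whose labels each contain all of $G$. Along such a path we transport $R_{G'}$ with $G'\supseteq G$ down to $R_G$ using monotonicity, and then use transitivity and Euclideanness (or symmetry) of $R_G$ in $M$ to compose the steps.

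\textbf{Truth lemma.} Next we show $M^\ast,\sigma\models\psi$ iff $M,\mathrm{last}(\sigma)\models\psi$, by induction on $\psi$. The case $\psi=D_G\chi$ uses the key lemma, and the case $\psi=\A\chi$ uses that $\mathrm{last}$ is surjective onto $W$, since every state is a root. Taking $w^\ast=w$ finishes the argument.

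The main obstacle is the forward direction of the key lemma for $\mathbf{K45}$. Euclidean closure can relate siblings through a common $a$-parent and can also relate nodes to predecessors, so we must check that an intersection over $G$ cannot arise from different connecting paths for different agents. The tree shape should rule this out because the connecting path is unique. Handling the Euclidean closure may nevertheless require first replacing $M$ by a bisimilar copy in which each $R_G$-cluster is duplicated, so that reflexive-looking loops do not create spurious intersections.
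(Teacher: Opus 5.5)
Your overall route is the paper's: unravel the pseudo model into labelled paths, close the one-step relations under the frame conditions, and show that $\mathrm{last}$ is a surjective bounded morphism. Rooting the forest at every $w_0\in W$ is in fact more careful than the paper's ``without loss of generality $M$ is point-generated by $w$'', since generated submodels need not preserve $\A$-formulas. The $\mathbf{S5}$ case of your key lemma is fine.

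\textbf{The gap.} The forward direction of the key lemma for $\mathbf{K45}$ and $\mathbf{KD45}$, which you yourself flag as the main obstacle, is not proved, and the argument you sketch cannot prove it. Knowing that the tree path between $\sigma$ and $\tau$ has edges whose labels contain $G$ does not yield $\mathrm{last}(\sigma)\,R_G\,\mathrm{last}(\tau)$. Take $\mathsf{Ag}=\{a\}$, $W=\{x,y\}$ and $R_{\{a\}}=\{(x,y),(y,y)\}$, which is transitive and Euclidean. The nodes $\tau=\langle x\rangle$ and $\sigma=\langle x,\{a\},y\rangle$ are joined by a single $\{a\}$-edge, yet $y\,R_{\{a\}}\,x$ fails. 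Transitivity and Euclideanness cannot compose an upward step. From $R^\ast_a$ you must keep the information that $\tau$ lies strictly below a common witness. Your verbal ``block'' description of the closure is too imprecise to supply this. The proposed bisimilar copy with duplicated clusters does not address it, and is not needed, because the unraveling already separates copies of reflexive points.

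\textbf{The missing idea.} You need an explicit description of the transitive--Euclidean closure on the forest: $\sigma\,R^\ast_a\,\tau$ iff there is a node $u$ with $u\,S_a^\ast\,\sigma$ and $u\,S_a^{+}\,\tau$. This relation contains $S_a$, is transitive and Euclidean, and is included in every transitive Euclidean relation containing $S_a$. It gives the key lemma as follows.
\begin{enumerate}
\item Suppose $\sigma\,R^\ast_a\,\tau$ for every $a\in G$, with witnesses $u_a$. Each $u_a$ is an ancestor of (or equal to) the meet of $\sigma$ and $\tau$, so all the $u_a$ lie on one branch. Let $u$ be the deepest of them.
\item Every edge from $u$ down to $\sigma$ or to $\tau$ lies below each $u_a$. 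Hence its label contains every $a\in G$. Also $\tau\neq u$.
\item In $M$, monotonicity and transitivity give $\mathrm{last}(u)\,R_G\,\mathrm{last}(\tau)$, and also $\mathrm{last}(u)\,R_G\,\mathrm{last}(\sigma)$ if $u\neq\sigma$.
\item Euclideanness then gives $\mathrm{last}(\sigma)\,R_G\,\mathrm{last}(\tau)$; if $u=\sigma$ this is immediate.
\end{enumerate}
This is exactly the paper's relation $\mathcal{R}^{\Lambda}_G$, defined via $\overrightarrow{u}\,\mathcal{R}^{\ast}_G\,\overrightarrow{w}$ and $\overrightarrow{u}\,\mathcal{R}^{+}_G\,\overrightarrow{v}$, together with its identity $\mathcal{R}^{\Lambda}_G=\bigcap_{a\in G}\mathcal{R}^{\Lambda}_{\{a\}}$. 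With this in place, the rest of your argument goes through.
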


\begin{proof}
(Outline) The latter part follows immediately from the former part by Theorem \ref{thm:completeness-A}. In what follows, we outline the proof of the former part. The following definitions and arguments are adapted from~\cite{Fagin_1992_What,Agotnes_2021_group,Murai_2022_intuitionistic}. 
Suppose that a formula $\varphi$ is valid in the class $\mathbb{F}_{\Lambda}$. To show that $\varphi$ is valid in the pseudo-frame class $\mathbb{P}_{\Lambda}$, let us fix an arbitrary pseudo-frame $F = (W,(R_{G})_{G \in \mathsf{Grp}})$ in $\mathbb{P}_{\Lambda}$. Our goal is to establish the validity of $\varphi$ on $F$. 
Let us fix an arbitrary valuation $V$ and a state $w \in W$. 
We will show that $M,w\models \varphi$, where $M := (F,V)$. Without loss of generality, we assume that $M$ is point-generated by $w$ with respect to the binary relations $(R_{G})_{G \in \mathsf{Grp}}$. 
We define the tree unraveling $\mathtt{Tree}^{\Lambda}(M,w)$ of $M$ around $w$ as follows:\begin{itemize}\item $\mathtt{Finpath}(M,w)$ is defined as follows:
\[
\begin{aligned}
\mathtt{Finpath}(M,w) :=
\{\, &\langle w_{0},G_{1},w_{1},\dots,G_{m},w_{m}\rangle \mid {}\\
& m \ge 0,\ G_i\in\ms{Grp},\ w_{0}=w, w_{i-1}R_{G_i}w_i \text{ for all } 1\le i\le m
\,\}.
\end{aligned}
\]
We refer to an element of $\mathtt{Finpath}(M,w)$ as a ``path (from state $w$)'' and denote it by $\overrightarrow{v},\overrightarrow{u}$, etc.\item We define a binary relation $\mathcal{R}_{G}$ on $\mathtt{Finpath}(M,w)$ as follows:$\overrightarrow{w}\mathcal{R}_{G}\overrightarrow{v}$ iff  $\overrightarrow{v} = \overrightarrow{w} {}^{\frown} (H, \mathtt{tail}(\overrightarrow{w}))$ for some $H \supseteq G$, where $\mathtt{tail}(\overrightarrow{w})$ denotes the last element of $\overrightarrow{w}$, and ${}^{\frown}$ denotes concatenation. We use $\mathcal{R}_{G}^{+}$ (or $\mathcal{R}_{G}^{\ast}$) to mean the transitive closure (or the reflexive and transitive closure) of $\mathcal{R}_{G}$.
\item Depending on our choice of $\Lambda$, we define $\mathcal{R}^{\Lambda}_{G}$ as follows:
\begin{itemize}
\item For $\Lambda \in \setof{\mb{K45}_{D\A},\mb{KD45}_{D\A}}$, define $\overrightarrow{w}\mathcal{R}^{\Lambda}_{G}\overrightarrow{v}$ iff there exists $\overrightarrow{u}\in\mathtt{Finpath}(M,w)$ such that $\overrightarrow{u} \mathcal{R}_{G}^{\ast} \overrightarrow{w}$ and  $\overrightarrow{u} \mathcal{R}_{G}^{+} \overrightarrow{v}$;
\item For $\Lambda = \mb{S5}_{D\A}$, define $\overrightarrow{w}\mathcal{R}^{\Lambda}_{G}\overrightarrow{v}$ iff there exists $\overrightarrow{u}\in\mathtt{Finpath}(M,w)$ such that $\overrightarrow{u} \mathcal{R}_{G}^{\ast} \overrightarrow{w}$ and $\overrightarrow{u} \mathcal{R}_{G}^{\ast} \overrightarrow{v}$.
\end{itemize}
\item $\mathcal{V}(p)=\inset{\overrightarrow{v}\in \mathtt{Finpath}(M,w)}{\mathtt{tail}(\overrightarrow{v})\in V(p)}$ for every $p\in\ms{Prop}$.
\item Define $\mathtt{Tree}^{\Lambda}(M,w) := (\mathtt{Finpath}(M,w), (\mathcal{R}^{\Lambda}_{G})_{G \in \mathsf{Grp}}, \mathcal{V})$.
\end{itemize}
It is straightforward to verify that the frame part $(\mathtt{Finpath}(M,w), (\mathcal{R}^{\Lambda}_{G})_{G \in \mathsf{Grp}})$ of $\mathtt{Tree}^{\Lambda}(M,w)$ still belongs to $\mathbb{P}_{\Lambda}$. Moreover, 
we note that $\mathcal{R}^{\Lambda}_{G}$ $=$ $\bigcap_{a \in G} \mathcal{R}^{\Lambda}_{\setof{a}}$ holds. 
We can also establish that the mapping $\mathtt{tail}: \mathtt{Finpath}(M,w) \to W$, which sends each $\overrightarrow{v}$ to its last element $\mathtt{tail}(\overrightarrow{v})$, is a surjective bounded morphism between pseudo-models. Here, surjectivity is necessary for preserving the truth of formulas of the form $\A \psi$ (along the bounded morphism). Then we obtain $\mathtt{Tree}^{\Lambda}(M,w),\langle w \rangle  \models \varphi$ iff $M, \mathtt{tail}(\langle w\rangle) \models \varphi$, where $\mathtt{tail}(\langle w\rangle) = w$. 
Since $\mathcal{R}^{\Lambda}_{G}$ $=$ $\bigcap_{a \in G} \mathcal{R}^{\Lambda}_{\setof{a}}$ holds, 
$(\mathtt{Finpath}(M,w), (\mathcal{R}^{\Lambda}_{\setof{a}})_{a \in \mathsf{Ag}}) \in \mathbb{F}_{\Lambda}$. 
Since $\varphi$ is valid in $\mathbb{F}_{\Lambda}$, we obtain 
$(\mathtt{Finpath}(M,w), (\mathcal{R}^{\Lambda}_{\setof{a}})_{a \in \mathsf{Ag}}, \mathcal{V}), \langle w \rangle \models \varphi$, hence 
$\mathtt{Tree}^{\Lambda}(M,w),\langle w \rangle  \models \varphi$ in terms of pseudo models. Therefore, it follows from the above equivalence that $M,w\models \varphi$, as desired. 
\end{proof}

\section{Conclusion}
\label{sec:conclusion}
There are several directions for further research. First, there have been attempts to combine epistemic logic with coalition logic $\mb{CL}$, which is a logic that concerns coalitional power (see~\cite{Agotnes_2016_coalition}). It would be interesting to develop sequent calculi for $\mb{CL}$ with distributed knowledge and to investigate whether analytic cut properties can be established. Second, to the best of the authors’ knowledge, a syntactic proof of the analytic cut property in the style of~\cite{Takano_1992_subformula} is {\em not} available for our sequent calculi. The main obstacle appears to lie in the parameterization of the operator by a group $G$. However, a Gentzen-style sequent calculus for $\mb{S5}$ was introduced in~\cite{Giedra_2010_cuta}, in which the operator is not parameterized by groups. It would therefore be interesting to investigate whether Takano’s syntactic strategy can be adapted to systems combining this calculus with $\mb{CL}$. Third, Hilbert systems and cut-free sequent calculi for intuitionistic $\mb{K}$, $\mb{KT}$, $\mb{KD}$, $\mb{K4}$, $\mb{K4D}$, and $\mb{S4}$ with distributed knowledge were introduced in~\cite{Murai_2022_intuitionistic}. Building on this line of work,  public announcement expansions of intuitionistic $\mb{K}$, $\mb{KT}$, $\mb{K4}$, and $\mb{S4}$ with distributed knowledge were studied in~\cite{Murai_2024_intuitionistic}. It would therefore be worthwhile to investigate intuitionistic versions of $\mb{K45}$, $\mb{KD45}$, and $\mb{S5}$ with distributed knowledge, and to examine whether these intuitionistic systems can be expanded with public announcement logic via the approach used in~\cite{Liu_2023_nonlabelled}, as well as whether the main proof-theoretic results are preserved under such extensions. Finally, an axiomatization for comparative knowledge was introduced in~\cite{Baltag__learning}, which allows one to express that one group's (distributed) knowledge includes {\em all} of another group's (distributed) knowledge. It would therefore be interesting to develop sequent calculi for this expansion of distributed knowledge and examine whether the main proof-theoretic results established in this paper still hold.

\section*{Acknowledgements}
The authors would like to thank the three reviewers for their constructive comments and suggestions, which have helped improve the quality of the manuscript. The work of the second author is supported by the China Scholarship Council (CSC). The work of the third author was partially supported by JSPS KAKENHI Grant-in-Aid for Scientific Research (B) Grant Number JP22H00597 and Grant-in-Aid for Scientific Research (C) JP25K03537.

\bibliographystyle{eptcs}
\bibliography{reference}

\end{document}